\title{Stochastic Games with Disjunctions of Multiple~Objectives\thanks{This work was supported by the DFG RTG 2236 ``UnRAVeL'' (Winkler) and DFG projects 383882557 ``SUV'' and 427755713 ``GOPro'' (Weininger).}}
\author{Tobias Winkler
	\institute{RWTH Aachen University, Germany}
	\email{tobias.winkler@cs.rwth-aachen.de}
	%https://orcid.org/0000-0003-1084-6408	
	\and
	Maximilian Weininger
	\institute{Technical University of Munich, Germany}
	\email{maxi.weininger@tum.de}
	%\orcid{https://orcid.org/0000-0002-0163-2152}
}
\tikzstyle{state}+=[minimum size = 6mm, inner sep=0,outer sep=1]
\tikzset{->,>=stealth'}
\tikzset{
    diagonal fill/.style 2 args={fill=#2, path picture={
            \fill[#1, sharp corners] (path picture bounding box.south west) -|
            (path picture bounding box.north east) -- cycle;}},
    reversed diagonal fill/.style 2 args={fill=#2, path picture={
            \fill[#1, sharp corners] (path picture bounding box.north west) |- 
            (path picture bounding box.south east) -- cycle;}}
}
\tikzstyle{max}=[rectangle,draw=black,thick,inner sep=.8mm,minimum size=6mm]
\tikzstyle{target}=[accepting] % from automata library
\tikzstyle{min}=[diamond,draw=black,thick,inner sep=0.5mm,minimum size=8mm]
\tikzstyle{prob}=[circle,draw=black,thick,inner sep=.8mm,minimum size=6mm] 
\tikzstyle{mcstate}=[circle,draw=black,thick,inner sep=1mm]
\tikzstyle{state}=[circle,draw=black,thick,inner sep=1mm,minimum size=8mm] % for automata
\tikzstyle{trans}=[]
\tikzstyle{col1}=[fill=yellow!35]
\tikzstyle{col2}=[fill=green!20]
\tikzstyle{col3}=[fill=blue!15]
\tikzstyle{both cols}=[diagonal fill={green!20}{yellow!35}]
\tikzstyle{both cols2}=[diagonal fill={green!20}{blue!15}]
\tikzstyle{smaller}=[scale=0.75]
\tikzset{every loop/.style={min distance=5mm,looseness=2.5}}
\tikzstyle{pcurve}=[-, thick,line cap= round]
\newcommand{\drawaxes}{
    \draw[black!55,->] (0,0) -- (0,1.4);
    \draw[black!55,->] (0,0) -- (1.4,0);
}
\newcommand{\Naturals}{\mathbb{N}}
\newcommand{\Reals}{\mathbb{R}}
\DeclareDocumentCommand{\post}{D<>{} O{} D(){}}{\mathsf{Post}_{#1}^{#2}\ifthenelse{\isempty{#3}}{}{(#3)}}
\newcommand{\eqdef}{\vcentcolon=}
\newcommand{\defeq}{=\vcentcolon}
\newcommand{\NP}{\mathsf{NP}}
\renewcommand{\P}{\mathsf{P}}
\newcommand{\coNP}{\mathsf{coNP}}
\newcommand{\EXPTIME}{\mathsf{EXPTIME}}
\newcommand{\NEXP}{\mathsf{NEXPTIME}}
\newcommand{\PSPACE}{\mathsf{PSPACE}}
\tikzstyle{myArrowStyle} = [shorten >=1pt,->,>=stealth']
\newcommand{\conv}{conv}
\newcommand{\dwc}{\mathsf{dwc}}
\newcommand{\genobj}{\star}
\newcommand{\safe}{\square}
\newcommand{\reach}{\lozenge}
\newcommand{\straa}{\sigma}
\newcommand{\strab}{\tau}
\newcommand{\probability}{\mathbb{P}}
\newcommand{\expectation}{\mathbb{E}}
\newcommand{\val}{\mathcal{A}}
\newcommand{\Achiever}{Eve}
\newcommand{\Spoiler}{Adam}
\newcommand{\query}{\varphi}
\newcommand{\targets}{\mathcal{T}}
\newcommand{\game}{\mathcal{G}}
\newcommand{\maxstrat}{\straa}
\newcommand{\minstrat}{\strab}
\newcommand{\sinit}{{s_0}}
\newcommand{\choices}{\mathsf{Act}}
\newcommand{\Dist}{\mathsf{Dist}}
\newcommand{\supp}{\mathsf{supp}}
\renewcommand{\path}{\pi}
\newcommand{\achsymb}{\mathsf{E}}
\newcommand{\spoisymb}{\mathsf{A}}
\newcommand{\uviop}{\Phi}
\newcommand{\pcurves}{\mathfrak{C}}
\newcommand{\pset}[1]{\mathcal{P}(#1)}
\newcommand{\uvidomelem}{\mathcal{X}}
\renewcommand{\conv}{\mathsf{conv}}
\newcommand{\Smax}{S_{\achsymb}}
\newcommand{\Smin}{S_{\spoisymb}}
\newcommand{\Sprob}{S_{\mathsf{P}}}
\newcommand{\inpw}{\ \dot{\in}\ }
\newcommand{\upval}{\val^{\forall \exists}}
\newcommand{\unf}{\mathsf{Unf}}
\newtheorem{definition}{Definition}
\newtheorem{example}{Example}
\newtheorem*{claim*}{Claim}
\newenvironment{claimproof}{
	
	\textit{$\triangleright$ Proof:}}{\hfill$\triangleleft$

}
\newtheorem{lemma}{Lemma}
\newtheorem{observation}{Observation}
\newtheorem{theorem}{Theorem}
\newtheorem{corollary}{Corollary}
\begin{document}

\maketitle

\begin{abstract}
Stochastic games combine controllable and adversarial non-determinism with stochastic behavior and are a common tool in control, verification and synthesis of reactive systems facing uncertainty.
Multi-objective stochastic games are natural in situations where several---possibly conflicting---performance criteria like time and energy consumption are relevant.
Such \emph{conjunctive} combinations are the most studied multi-objective setting in the literature.
In this paper, we consider the dual \emph{disjunctive} problem.
More concretely, we study turn-based stochastic two-player games on graphs where the winning condition is to guarantee at least one reachability or safety objective from a given set of alternatives.
We present a fine-grained overview of strategy and computational complexity of such \emph{disjunctive queries} (DQs) and provide new lower and upper bounds for several variants of the problem, significantly extending previous works.
We also propose a novel value iteration-style algorithm for approximating the set of Pareto optimal thresholds for a given DQ.
\end{abstract}

\section{Introduction}
\label{sec:intro}

Stochastic games (SG), e.g.\ \cite{Condon90,shapley1953}, combine controllable and adversarial non-determinism with stochastic behavior.
In their turn-based two-player version, SGs are played on graphs where the vertices are called states, and every state either belongs to one of the two players \Achiever\ and \Spoiler, or is controlled by a probabilistic environment.
In each round, the player in control of the state chooses an action---an edge of the graph---and the game transitions to the successor state.
In probabilistic states, the successor is sampled according to a fixed observable distribution over the outgoing edges.
\emph{Simple} SGs~\cite{Condon90} have just a single \emph{reachability} objective.
A key question is whether \Achiever\ can control her states such that a target is reached with at least a certain probability, no matter the behavior of the opponent \Spoiler. 
Dually, \Spoiler\ has a \emph{safety objective} in this setting:
He should maximize the chance of staying in the---from his point of view---safe region, i.e., avoiding Eve's target states.

\emph{Multi-objective stochastic games}~\cite{cfk13} extend this by allowing Boolean combinations of several different probability or expectation thresholds on various objectives.
Such games have been used to synthesize optimal controllers in application scenarios where the system at hand is exposed to an environment with both stochastic and non-deterministic aspects~\cite{concur14basset,DBLP:conf/qest/ChenKSW13,FengWHT15}.
A natural subclass of multi-objectives are \emph{disjunctive queries} (DQ)~\cite{cfk13,FH10} where the player has to satisfy at least one alternative from a given set of options.

In this paper, we study DQs with both reachability and safety options.
More specifically, given a game and a finite set of reachability and safety objectives, each equipped with a desired threshold probability, we ask whether Eve can satisfy at least one option with probability at least the respective threshold.
Our motivation for studying DQs is twofold.
(1) DQs are interesting in their own right as they allow for a fine-grained specification of alternatives over desirable outcomes of a controlled stochastic system.
(2) DQs are equivalent to the more widely used \emph{conjunctive} queries (CQs) under an alternative semantics, namely if the opponent \Spoiler\ has to \emph{reveal} his strategy to \Achiever\ before the game starts.
This amounts to changing the quantification order over strategies from $\exists\forall$---which is the standard order---to $\forall\exists$.
In general, this makes a difference since multi-objective SGs are not \emph{determined}~\cite{cfk13}.
This optimistic ``\emph{asserted-exposure}'' ($\forall\exists$) semantics is interesting in situations where the non-deterministic system state can actually be observed at any given point in time.
For instance, in the smart heating case study from~\cite{larsen-smartheating}, the position of the doors in a house is non-deterministically controlled by \Spoiler\ and not directly observable.
However, if door sensors were to be installed, the $\forall\exists$-semantics would be more adequate since the door positions are now observable (but still uncontrollable).

\begin{figure}[t]
    \centering
    \begin{tikzpicture}[node distance=3mm and 8mm, every node/.style={scale=1}, myArrowStyle]
    \node[prob] (prob) {$s_0$};
    \node[max,above right=of prob,xshift=5mm] (s1) {$s_1$};
    \node[min,below right=of prob,xshift=5mm] (s2) {$s_2$};
    \node[prob,target,col1,right=12mm of s1] (T1) {\small $T_1$};
    \node[prob,target,col2,right=12mm of s2] (T2){\small $T_2$};
    
    \draw[trans] (prob) -- node[above left] {$\nicefrac 1 2$} (s1);
    \draw[trans] (prob) -- node[below left] {$\nicefrac 1 2$}(s2);
    \draw[trans] (s1) -- (T1);
    \draw[trans] (s1) -- (T2);
    \draw[trans] (s2) -- (T1);
    \draw[trans] (s2) -- (T2);
    %		\draw[trans] (T1) edge[loop right] (T1);
    %		\draw[trans] (T2) edge[loop right] (T2);
    
    \node[below = -1mm of s2] {\tikz[scale=.5]{
            \filldraw[black!15] (0,1) -- (1,0) -- (0,0) -- cycle;
            \draw[pcurve] (0,1) -- (1,0);
            \drawaxes
    }};
    
    \node[ above= 1mm of s1] {\tikz[scale=.5]{
            \filldraw[black!15] (0,1) -- (1,1) -- (1,0) -- (0,0) -- cycle;
            \draw[pcurve] (0,1) -- (1,1) -- (1,0);
            \drawaxes
    }};
    
    \node[left= 0mm of prob] {\tikz[scale=.5]{
            \filldraw[black!15] (0,1) -- (.5,1) -- (.5,.5) -- (1,.5) -- (1,0) -- (0,0) --cycle;
            \draw[pcurve] (0,1) -- (.5,1) -- (.5,.5) -- (1,.5) -- (1,0);
            \drawaxes
    }};

    \node[ right= 2mm of T1, yshift=1mm] {\tikz[scale=.5]{
            \filldraw[black!15] (0,1) -- (1,1) -- (1,0) -- (0,0) -- cycle;
            \draw[pcurve] (0,1) -- (1,1) -- (1,0);
            \drawaxes
    }};

    \node[ right= 2mm of T2, yshift=-1mm] {\tikz[scale=.5]{
            \filldraw[black!15] (0,1) -- (1,1) -- (1,0) -- (0,0) -- cycle;
            \draw[pcurve] (0,1) -- (1,1) -- (1,0);
            \drawaxes
    }};
    
    \end{tikzpicture}
    \caption{
        Example SG with a disjunctive query $\probability(\reach T_1) \geq x \vee \probability(\reach T_2) \geq y$.
        The Pareto sets, i.e., the feasible threshold vectors $(x,y)$ are depicted next to the respective states.    
    }
    \label{fig:introExample}
\end{figure}

The technical intricacies of DQs are best illustrated by means of an example:
Consider the SG in Figure~\ref{fig:introExample}.
It comprises 5 states: the probabilistic state $s_0$, states $s_1$ and $s_2$ controlled by Eve and Adam, respectively, and the two targets labeled $T_1$ and $T_2$.
Suppose that Eve's objective is the DQ ``reach $T_1$ with probability at least $x$ \emph{or} $T_2$ with probability at least $y$'', in symbols $\probability(\reach T_1) \geq x \vee \probability(\reach T_2) \geq y$.
The coordinate systems next to the states show their \emph{Pareto sets}, i.e., the set of threshold vectors $(x,y)$ for which Eve can win the DQ, assuming the game starts in that state.
For $s_1$, the Pareto set is the whole box $[0,1]^2$ since Eve can reach either of the targets surely from $s_1$ by picking the respective action.
For $s_2$, the Pareto set contains all convex combinations of $(0,1)$ and $(1,0)$ and all point-wise smaller vectors, forming a triangle.
This is because Adam has to distribute the whole probability mass \emph{somewhere}; for any threshold vector $(x, y)$, $x + y \leq 1$, he cannot avoid satisfying it.
However, if $x + y > 1$, Adam can prevent Eve from winning in state $s_2$:
For example, if $(x, y) = (0.6, 0.6)$, Adam can randomize equally between both actions, and no target is reached with at least $0.6$.
This also shows that solving a DQ is \emph{not} equivalent to solving each objective separately:
From $s_2$, Eve can neither guarantee that $T_1$ nor $T_2$ is reached with positive probability; however, she can guarantee the thresholds $(0.5, 0.5)$ in the DQ.
Finally, Eve can achieve each threshold vector $(x,y)$ with $x \leq 0.5$ or $y \leq 0.5$ from $s_0$ because $s_1$ is reached with probability $0.5$ where she can put all probability mass on one of the targets.
On the other hand, Eve cannot ensure any vector $(x,y)$ with $x > 0.5 \wedge y > 0.5$ because once she has fixed a strategy, Adam can mirror it so that both targets are reached with \emph{exactly} $0.5$.

The example in Figure~\ref{fig:introExample} demonstrates two important properties of disjunctive queries in SGs:
Firstly, Pareto frontiers are not necessarily convex, as in state $s_0$.
This is in contrast to \emph{conjunctive queries} where the set of achievable probability thresholds is always convex~\cite{cfk13,EKVY08}.
Secondly, as mentioned above, SGs with multiple objectives are in general not \emph{determined}~\cite{cfk13}, i.e., it is relevant which player fixes their strategy first, thereby \emph{revealing} it to the other player before the game starts.
In fact, in the example above, switching the quantification order allows Eve to take advantage by reacting to the strategy of Adam.
For instance, she could then ensure that at least one of the targets is reached with probability at least $0.75$.

\paragraph*{Contributions and overview}
In summary, this paper makes the following contributions:
\begin{itemize}
    \item A comprehensive overview of strategy (Section~\ref{sec:strat_comp}, Table~\ref{tab:scomp}) and computational complexity (Section~\ref{sec:ccomp}, Table~\ref{tab:ccomp}) of disjunctive reachability-safety queries in stochastic games, significantly extending previous results from the literature~\cite{cfk13,FH10,RRS17}.
    In particular, motivated by the observation that randomized strategies are undesirable or meaningless for certain applications (e.g., medical or product design~\cite{quatmann}), we study the setting of DQs under \emph{deterministic} strategies for both players.
    Notably, this lead to rather high complexities:
    Qualitative queries are $\PSPACE$-hard and quantitative reachability is even undecidable.
    
   	\item A value iteration-style algorithm in the vein of~\cite{cfk13} for approximating the Pareto sets of DQs or CQs under the alternative asserted-exposure semantics (Section~\ref{sec:alg_vi}).
\end{itemize}

\paragraph*{Related work}
SGs were introduced by Shapley~\cite{shapley1953} in 1953.
Simple SGs---the turn-based variant with a reachability objective---are one of the intriguing problems in $\NP \cap \coNP$ but not known to be in $\P$~\cite{CondonCompl}.
See \cite{Condon90,gandalf} for an overview of solution algorithms.
Turn-based SGs were studied with a variety of other objectives, e.g. \cite{krishSurveyLimsupLiminf,krishVI,krishSurveyOmega}.
Other types of SGs include concurrent games~\cite{krishConcurrent,prismConcurrent}, limited information games~\cite{cav-AKW19,krishSurveyPartial}, and bidding games~\cite{surveyBidding}.

Stochastic systems with multiple objectives have been extensively studied for more than a decade.
Markov decision processes, SGs with a single player, were investigated with multiple reachability or LTL objectives~\cite{EKVY08} as well as multiple discounted sum~\cite{DBLP:conf/lpar/ChatterjeeFW13}, total reward~\cite{DBLP:conf/tacas/ForejtKNPQ11} or mean payoff objectives~\cite{DBLP:journals/lmcs/ChatterjeeKK17}.
Further, the question of percentile queries was addressed in~\cite{filar1995percentile,RRS17}, and combinations of probabilistic and non-probabilistic objectives in~\cite{bertonRaskin}.
Non-standard multi-objective queries were developed together with domain experts in~\cite{DBLP:conf/fase/BaierDKDKMW14}.

For SGs with multiple objectives, many decidability questions are still open.
For conjunctive reachability, it is only known that the Pareto set can be approximated~\cite{lics-mosg} with guaranteed precision, even in non-stopping games.
For total reward, the problem is proven decidable only for stopping games with two-dimensional queries~\cite{BF16} but it can be approximated in higher dimensional stopping SGs~\cite{cfk13}. If only deterministic strategies are allowed, the exact problem is undecidable~\cite{cfk13},
and so are generalized mean-payoff objectives in SGs~\cite{DBLP:conf/fossacs/Velner15}.
However, keeping mean-payoff above a certain threshold with some probability is $\coNP$-complete \cite{DBLP:conf/lics/Chatterjee016}. 
Further, lexicographic preferences over multiple reachability or safety objectives can be reduced to single objectives~\cite{cav-lex}.
The tool PRISM-games~\cite{prism3} implements compositional approaches to verification and strategy synthesis of several multi-objective problems~\cite{concur14basset,prism-mosg}.

To the best of our knowledge, disjunctive queries were so far only considered as a special case of more general Boolean combinations for expected rewards~\cite{cfk13}, mean-payoff SGs~\cite{DBLP:conf/fossacs/Velner15}, and in deterministic generalized reachability games on graphs~\cite{FH10}.

\iftoggle{arxiv}{}{
\paragraph*{Full version}
A full version of this paper including detailed proofs is available~\cite{arxiv}.
}

\section{Preliminaries}
\label{sec:prelims}

\subparagraph{General definitions.}
For sets $A$ and $B$, the set of functions $A \to B$ is written $B^A$.
The set of finite words over a non-empty set $A$ is written $A^*$.
For countable sets $A$ we let $\Dist(A) \eqdef \{P \in [0,1]^A \mid \sum_{a \in A} P(a) = 1\}$ be the set of all \emph{probability distributions} on $A$.
The \emph{support} of $d \in \Dist(A)$ is defined as $\supp(d) = \{a \in A \mid d(a) >0\}$.
The $i$-th component of a vector $\vec{x} \in [0,1]^n$ is denoted $x_i$.
We compare vectors $\vec{x},\vec{y} \in [0,1]^n$ component-wise, i.e., $\vec{x} \leq \vec{y}$ iff $x_i \leq y_i$ for all $i = 1,\ldots,n$.
A set $X \subseteq \Reals^n$ is \emph{convex} if for all $\vec{x}, \vec{y} \in X$ and $p \in [0,1]$ it holds that $p\vec{x} + (1{-}p)\vec{y} \in X$.
The \emph{convex hull} $\conv(X)$ of $X$ is the smallest convex superset of $X$.
Given sets $X,Y \subseteq \Reals^n$ and a real number $p \in [0,1]$, we define the $p$\emph{-convex combination} $pX + (1{-}p)Y \eqdef \{p\vec{x} + (1{-}p)\vec{y} \mid \vec{x} \in X, \vec{y} \in Y\}$.
The \emph{downward-closure} of $X \subseteq [0,1]^n$ is defined as $\dwc(X) \eqdef \{\vec{y}\in [0,1]^n \mid \exists \vec{x} \in X \colon \vec{y} \leq \vec{x}\}$.
$X$ is called \emph{downward-closed} if $X = \dwc(X)$.
A \emph{closed half-space} is a set $\{\vec{x} \in \Reals^n \mid  \vec{n}\cdot \vec{x} \geq d\}$ where $\vec{n} \in \Reals^n\setminus\{\vec{0}\}$ and $d\in \Reals$.
A \emph{polyhedron} is the intersection of finitely many closed half-spaces.
Polyhedra are convex.

\subparagraph{Stochastic games and strategies.}
Intuitively, the games considered in this paper are played by moving a pebble along the edges (called \emph{transitions} from now on) of a finite directed graph.
The vertices (subsequently called \emph{states}) of this graph are partitioned into three classes which determine the states controlled by \Achiever, \Spoiler, and the probabilistic environment, respectively:

\begin{definition}[SG]
	A \emph{stochastic game (SG)} is a tuple $\game = (\Smax,\Smin,\Sprob,\sinit,P,\choices)$, where $\Smax \uplus \Smin \uplus \Sprob \defeq S$ are finite disjoint sets of \emph{states} controlled either by \Achiever\ ($\Smax$), \Spoiler\ ($\Smin$), or the probabilistic environment ($\Sprob$). 
    The game starts in the \emph{initial state} $\sinit \in S$.
    Each $s \in \Smax \cup \Smin$ has a non-empty set $\choices(s) \subseteq S$ of \emph{actions} available to \Achiever\ (\Spoiler, resp.). For all $s \in \Sprob$, $P \colon \Sprob \to \Dist(S)$ is a probability distribution over the successors of $s$.
\end{definition}
For $s \in \Sprob$ and $t \in S$ we write $P(s,t)$ rather than $P(s)(t)$.
A state $s \in S$ is called \emph{sink} if either $s \in \Sprob$ and $P(s,s) = 1$ or $s \in S \setminus \Sprob$ and $\choices(s) = \{s\}$.
$P$ and $\choices$ together induce a directed graph on $S$.
We often sketch this \emph{game graph} in our figures (e.g.\ Figure~\ref{fig:introExample}), drawing \Achiever's, \Spoiler's and the probabilistic states with boxes, diamonds and circles, respectively; and we omit the self-loops on sinks to ease the presentation.
%
%\todo{LICS R2: we often say that all targets/unsafe sets occur only in sinks, it was not clear why this is so important M: Added an idea in a comment here}
% Throughout the paper, we often comment on the setting where all target or unsafe states are sinks. This restriction simplifies the problem, because a play cannot continue after reaching a target/unsafe state, and thus it is not necessary to remember the set of satisfied/violated objectives. 
% Ein Kommentar dieser Art könnte das Problem lösen. Ein guter Ort wäre kurz vor "Pareto sets and frontiers", weil da alle Namen eingeführt werden, die wir unterscheiden (reach, safe, quant, qual...).
% Oder bei "Goal-unfolding", dann in Richtung von
% Note that if all target and unsafe states are sinks, the goal unfolding is equal to the original game. We comment on this special setting, where the complexity often is lower.
For $k \geq 0$, we let $\game^{\leq k}$ be the restriction of $\game$ to $k$-steps, that is, $\game^{\leq k}$ is obtained from $\game$ by counting the transitions from $\sinit$ taken so far and entering an error sink once their number exceeds $k$.
A \emph{Markov decision process} (MDP) is the 1-player version of an SG, i.e., either $\Smax = \emptyset$ or $\Smin = \emptyset$. A \emph{Markov chain} (MC) is a 0-player SG, i.e., $S = \Sprob$. For technical reasons, we allow MDPs and MCs with countably infinite state spaces.
SGs, on the other hand, are always \emph{finite} in this paper.

\emph{Strategies} define the semantics of SGs.
A (general) strategy for \Achiever\ is a function $\maxstrat \colon S^*\Smax \to \Dist(S)$ such that $\supp(\maxstrat(\path s)) \subseteq \choices(s)$ for all $\path s \in S^*\Smax$.
A strategy $\sigma$ is \emph{deterministic} if $\maxstrat(\path s)$ is a point-distribution for all $\path s \in S^* \Smax$, i.e., if it is not randomized.

%Automata are a common tool for implementing strategies by finite means (if possible).
%In order to implement strategies---by finite means, if possible---we employ deterministic automata.
%
To describe strategies by finite means (if possible), we use \emph{strategy automata}.
Formally, a strategy automaton for \Achiever\ is a structure $\mathcal{M} = (M, \mu, \nu,m_0)$
with $M$ a countable set of memory elements,
$\mu \colon S \times M \to M$ a \emph{memory update} function,
$\nu \colon \Smax \times M \to \Dist(S)$ a \emph{next move} function,
and $m_0 \in M$ an initial memory state.
Given a strategy automaton $\mathcal{M}$, the \emph{induced MDP} is the game
\[
    \game^\mathcal{M} ~\eqdef~ (\, \emptyset,\, \Smin \times M,\, (\Sprob \cup \Smax) \times M,\, (s_0, m_0),\, P^\mathcal{M},\, \choices^\mathcal{M} \,) ~,
\]
where the transition probability function $P^\mathcal{M}$ is defined as follows:
Let $m,m' \in M$ be arbitrary and let $s \in S$.
Then, if $s \in \Sprob$, we let $P^\mathcal{M}((s,m),(s',m')) \eqdef P(s,s')$ if $\mu(s',m) = m'$;
if $s \in \Smax$, then $P^\mathcal{M}((s,m),(s',m')) \eqdef \nu((s,m))(s')$ if $\mu(s',m) = m'$;
and $P^\mathcal{M}((s,m),(s',m')) = 0$ in all other cases.
Moreover, $\choices^\mathcal{M}((s,m)) = \{(s',m') \mid s' \in \choices(s) \wedge \mu(s',m) = m'\}$ for all $(s,m) \in \Smin \times M$.
$\mathcal{M}$ is called a \emph{stochastic-update} strategy if the memory update $\mu$ may additionally randomize over $M$ (see~\cite{DBLP:journals/corr/abs-1104-3489} for details).
From the definition it is clear that $\mathcal{M}$ \emph{realizes} a strategy in the general form $\maxstrat \colon S^*\Smax \to \Dist(S)$.
We define the \emph{memory size} of $\maxstrat$ as the smallest $k \in \Naturals \cup \{\infty\}$ such that there exists a strategy automaton $\mathcal{M} = (M, \mu, \nu,m_0)$ with $|M|=k$ that realizes $\sigma$.
If $k = \infty$, then $\maxstrat$ is \emph{infinite-memory}, and otherwise \emph{finite-memory}.
If $k=1$, then $\maxstrat$ is called \emph{memoryless}.
An \emph{MD} strategy is both memoryless and deterministic.
The above definitions are analogous for the other player \Spoiler, interchanging $\Smax$ and $\Smin$.

Throughout the paper, we consistently denote \Achiever's strategies with $\sigma$ and \Spoiler's strategies with $\minstrat$.
We usually identify strategies with their realizing automata.
Given a strategy $\maxstrat$, a \emph{counter-strategy} $\minstrat$ is a strategy in the induced MDP $\game^\maxstrat$ and may depend on $\maxstrat$.

\subparagraph{Reachability-safety queries and determinacy.}
Given an MC $(S,s_0,P)$ and a set $T \subseteq S$, the \emph{reachability probability} of $T$ is $\probability(\reach T) \eqdef \sum_{\path \in Paths(T)} Pr(\path)$ where $Paths(T)$ is the set of finite paths $\path$ of the form $\path = \sinit s_1 \ldots s_k$ with $k \geq 0$, $s_k \in T$ and $s_i \notin T$ for all $i=0,\ldots,k-1$;
and  $Pr(\path) \eqdef \prod_{i=0}^{n-1}P(s_i,s_{i+1})$.
Dually, we define $\probability(\safe T) \eqdef 1 - \probability(\reach \overline{T})$, where $\overline{T} \eqdef S \setminus T$.
Intuitively, $\probability(\reach T)$ is the probability to eventually reach a \emph{target} state in $T$ and $\probability(\safe T)$ is the probability to stay forever within $T$, i.e., to avoid the \emph{unsafe set} $\overline{T}$.
We write $\probability^{\maxstrat, \minstrat}$ to emphasize that we consider the probability measure in the Markov chain $\game^{\maxstrat, \minstrat}$ induced by some strategies $\maxstrat, \minstrat$.
%in a game $\game$ (which will be clear from the context).

%\subparagraph{Multi-objective queries and determinacy.}
%
\begin{definition}[Disjunctive Queries~\cite{cfk13}]
    Given an SG $\game$ with state space $S$, an $n$-dimensional \emph{disjunctive query (DQ)} for $\game$ is an expression of the form
    $
    \query = \bigvee_{i=1}^n \probability(\genobj_i T_i) \geq x_i 
    $
    with $\genobj_i \in \{\reach, \safe\}$, $T_i \subseteq S$ for all $i = 1,\ldots,n$, and $\vec{x} \in (0,1]^n$ a threshold vector.
\end{definition}
Note that we only allow (non-strict) \emph{lower} bounds in DQs. This is w.l.o.g.\ as upper bounds on reachability or safety can be recast as lower bounds on the dual objective.
%
%We treat queries $\query$ as above merely as syntactic specifications of winning conditions.
%
The standard semantics of a DQ is defined as follows~\cite{cfk13}: \Achiever\ can \emph{achieve}\footnote{We use the term ``achieve'' rather than ``win'' for consistency with previous works, e.g.\ \cite{EKVY08,cfk13,BF16}.} $\query$, or equivalently, $\query$ is \emph{achievable} in $\game$ if
$
    \exists \maxstrat \forall \minstrat \colon \bigvee_{i=1}^n \probability^{\maxstrat,\minstrat}(\genobj_i T_i) \geq x_i
$
where $\maxstrat$ ranges over all strategies of \Achiever, $\minstrat$ over those of \Spoiler\ and $\probability^{\maxstrat,\minstrat}$ is the probability measure of the induced Markov chain $\game^{\maxstrat,\minstrat}$.
A strategy of \Achiever\ witnessing achievability of $\query$ is called an \emph{achieving strategy}.
%
%This is the standard semantics of multi-objective queries as defined in~\cite{cfk13}.
%
Note that the quantification order is such that \Achiever\ has to reveal her strategy to \Spoiler\ \emph{before} the game actually starts.
Since the games are not determined, this may be a disadvantage (see \cite{cfk13} or our example in the introduction).
Therefore, we also consider the alternative semantics obtained by swapping the quantification order.
We call this semantics \emph{asserted-exposure} ($\forall \exists$ for short).
In the $\forall \exists$-semantics, \Spoiler's strategy is \emph{exposed} to \Achiever\ \emph{before} the game begins, i.e., $\query$ is $\forall \exists$-achievable if
$
 \forall \minstrat \exists \maxstrat \colon \bigvee_{i=1}^n \probability^{\maxstrat,\minstrat}(\genobj_i T_i) \geq x_i
$
holds.
By definition, $\game$ is \emph{determined} for $\query$ iff the standard and the alternative $\forall \exists$-semantics coincide.

We will consider the following subclasses of DQs:
If $\genobj_i = \reach$ ($\genobj_i = \safe$) for all $i=1,\ldots,n$, then we call $\query$ a \emph{reachability} (\emph{safety}) DQ.
We say that $\query$ is \emph{mixed} to emphasize that it may contain both $\reach$ and $\safe$.
If $\vec{x} = (1,\ldots,1)$ then $\query$ is called \emph{qualitative} DQ, and otherwise \emph{quantitative} DQ.
If each state contained in a target/unsafe set is a sink then $\query$ is called a \emph{sink} DQ.
All of the above notions are defined analogously for \emph{conjunctive queries} (CQs).
%, which result from replacing the disjunction $\bigvee$ in a DQ with a conjunction $\bigwedge$.
%
%Note that achievability of a CQ in the $\forall\exists$-semantics is equivalent to non-achievability of a DQ (with strict bounds) in the standard semantics.

\subparagraph{Pareto sets.}
If the threshold vector $\vec{x}$ in a query $\query$ has not been fixed, we can think of $\query$ as a \emph{query template}.
We define the set \emph{Pareto set} for query template $\query$ in a given SG $\game$ as
\[
    \val(\game,\query) ~\eqdef~ \{\, \vec{x} \in [0,1]^n \,\mid\, \exists \maxstrat \, \forall \minstrat \colon \bigvee_{i=1}^n \probability^{\maxstrat,\minstrat}(\genobj_i T_i) \geq x_i \,\}
\]
and similarly for CQs.
Further, for $k \geq 0$, we call $\val(\game^{\leq k}, \query)$ the \emph{horizon-$k$} Pareto set, i.e., the set of points achievable if the game runs for at most $k$ steps. 
We define $\upval(\game, \query)$ as the set of vectors achievable in the alternative $\forall\exists$-semantics.
Note that in general, $\val(\game, \query) \subseteq \upval(\game, \query)$, and equality holds iff $\game$ is determined for all $\query(\vec{x})$, $\vec{x} \in [0,1]^n$.
The Pareto sets $\val(\game, \query)$ and $\upval(\game, \query)$ generalize the notion of \emph{lower} and \emph{upper value} in single-dimensional games.
Indeed, they coincide for $n=1$ as single-dimensional SGs are determined~\cite{CondonCompl}.
Furthermore, $\val(\game, \query)$ is convex for CQs~\cite{cfk13}.

\subparagraph{Goal-unfolding.}

The following construction is folklore (e.g.\ \cite{FH10,cfk13}).
Given an SG $\game$ with states $S$ and an $n$-dimensional query $\query$, we define the \emph{goal-unfolding} $\unf(\game,\query)$ of $\game$ with respect to $\query$ as a game with state space $S \times \{0,1\}^n$.
Let $(s, \vec{v})$ be a state of the unfolding.
Intuitively, $\unf(\game,\query)$ remembers which targets/unsafe sets have already been visited during a specific play. 
This is encoded in the $n$-bit vector $\vec{v}$.
That is, if $\game$ transitions from $s$ to $t$, then in the unfolding $(s, \vec{v})$ moves to $(t, \vec{u})$ where $\vec{u}$ is obtained from $\vec{v}$ by setting all bits corresponding to the targets/unsafe sets containing $t$ to one. Accordingly, the initial state is $(\sinit, (0,\ldots,0))$.
If $\query$ is a sink query then the unfolding is trivial, i.e., equal to $\game$.
Strategies in $\unf(\game, \query)$ can be interpreted as strategies in $\game$ by incorporating the vectors $\vec{v}$ from the states of the unfolding into the memory of the strategy.

\section{Strategy Complexity}
\label{sec:strat_comp}

In this section we analyze the memory complexity of \Achiever's achieving strategies for DQs in terms of the query dimension, denoted $n$ in the following.
More formally, given a class $Q_n$ of DQs with $n$ objectives, we determine (or bound) a number $b(n)$ such that

\begin{enumerate}
    \item $b(n)$ memory is \emph{necessary} for the queries in $Q_n$, i.e., there exists a game $\game$ and query $\query \in Q_n$ such that $\query$ is achievable in $\game$ iff \Achiever\ may use at least $b(n)$ memory;
    \item $b(n)$ memory is \emph{sufficient} for the queries in $Q_n$, i.e., for all games $\game$ and $\query \in Q_n$, if $\query$ is achievable in $\game$, then \Achiever\ has an achieving strategy using at most $b(n)$ memory.
\end{enumerate}

To give a nuanced picture of the complexity, we distinguish the classes of qualitative vs.\ quantitative DQs, safety vs.\ reachability DQs and study the restriction to deterministic vs.\ general (randomized) strategies.
We stress that the latter distinction applies to both players, i.e., 
in the \emph{deterministic-strategies} case, neither \Spoiler\ nor \Achiever\ may play randomized strategies whereas
in the \emph{general-strategies} case, both players may follow arbitrary---possibly randomized---strategies.

We briefly recall the case of non-stochastic games with deterministic strategies for both players.
It was shown in~\cite[Lem.\ 1]{FH10} that ${n \choose \lfloor n/2 \rfloor}$ memory is necessary and sufficient for safety DQs.
Notably, this means that not the \emph{whole} goal-unfolding is needed (though an exponentially large fragment).
Reachability DQs, on the other hand, do not need memory at all in the purely deterministic setting because a DQ $\bigvee_{i=1}^n \probability(\reach T_i) \geq 1$ boils down to reaching the set $\bigcup_{i=1}^n T_i$, and MD strategies are sufficient for winning reachability games on finite graphs.

In the rest of the section we first treat the general-strategies case (Section \ref{sec:strat_comp_gen}) and then study the restriction to deterministic strategies (Section \ref{sec:strat_comp_det}).
Table~\ref{tab:scomp} summarizes the results.

\begin{theorem}
    The memory bounds in Table~\ref{tab:scomp} are correct.
\end{theorem}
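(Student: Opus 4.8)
Since Table~\ref{tab:scomp} asserts a memory bound $b(n)$ for each of several classes $Q_n$ (qualitative/quantitative $\times$ reachability/safety/mixed $\times$ deterministic/general strategies), the plan is to prove for every cell separately a matching pair of statements: \emph{sufficiency} — every achievable $\query\in Q_n$ admits an achieving strategy of memory at most $b(n)$ — and \emph{necessity} — a parametric family $(\game_n,\query_n)$ with $\query_n\in Q_n$ that is achievable but not by any strategy of memory below $b(n)$. I would first dispatch the degenerate cells: a qualitative reachability DQ $\bigvee_i\probability(\reach T_i)\geq 1$ is achievable iff $\reach(\bigcup_i T_i)$ holds with probability one, a single-objective problem, so MD strategies are sufficient and trivially necessary; the qualitative safety case, by contrast, is genuinely the generalized-safety situation of \cite{FH10} and is treated below with the other non-trivial cells.

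For sufficiency I would uniformly pass to the goal-unfolding $\unf(\game,\query)$ with state space $S\times\{0,1\}^n$: a memoryless strategy there induces one of memory $\leq 2^n$ in $\game$, so it is enough to (i) exhibit an achieving memoryless strategy in the unfolding and (ii) prune the reachable-and-relevant memory set down to $b(n)$. For (i), from a state $(s,\vec v)$ the residual goal is the disjunction of one safety objective over the still-active safe sets and the single reachability objective $\reach(\bigcup_{i\text{ active}}T_i)$; this is a single-direction reachability/safety problem for which memoryless achievability is shown directly in the qualitative case, and which combines with the convexity of conjunctive-query Pareto sets (\cite{cfk13}, recalled in the preliminaries) plus a finite case split over which disjunct a given counter-strategy leaves satisfiable in the quantitative case. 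For (ii), the sharp ${n\choose\lfloor n/2\rfloor}$ bound in the safety cells comes from the antichain argument of \cite[Lem.~1]{FH10}: it never helps Eve to keep distinguishing two configurations $\vec v\leq\vec v'$, so the needed memory states can be taken to form an antichain in $\{0,1\}^n$, whose size Sperner's theorem bounds by the middle binomial coefficient; for mixed queries the reachability bits can be projected away first, leaving a safety antichain over the safety coordinates.

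For necessity I would build, per class, a \textbf{probing} family: Eve (or the probabilistic environment) is forced through a stretch of states that successively fixes the bit-vector $\vec v$, after which Adam may move to a test gadget that Eve wins only if her continuation depends on the exact $\vec v$ seen so far; a pigeonhole count on any realizing strategy automaton then lower-bounds $|M|$. In the deterministic safety cell this reproduces the ${n\choose\lfloor n/2\rfloor}$ construction of \cite{FH10} essentially verbatim, with no randomness required. In the quantitative cells the test gadget additionally uses a non-convex Pareto frontier in the spirit of state $s_0$ in Figure~\ref{fig:introExample}, which is precisely what pushes the requirement above the qualitative level; and for the deterministic quantitative-reachability cell I would exhibit a game whose only achieving strategies are infinite-memory, by showing that any fixed finite memory can be ``out-waited'': Adam, who under the $\exists\forall$ order sees all of Eve's strategy, drives the play into a loop of her strategy automaton on which the target is missed with too large a probability.

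The main obstacle I expect is tightness of the lower bounds. Matching the exact constant ${n\choose\lfloor n/2\rfloor}$ (rather than a crude $2^n$) forces the probing gadgets to realize an antichain-sized, and no larger, hard configuration space; and for the quantitative and infinite-memory cells one must simultaneously encode that hard space, keep the constructed strategies \emph{optimal} against \emph{every} counter-strategy and not merely feasible, and respect the $\exists\forall$ semantics under which Adam's response may depend on the whole of Eve's strategy. Verifying this last point — that the claimed witnesses really are achieving and that no smaller-memory strategy slips through — is the delicate part of the proof.
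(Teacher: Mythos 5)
There is a genuine gap, and it concerns the cells of Table~\ref{tab:scomp} for stochastic games under \emph{general} strategies. Your opening reduction --- ``a qualitative reachability DQ $\bigvee_i\probability(\reach T_i)\geq 1$ is achievable iff $\reach(\bigcup_i T_i)$ holds with probability one'' --- is false in SGs. The game of Figure~\ref{fig:introExample} started at $s_2$ is a counterexample: every play surely reaches $T_1\cup T_2$, yet Adam can randomize so that neither $\probability(\reach T_1)\geq 1$ nor $\probability(\reach T_2)\geq 1$ holds, so the DQ is not achievable. (The union reduction is valid only in the non-stochastic, deterministic-strategies column, which is why the paper marks that cell ``trivial''.) The paper's actual argument for the general-strategies qualitative cells (Lemma~\ref{lem:scomp-SS-M-qual}) is different in kind: if no single disjunct is achievable with probability $1$, Adam mixes uniformly over the $n$ individual counter-strategies via a stochastic memory update and defeats \emph{all} disjuncts simultaneously; hence a qualitative DQ is achievable iff some single objective is, and MD strategies suffice. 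Relatedly, you route the qualitative \emph{safety} case through the antichain/Sperner bound ${n\choose\lfloor n/2\rfloor}$ of \cite{FH10}, but in the SG general-strategies column the correct bound is \emph{zero} memory, by the same mixture argument; the antichain bound belongs only to the deterministic-strategies settings. Your plan would therefore prove the wrong bound for that cell.

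Two further points are underdeveloped rather than wrong. For the $2^n-1$ \emph{lower} bounds in the quantitative general-strategies cells, ``a probing family with a non-convex test gadget'' does not yet identify the two constructions the paper actually needs: the flower game of \cite[Lem.~2]{FH10} re-analyzed to show that Eve cannot trade memory for randomization, and a new three-stage penalty construction for safety, both pushed from qualitative CQs to quantitative DQs via the $\nicefrac{1}{2}$-split reduction of Lemma~\ref{lem:reduction-cq-dq}. (Your antichain pruning must also \emph{not} fire here --- the table asserts the full $2^n-1$ is necessary.) For the infinite-memory cell, the mechanism is not ``Adam out-waits a loop'': the paper's witness is an MDP with no adversary at all, in which a deterministic strategy is an infinite bit-string whose value as a binary expansion equals $\probability(\reach T_1)$; choosing a threshold with a non-eventually-periodic expansion and pumping the finite strategy automaton yields the contradiction, and only then is the CQ converted to a DQ by the deterministic-strategies reduction of Lemma~\ref{lem:reduction-cq-dq-quant}. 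Without the specific aperiodic threshold your sketch has no lever to force infinite memory.
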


\begin{table}[t]
    \caption{
        Memory requirements for \Achiever\ in terms of the number $n$ of objectives in the DQ.
        The lower bounds in column \emph{``SG with deterministic strats.''} apply already to sink queries.
    }
    \label{tab:scomp}
    \centering
    {
        \setlength\tabcolsep{6pt}
        \def\arraystretch{1.1}
        \begin{tabular}{l  c  c  c  c   c c}
            \toprule
%            & \multicolumn{4}{c|}{\emph{Stochastic game}} & \multicolumn{2}{c}{\emph{Deterministic game}} \\
            
           & \multicolumn{2}{c}{\emph{SG with general strats.}} & \multicolumn{2}{c}{\emph{SG with deterministic strats.}} & \multicolumn{2}{c}{\emph{Non-SG with deterministic strats.}} \\
            
            & \multicolumn{2}{c}{$\safe$ /  $\reach$} & \multicolumn{2}{c}{$\safe$ /  $\reach$} & $\safe$ & $\reach$\\
            
            \midrule
            
            \emph{Qual.}   & \multicolumn{2}{c}{none~[Lem.~\ref{lem:scomp-SS-M-qual}]} & \multicolumn{2}{c}{$\geq {n \choose n/2}$ [Lem.~\ref{lem:scomp-SD-SR-qual}]} & ${n \choose \lfloor n/2 \rfloor}$~\cite{FH10} & none~[trivial] \\
            
            \midrule
            
            \emph{Quant.}  & \multicolumn{2}{c}{$2^n - 1$ [Lem.~\ref{lem:strat-upper-bound},~\ref{lem:scomp-SS-R-quant},~\ref{lem:scomp-SS-S-quant}]} & \multicolumn{2}{c}{$\infty$~[Cor.~\ref{cor:scomp:SD-SR-quant}]} & \multicolumn{2}{c}{---\emph{not applicable}---} \\
            
            \bottomrule
        \end{tabular}
    } 
\end{table}

\subsection{Strategy Complexity under General Strategies}
\label{sec:strat_comp_gen}

Recall from above that whenever we say that a certain class of strategies are ``sufficient'' or ``necessary'', we are implicitly assuming that \Achiever\ actually has a winning strategy.

%The results of \cite{cfk13} imply the following upper bound:

\begin{restatable}{lemma}{lemstratupperbound}
    \label{lem:strat-upper-bound}
    In the general-strategies case, deterministic strategies with at most $2^n{-}1$ memory states suffice for mixed quantitative DQs. Moreover, MD strategies are sufficient for sink queries.
\end{restatable}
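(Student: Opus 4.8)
The plan is to reduce everything to the \emph{goal-unfolding} $\game' = \unf(\game,\query)$ and then show that in $\game'$ a memoryless deterministic strategy already suffices.

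First I would pass to $\game'$. As noted in the preliminaries, any strategy in $\game'$ can be read back as a strategy in $\game$ whose only memory is the current bit-vector $\vec v \in \{0,1\}^n$, and conversely an achieving strategy in $\game$ yields an achieving strategy in $\game'$. The point of the unfolding here is that every target/unsafe set has turned into an \emph{absorbing} set $U_i = \{(s,\vec v)\mid v_i = 1\}$: along any play the vector $\vec v$ only grows, and $\probability(\reach T_i)$ in $\game$ equals $\probability(\reach U_i)$ in $\game'$ while $\probability(\safe T_i)$ equals $1-\probability(\reach U_i)$. So it suffices to prove: if a disjunction of probability thresholds over the absorbing sets $U_1,\dots,U_n$ (reachability lower bounds, respectively safety ``upper bounds'') is achievable in $\game'$, then it is achievable by an MD strategy in $\game'$. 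Since for \emph{sink} queries the unfolding is trivial, $\game'=\game$, this MD claim for $\game'$ immediately gives the ``moreover'' part about sink queries.

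Next, the MD claim in $\game'$. Because the $U_i$ are absorbing, the residual game after any finite prefix depends only on the current $\game'$-state (the bit-vector is already part of the state and never decreases), so the usual fixed-point / strategy-improvement analysis, specialised to reachability–safety of absorbing sets, should yield an achieving strategy that depends only on the current $\game'$-state. Randomization is then removed by a purification step at each of \Achiever's states: the achievability condition $\inf_\minstrat\max_i\bigl(\probability^{\maxstrat,\minstrat}(\reach U_i)-x_i\bigr)\ge 0$ only involves reachability probabilities of absorbing sets, and for a fixed profile these are, with \Spoiler\ optimal, attained at a vertex of \Spoiler's response polytope; so among the pure actions in the support of \Achiever's optimal randomized choice at a state at least one keeps the value non-negative, and fixing such an action everywhere gives an MD strategy. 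Translating back to $\game$, this is a deterministic strategy whose memory is $\{0,1\}^n$, i.e.\ at most $2^n$ states. To shave the bound to $2^n-1$, observe that in the memory state $\vec 1$ the outcome of the current play with respect to \emph{every} objective is already determined (each reachability target has been visited, each safety set has been entered), hence \Achiever's continuation from $\vec 1$ changes no $\probability^{\maxstrat,\minstrat}(\genobj_i T_i)$; therefore the memory state $\vec 1$ may be merged with, say, the initial state $\vec 0$ (redirect every update into $\vec 1$ to $\vec 0$), and since $\vec 1$ is absorbing the automaton never re-enters a ``real'' $\vec 0$ situation from the merged class, so correctness is preserved while only $2^n-1$ memory states remain.

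The main obstacle is the memorylessness claim for $\game'$, i.e.\ that a disjunction of probability thresholds over absorbing reachability/safety sets, if achievable by \Achiever\ in a finite stochastic game, is achievable MD; everything around it (the unfolding bookkeeping and the merging of $\vec 1$) is routine. This is also precisely where the matching lower bounds (Lemmas~\ref{lem:scomp-SS-R-quant} and~\ref{lem:scomp-SS-S-quant}) show that in $\game$ one genuinely needs essentially the full bit-vector, so no cheaper construction can work.
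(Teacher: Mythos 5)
Your overall architecture coincides with the paper's: pass to the goal-unfolding $\unf(\game,\query)$, observe that there the query becomes a sink query over absorbing sets, establish MD sufficiency in that setting, and read the MD strategy back as a $2^n$-memory (then $2^n{-}1$ after merging the all-ones class) deterministic strategy in $\game$. The unfolding bookkeeping and the merging of the memory state $\vec{1}$ are fine. The problem is the core step, which you yourself flag as ``the main obstacle'': MD sufficiency for disjunctive threshold queries over absorbing sets. Your two arguments for it do not go through as stated. First, ``the usual fixed-point / strategy-improvement analysis'' does not apply: achievability of a DQ is $\exists\maxstrat\,\inf_\minstrat \max_i\bigl(\probability^{\maxstrat,\minstrat}(\reach U_i)-x_i\bigr)\ge 0$, and $\max_i$ of several reachability probabilities is not a single linear payoff, so there is no standard Bellman operator or single-objective determinacy result certifying memoryless optimality --- the non-convex Pareto set at $s_0$ in Figure~\ref{fig:introExample} already shows the value structure is not that of a scalar stochastic game. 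Second, the purification step is unsound in two ways: a memoryless randomized strategy re-randomizes on every revisit of a state, so the induced vector of reachability probabilities is \emph{not} a convex combination of the vectors obtained from the pure actions in the support (it is a nonlinear function of the mixing probabilities whenever the state lies on a cycle); and even where it were, \Spoiler\ best-responds to the \emph{purified} strategy, not to the original one, so the fact that some pure action ``keeps the value non-negative'' against \Spoiler's old response does not imply it does so against his new one.

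The paper sidesteps exactly this difficulty by encoding each objective as an expected total reward in the unfolding ($+1$ on first visit to $T_i$ for reachability, $-1$ on first visit to $\overline{T_i}$ for safety, shifting the threshold by $1$ in the latter case) and then invoking \cite[Theorem~7]{cfk13}, which is where the genuine MD-sufficiency argument for disjunctions lives. To repair your write-up you would either need to reproduce that argument for absorbing reachability/safety sets (nontrivial) or, as the paper does, reduce to the known expected-reward result; as it stands, the central claim is asserted rather than proved.
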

\begin{proof}[Proof (sketch)]
    By \cite[Theorem 7]{cfk13}, MD strategies are sufficient for quantitative DQs with \emph{expected reward} objectives (see the formal definition in~\cite{cfk13}).
    We reduce reachability and safety to expected reward in the goal-unfolding \iftoggle{arxiv}{(Appendix~\ref{app:proof-strat-upper-bound})}{\cite{arxiv}}.
    The resulting MD strategy in the unfolding corresponds to a strategy of Eve with at most $2^n{-}1$ memory.
\end{proof}

Recall that even though Lemma~\ref{lem:strat-upper-bound} implies that deterministic strategies are sufficient for \Achiever, \Spoiler\ may still use randomization.
In fact, we show in Corollary~\ref{cor:scomp:SD-SR-quant} in Section \ref{sec:strat_comp_det} that Lemma~\ref{lem:strat-upper-bound} does \emph{not} hold in the deterministic-strategies case where both players are forced to follow a deterministic strategy.
Next we either improve the bound from Lemma~\ref{lem:strat-upper-bound} or prove matching lower bounds.
We consider qualitative DQs first.

\begin{restatable}{lemma}{lemscompSSMqual}
    \label{lem:scomp-SS-M-qual}
    In the general-strategies case, MD strategies suffice for mixed qualitative DQs.
\end{restatable}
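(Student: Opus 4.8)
The plan is to reduce a mixed qualitative DQ to a single one of its disjuncts. Concretely, I would first establish the following key fact: \emph{every achieving strategy for a qualitative DQ $\query = \bigvee_{i=1}^n \probability(\genobj_i T_i) \geq 1$ already forces one of the disjuncts on its own}, i.e.\ if $\sigma$ achieves $\query$ then there is a fixed index $i$ with $\probability^{\sigma,\tau}(\genobj_i T_i) = 1$ for all $\tau$. Granting this, the lemma follows immediately: an achievable single qualitative reachability disjunct $\probability(\reach T_i)\geq 1$ is almost-sure reachability, for which MD strategies are well known to suffice; and an achievable single qualitative safety disjunct $\probability(\safe T_i)\geq 1$ amounts to \emph{surely} never leaving $T_i$ (because every consistent finite prefix has positive probability), which is a plain safety game on the game graph with probabilistic states treated like \Spoiler's, and positional strategies suffice there as well. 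Either way \Achiever\ obtains an MD strategy forcing the $i$-th disjunct with probability $1$ against every \Spoiler\ strategy, which a fortiori achieves $\query$.

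The crux is the key fact, which I would prove by a strategy-mixing argument on \Spoiler's side. Let $\sigma$ achieve $\query$ and suppose, towards a contradiction, that $\sigma$ forces none of the disjuncts; then for every $i$ there is a counter-strategy $\tau_i$ with $\probability^{\sigma,\tau_i}(E_i) < 1$, where $E_i$ denotes the event $\reach T_i$ resp.\ $\safe T_i$ of disjunct $i$. Let $\tau^*$ be the \Spoiler\ strategy that first samples an index $j \in \{1,\ldots,n\}$ uniformly at random and then plays exactly like $\tau_j$ forever after (a stochastic-update strategy, which is w.l.o.g.). By the law of total probability over the initial choice of $j$ we get $\probability^{\sigma,\tau^*}(A) = \tfrac1n\sum_{j=1}^n \probability^{\sigma,\tau_j}(A)$ for every measurable event $A$, hence for every $i$, $\probability^{\sigma,\tau^*}(E_i) \leq 1 - \tfrac1n\bigl(1-\probability^{\sigma,\tau_i}(E_i)\bigr) < 1$. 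So against $\tau^*$ \emph{no} disjunct of $\query$ is satisfied (recall all thresholds equal $1$), contradicting that $\sigma$ achieves $\query$. Therefore $\sigma$ forces some disjunct, as claimed.

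The main obstacle is really only to spot the mixing trick and to check that it is sound. It uses in an essential way that \Spoiler\ may randomize and carry one extra bit of memory; this is precisely what is forbidden in the deterministic-strategies setting, which is why the analogous statement fails there (cf.\ Lemma~\ref{lem:scomp-SD-SR-qual}). The minor points to verify along the way are that the $E_i$ are genuinely measurable so that $\probability^{\sigma,\tau^*}$ decomposes as the stated mixture, and the (standard) positional determinacy of single-objective almost-sure reachability and of safety games invoked in the reduction. As a bonus, the argument yields an achieving strategy that is not merely memoryless but also \emph{deterministic}, matching the ``none'' entry in Table~\ref{tab:scomp} and showing that randomization does not help \Achiever\ for qualitative DQs either.
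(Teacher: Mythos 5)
Your proposal is correct and follows essentially the same route as the paper's proof: the identical strategy-mixing argument on \Spoiler's side (uniformly sampling one of the counter-strategies $\tau_i$ via a stochastic-update strategy) to show that any achieving strategy must force a single disjunct, followed by the standard fact that MD strategies suffice for a single almost-sure reachability or safety objective. The quantitative bound you derive, $\probability^{\sigma,\tau^*}(E_i) \leq 1 - \tfrac1n\bigl(1-\probability^{\sigma,\tau_i}(E_i)\bigr) < 1$, is exactly the estimate used in the paper.
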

\begin{proof}[Proof (sketch)]
    It can be shown that a given strategy of \Achiever\ achieves $\bigvee_{i=1}^n \probability(\genobj_i T_i) \geq 1$ iff it achieves the single objective $\probability(\genobj_i T_i) \geq 1$ for at least one $1 \leq i \leq n$ \iftoggle{arxiv}{(Appendix~\ref{app:proof-scomp-SS-M-qual})}{\cite{arxiv}}.
    For the latter, MD strategies are sufficient~\cite{Condon90}.
\end{proof}

The qualitative bounds (``$\geq 1$'') are crucial in the previous proof.
Indeed, the equivalence in the above proof sketch does \emph{not} hold for quantitative bounds (a minimal counter-example is the game in Fig.~\ref{fig:introExample} started from $s_2$ with bounds $\geq \nicefrac{1}{2}$ for both objectives).
For the quantitative setting where arbitrary bounds are allowed, the following observation relates conjunctive and disjunctive queries and enables us to reuse some known results about CQs.
Intuitively, it states that qualitative reachability CQs can be reduced to quantitative (qualitative) DQs in the case of general (respectively deterministic) strategies.

\begin{restatable}{lemma}{lemreductioncqdq}
    \label{lem:reduction-cq-dq}
    For every SG $\game$ with a qualitative reachability CQ $\query = \bigwedge_{i=1}^n \probability(\reach T_i) \geq 1$, there exists a game $\game'$ (where \Achiever's strategies $\maxstrat$ are in one-to-one correspondence) and a
    \begin{enumerate}
        \item \emph{quantitative} reachability DQ $\query'_{quant}$ such that $\sigma$ achieves $\query$ in $\game$ iff $\sigma$ achieves $\query'_{quant}$ in $\game'$ under \emph{general} strategies;
        \item \emph{qualitative} reachability DQ $\query'_{qual}$ such that $\sigma$ achieves $\query$ iff $\sigma$ achieves $\query'_{qual}$ under \emph{deterministic} strategies.
    \end{enumerate}
\end{restatable}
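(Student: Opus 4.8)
The plan is to construct $\game'$ from $\game$ by appending, after each original target $T_i$, a small probabilistic "test gadget" that converts the \emph{qualitative} requirement $\probability(\reach T_i) \ge 1$ into a \emph{quantitative} threshold that is met if and only if $T_i$ is reached almost surely. Concretely, I would first apply the goal-unfolding $\unf(\game, \query)$ so that reaching a target is recorded in the memory bit; then redirect every transition that sets bit $i$ into a new probabilistic state $p_i$ that, with probability $1$, loops back into the game exactly as before, but is additionally equipped with a fresh, dedicated sink target $R_i$ reached with some carefully chosen small probability. The key quantitative idea: if we make $p_i$ branch with probability $\varepsilon$ to $R_i$ and $1-\varepsilon$ back to the game, then each \emph{independent} visit to $p_i$ contributes; the probability of ever reaching $R_i$ equals $1$ iff $p_i$ is visited infinitely often with probability $1$ along some run, which does not quite coincide with "$T_i$ is reached a.s." A cleaner alternative, which I would actually use, is: route the first entry into $T_i$ (detected via the unfolding bit flipping from $0$ to $1$) through a gadget that flips a fair coin — with probability $\tfrac12$ goes to sink $R_i$, with probability $\tfrac12$ goes to a sink $R_i'$ — and set $\query'_{quant} := \bigvee_{i=1}^n \probability(\reach R_i) \ge \tfrac{1}{2^{n}}$ or, better, a single combined fresh target. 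Let me instead describe the construction I am most confident is correct.

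\textbf{Part 1 (quantitative DQ, general strategies).}
Take $\game' = \unf(\game, \query)$ augmented as follows. Introduce fresh sink states $R_1, \dots, R_n$ and, for each $i$, a fresh probabilistic "gate" state $g_i$. Whenever in the unfolding a transition causes bit $i$ to switch from $0$ to $1$ — i.e., $T_i$ is entered for the first time — redirect that transition instead into $g_i$. From $g_i$, with probability $\tfrac12$ move to $R_i$, and with probability $\tfrac12$ move to the intended unfolding successor. Since each bit flips at most once on any run, $g_i$ is visited at most once. Therefore, under any fixed strategy pair $(\maxstrat, \minstrat)$, we have $\probability^{\maxstrat,\minstrat}(\reach R_i) = \tfrac12 \cdot \probability^{\maxstrat,\minstrat}(\reach T_i)$ where the right-hand side refers to the original game (the correspondence of strategies is immediate since $g_i$ is probabilistic, so neither player gains a choice). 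Set
\[
  \query'_{quant} \;\eqdef\; \bigvee_{i=1}^n \probability\bigl(\reach R_i\bigr) \;\ge\; \tfrac12 .
\]
Then $\sigma$ achieves $\query'_{quant}$ in $\game'$ against $\minstrat$ iff $\probability^{\maxstrat,\minstrat}(\reach R_i) \ge \tfrac12$ for some $i$, iff $\probability^{\maxstrat,\minstrat}(\reach T_i) = 1$ for some $i$. But here is the subtlety: the CQ $\query$ demands \emph{all} $T_i$ a.s., whereas the DQ now only demands \emph{one}. These do not match! So this naive construction is wrong, and fixing it is the main obstacle — see below.

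\textbf{The main obstacle and how I would resolve it.}
The real difficulty is turning a \emph{conjunction} over $n$ objectives into a \emph{disjunction}. The trick I would use is a \emph{randomized selector}: add a fresh initial probabilistic state $s_0'$ that, with probability $\tfrac1n$ each, jumps to a copy $\game_i$ of the (unfolded, gated) game in which \emph{only} target $R_i$ is kept. Then for any $(\maxstrat,\minstrat)$,
\[
  \probability^{\maxstrat,\minstrat}_{\game'}\bigl(\reach R\bigr)
  \;=\; \tfrac1n \sum_{i=1}^n \probability^{\maxstrat,\minstrat}_{\game}\bigl(\reach T_i\bigr),
\]
where $R = \bigcup_i R_i$, and this equals $1$ iff every $\probability^{\maxstrat,\minstrat}(\reach T_i) = 1$, i.e., iff $\sigma$ (interpreted across all copies as the same strategy — which forces us to keep $\maxstrat$'s behaviour synchronized, handled by the strategy correspondence through the unfolding) achieves the CQ. Hence $\query'_{quant} \eqdef \probability(\reach R) \ge 1$ — but that is qualitative again. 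To make it genuinely quantitative and a \emph{disjunction}, I would instead keep the $n$ copies \emph{separate}: let $\game'$ be the disjoint union of the $n$ gated copies with a new probabilistic root picking copy $i$ w.p. $\tfrac1n$, and take $\query'_{quant} = \bigvee_{i=1}^n \probability(\reach R_i) \ge \tfrac1n$. Since the root reaches copy $i$ exactly w.p. $\tfrac1n$ and within copy $i$ target $R_i$ is reached w.p. $\probability^{\maxstrat,\minstrat}_{\game}(\reach T_i)$, we get $\probability^{\maxstrat,\minstrat}_{\game'}(\reach R_i) = \tfrac1n \probability^{\maxstrat,\minstrat}_{\game}(\reach T_i)$, so the $i$-th disjunct holds iff $\probability^{\maxstrat,\minstrat}_{\game}(\reach T_i)=1$. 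Now "$\bigvee_i$ some disjunct holds" again only gives one conjunct — still wrong. I therefore believe the \emph{correct} statement requires $\game'$ to force the \emph{opponent} to witness the worst conjunct: the standard device is to let \Spoiler\ (who moves \emph{after} \Achiever\ commits, since the semantics is $\exists\maxstrat\forall\minstrat$) \emph{choose} which copy $i$ the game enters, via a fresh \Spoiler-state $s_0'$ with $\choices(s_0') = \{g_1, \dots, g_n\}$; then $\sigma$ achieves $\query'_{quant} = \bigvee_{i=1}^n \probability(\reach R_i) \ge \tfrac12$ in $\game'$ iff for \emph{every} choice $i$ of \Spoiler, $\probability(\reach R_i)\ge\tfrac12$, iff $\probability^{\maxstrat,\minstrat}_\game(\reach T_i)=1$ for all $i$ and all $\minstrat$ — which is exactly achieving the CQ $\query$. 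This is the construction I would write up; the one-to-one strategy correspondence follows because $s_0'$ and the $g_i$ add no choices for \Achiever.

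\textbf{Part 2 (qualitative DQ, deterministic strategies).}
Here randomization at the gates $g_i$ is not allowed, so replace each probabilistic gate by a direct transition: $\game'$ is $\unf(\game,\query)$ with the fresh \Spoiler-selector state $s_0'$ with $\choices(s_0') = \{c_1, \dots, c_n\}$, where $c_i$ leads into the unfolded copy in which a \emph{fresh sink target} $R_i$ is placed \emph{on top of} $T_i$ (i.e., entering $T_i$ now also means entering $R_i$). Take $\query'_{qual} = \bigvee_{i=1}^n \probability(\reach R_i) \ge 1$. Under deterministic strategies, \Achiever\ commits to $\maxstrat$, \Spoiler\ then picks $c_i$ and plays any deterministic $\minstrat$ in copy $i$; $\maxstrat$ achieves the $i$-th disjunct against this iff $R_i$ (equivalently $T_i$) is reached — but reached with probability $1$ in a Markov chain where all probabilities are either $0$ or... no, probabilistic states remain. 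So $\probability(\reach R_i)\ge 1$ means $T_i$ is reached a.s. in copy $i$, i.e. for all $i$ and all deterministic $\minstrat$, $\probability^{\maxstrat,\minstrat}_\game(\reach T_i) = 1$; since the CQ $\query$ under deterministic strategies asks exactly for this, $\sigma$ achieves $\query$ iff $\sigma$ achieves $\query'_{qual}$. The strategy correspondence is again immediate. The only genuinely delicate point throughout — and what I would spend the most care on — is verifying that the \Spoiler-selector correctly transforms "$\forall\minstrat$: $\bigwedge_i$" into "$\forall\minstrat'$: $\bigvee_i$" in both semantics, which hinges on \Spoiler\ moving second and on $R_i$ being a sink so that the disjunct for copy $i$ decouples from the others.
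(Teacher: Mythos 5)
Your final construction does not work: placing \Spoiler's selector state $s_0'$ at the \emph{root} of the game, with each choice leading into a separate copy of $\game$ in which only the $i$-th target matters, reveals \Spoiler's selection to \Achiever\ before she has to make any move. A strategy of \Achiever\ in your $\game'$ is a function of the full history, which records which copy was entered, so it amounts to an independent tuple $(\maxstrat_1,\ldots,\maxstrat_n)$ of strategies, one per copy. Achieving your DQ then unfolds to ``for each $i$ there \emph{exists} $\maxstrat_i$ with $\forall\minstrat\colon\probability^{\maxstrat_i,\minstrat}(\reach T_i)=1$'', i.e.\ each target is individually almost-surely reachable --- strictly weaker than the CQ, which demands a \emph{single} strategy handling all targets simultaneously. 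Concretely: if $\game$ has one \Achiever-state with two actions leading to disjoint sinks $T_1$ and $T_2$, your $\query'$ is achievable in $\game'$ but the CQ is not achievable in $\game$. The claimed ``one-to-one strategy correspondence'' fails for exactly this reason: the copies multiply \Achiever's strategy space. (Your gate gadget also interferes across targets --- diverting mass to $R_j$ at the first visit of $T_j$ changes the probability of later reaching $T_i$ --- but that is secondary, since each of your copies keeps only one gate.)

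The construction the paper uses keeps \Spoiler's selection \emph{probabilistically separated} from the play: a fresh probabilistic initial state enters the unmodified $\game$ with probability $\nicefrac{1}{2}$ and, with the remaining $\nicefrac{1}{2}$, a \Spoiler-state where he picks one of $T_1,\ldots,T_n$, which is then visited and the game ends immediately. Since the selection branch contains no \Achiever-states and never returns to $\game$, \Achiever's strategies in $\game'$ are literally those of $\game$, and the conjunction is converted into a disjunction by the accounting: if $\maxstrat$ achieves the CQ, every target already receives $\nicefrac{1}{2}$ from the main branch, and \Spoiler\ must put at least $\nicefrac{1}{n}$ of the remaining half (all of it, in the deterministic case) on some $T_i$, yielding the thresholds $\nicefrac{1}{2}+\nicefrac{1}{2n}$ (respectively $1$); conversely, if some $T_i$ is missed with probability $1-\delta>0$, \Spoiler\ selects (essentially) $T_i$ in the extra branch and thereby pushes every target strictly below the threshold. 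You would need to restructure your reduction along these lines; the key missing idea is that \Spoiler's ``choice of conjunct'' must be made in a branch from which \Achiever\ cannot benefit by reacting to it.
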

\begin{proof}[Proof (sketch)]
    $\game'$ is constructed as follows:
    The original $\game$ is only played with probability $\nicefrac{1}{2}$.
    With the remaining $\nicefrac 1 2$, \Spoiler\ freely chooses one of the $n$ targets $T_1,\ldots,T_n$, after which the game ends immediately (Figure~\ref{fig:scomp-SS-S-quant}, left).
    The following can be readily verified \iftoggle{arxiv}{(Appendix~\ref{app:proof-reduction-cq-dq})}{(see~\cite{arxiv})}:
    In the deterministic-strategies case, each strategy $\maxstrat$ of \Achiever\ achieves the CQ $\query$ in $\game$ iff $\maxstrat$ achieves the 
    \emph{qualitative} DQ $\bigvee_{i=1}^n \probability(\reach T_i)\geq1 $ in $\game'$.
    Otherwise, if randomization is allowed, then $\maxstrat$ achieves $\query$ in $\game$ iff it achieves the \emph{quantitative} DQ $\bigvee_{i=1}^n \probability(\reach T_i)\geq \frac{1}{2} + \frac{1}{2n}$ in $\game'$.
\end{proof}

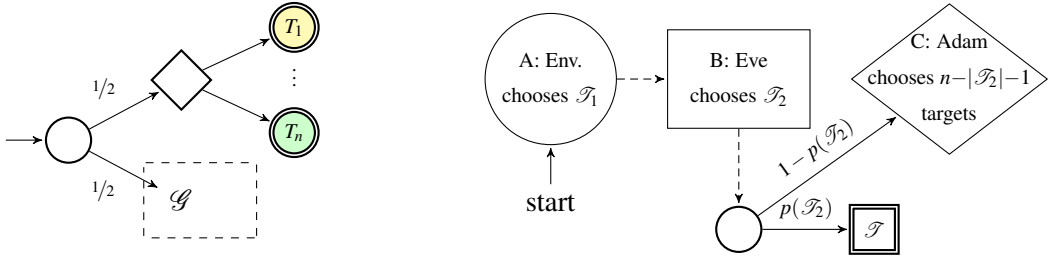
\begin{figure}[t]
    \centering
    \begin{tikzpicture}[node distance = 8mm and 15mm, on grid, initial text=,myArrowStyle]
        \node[] (gstart) {$\game$};
        
        \node[state, initial, prob, above left=of gstart, yshift=0mm] (start) {};
        \node[state, min, above right=of start, yshift=0mm] (select) {};
        \node[state, prob, target, col1, above right=of select,yshift=-1mm] (T1) {\scriptsize$T_1$};
        \node[ right=of select,yshift=1mm,scale=.7] (dots) { $\vdots$};
        \node[state, prob, target, col2, below right=of select,yshift=1mm] (T2) {\scriptsize$T_n$};
        
        \draw[trans] (start) -- node[below left]{\scriptsize $\nicefrac 1 2$} (gstart);
        \draw[trans] (start) -- node[above left]{\scriptsize$\nicefrac 1 2$} (select);
        \draw[trans] (select) -- (T1);
        \draw[trans] (select) -- (T2);
        
        \node[above=5mm of gstart, xshift=-5mm,inner sep=0] (ul) {};
        \node[above=5mm of gstart, xshift=10mm,inner sep=0] (ur) {};
        \node[below=5mm of gstart, xshift=-5mm,inner sep=0] (ll) {};
        \node[below=5mm of gstart, xshift=10mm,inner sep=0] (lr) {};
        \draw[black,dashed,-] (ul) -- (ur) -- (lr) -- (ll) -- (ul);
        
        % space hack
        \node[below = 14mm of start] {};
    \end{tikzpicture}
        \hspace{2cm}
    \begin{tikzpicture}[node distance=20mm and 25mm, initial text=start,on grid,initial where=below,myArrowStyle]
        \node[circle,draw,initial,align=center,inner sep=1mm] (A) {\scriptsize A: Env.\ \\ \scriptsize chooses $\targets_1$};
        \node[rectangle,draw=black,right =of A,align=center,inner sep=3mm] (B) {\scriptsize B: \Achiever\ \\ \scriptsize chooses $\targets_2$};
        \node[state,prob,below=of B] (penalty) {};
        \node[state,max,target,right=18mm of penalty] (sink) {\scriptsize $\targets$};
        \node[diamond,draw,aspect=1.3,right=28mm of B,align=center,inner sep=-7pt] (C) {\scriptsize  C: \Spoiler\ \\ \scriptsize chooses $n{-}|\targets_2|{-}1$ \\  \scriptsize targets};
        
        \draw[trans,densely dashed] (A) -- (B);
        \draw[trans,densely dashed] (B) -- (penalty);
        \draw[trans] (penalty) -- node[above] {\scriptsize$p(\targets_2)$}(sink);
        \draw[trans] (penalty) edge node[sloped,above] {\scriptsize$1- p(\targets_2)$}(C);
    \end{tikzpicture}
    \caption{
        Left:
        Reduction of qualitative CQs to DQs (Lemma~\ref{lem:reduction-cq-dq}).
        Right:
        Game constructed in the proof of Lemma~\ref{lem:scomp-SS-S-quant}.
        The probability $p(\targets_2)$ increases with $|\targets_2|$.
        \Achiever\ must select exactly $\targets_2 = \targets_1$ in Stage B to avoid visiting at least one of the targets in $\targets$ with maximal probability.
    }
    \label{fig:scomp-SS-S-quant}
\end{figure}

With Lemma~\ref{lem:reduction-cq-dq} and a result of~\cite{FH10} we obtain a lower bound for quantitative reachability:

\begin{restatable}{lemma}{lemscompSSRquant}
    \label{lem:scomp-SS-R-quant}
    In the general-strategies case, quantitative \emph{reachability} DQs need $2^n - 1$ memory.
\end{restatable}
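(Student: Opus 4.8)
The plan is to establish the statement as a two-sided bound. Sufficiency is already available: Lemma~\ref{lem:strat-upper-bound} gives that deterministic strategies with at most $2^n{-}1$ memory states suffice for \emph{mixed} quantitative DQs, hence in particular for quantitative reachability DQs. So the substance of this lemma is the matching lower bound: for every $n$, I must exhibit a stochastic game $\game_n'$ and a quantitative reachability DQ $\query_n'$ of dimension $n$ such that \Achiever\ achieves $\query_n'$ in $\game_n'$ but no achieving strategy uses fewer than $2^n{-}1$ memory states.

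Rather than constructing such a game by hand, I would import it through Lemma~\ref{lem:reduction-cq-dq}(1). The starting point is the result of \cite{FH10} that there is a family of \emph{non-stochastic} games $\game_n$ together with a \emph{qualitative reachability CQ} $\query_n = \bigwedge_{i=1}^n \probability(\reach T_i) \geq 1$ --- equivalently, a generalized reachability condition requiring every $T_i$ to be visited --- for which \Achiever\ wins but every winning strategy of \Achiever\ uses at least $2^n{-}1$ memory states. Since non-stochastic games are a special case of SGs, Lemma~\ref{lem:reduction-cq-dq}(1) applies to each $\game_n$: it produces a game $\game_n'$ (adding the gadget of Figure~\ref{fig:scomp-SS-S-quant}, left) and a quantitative reachability DQ $\query_{n,\mathit{quant}}' = \bigvee_{i=1}^n \probability(\reach T_i) \geq \tfrac{1}{2} + \tfrac{1}{2n}$, together with an identification of \Achiever's strategies in $\game_n$ and $\game_n'$ under which $\maxstrat$ achieves $\query_n$ in $\game_n$ iff $\maxstrat$ achieves $\query_{n,\mathit{quant}}'$ in $\game_n'$.

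It then remains to check that this identification preserves memory in both directions, so that the $2^n{-}1$ lower bound transfers. One direction is trivial: the state space of $\game_n$ embeds into $\game_n'$ and \Achiever\ controls no state of the added gadget (the fresh initial probabilistic state and \Spoiler's target-selection state), so any strategy automaton of size $k$ for \Achiever\ in $\game_n$ is literally a strategy automaton of size $k$ in $\game_n'$. For the converse, given a strategy automaton $\mathcal{M}'$ of size $k$ realizing an achieving strategy for $\query_{n,\mathit{quant}}'$ in $\game_n'$, I restrict its memory-update and next-move functions to the copy of $\game_n$ and fold the single memory update performed along the edge from the new initial state into the initial memory; since \Achiever\ is never asked to move inside the gadget and the gadget is entered at most once, at the very start, this yields a strategy automaton of size at most $k$ realizing the corresponding strategy in $\game_n$. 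Hence, if $\query_{n,\mathit{quant}}'$ were achievable in $\game_n'$ with fewer than $2^n{-}1$ memory states, $\query_n$ would be achievable in $\game_n$ with fewer than $2^n{-}1$ states, contradicting \cite{FH10}; and $\query_{n,\mathit{quant}}'$ \emph{is} achievable in $\game_n'$, via the strategy corresponding to \Achiever's winning $(2^n{-}1)$-memory strategy in $\game_n$. Together with Lemma~\ref{lem:strat-upper-bound}, this gives that $2^n{-}1$ memory is both necessary and sufficient.

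The main obstacle is not the transfer argument, which is essentially bookkeeping once Lemma~\ref{lem:reduction-cq-dq} has supplied the strategy correspondence and the achievability equivalence, but making sure the cited \cite{FH10} bound is available in exactly the form needed: a single family of finite non-stochastic games in which \Achiever's \emph{winning} strategies for an $n$-dimensional generalized reachability condition genuinely require $2^n{-}1$ memory states. Should only a dual formulation be at hand (a memory bound phrased for the safety player), one has to be careful, since the two players' memory requirements in a game need not coincide; in that case an additional step --- recasting the condition and appealing to positional/finite-memory determinacy of $\omega$-regular non-stochastic games --- would be required before invoking Lemma~\ref{lem:reduction-cq-dq}.
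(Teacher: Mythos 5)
Your overall architecture matches the paper's: take the games $\game_n$ from \cite[Lem.~2]{FH10} witnessing a $2^n-1$ memory lower bound for the qualitative reachability CQ, push them through Lemma~\ref{lem:reduction-cq-dq}(1), and observe that the strategy correspondence preserves memory; the gadget bookkeeping you describe is fine. However, there is a genuine gap at the very first step: the lower bound in \cite{FH10} is established for \emph{deterministic} strategies in non-stochastic games, whereas Lemma~\ref{lem:scomp-SS-R-quant} concerns the \emph{general-strategies} case, so you must additionally rule out that \Achiever\ can achieve the qualitative CQ in $\game_n$ with fewer than $2^n-1$ memory states by using \emph{randomization}. No general principle covers this, because memory can in fact sometimes be traded for randomization: the paper exhibits a small MDP in which the CQ $\probability(\reach T_1)\geq 1 \wedge \probability(\reach T_2)\geq 1$ is achievable by a memoryless randomized strategy but by no MD strategy. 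Your proposal treats the \cite{FH10} bound as if it already covered randomized strategies (your closing caveat worries only about a possible dual, safety-player formulation), and this is precisely where the substance of the paper's proof lies.

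Concretely, the paper re-analyzes the ``flower'' games $\game_n$: for a randomized strategy automaton with fewer than $2^n-1$ memory states, it associates to each memory state $m$ the set $S_m$ of petals at which the strategy would end the play \emph{with positive probability}, finds a subset $X\subsetneq\{1,\ldots,n\}$ realized by no $S_m$, and lets \Spoiler\ steer into the symmetric difference $(X\setminus S_m)\cup(S_m\setminus X)$ to defeat the strategy. It also checks the complementary direction, namely that the $(2^n-1)$-memory deterministic strategy remains achieving when \Spoiler\ randomizes (via MD sufficiency for minimizing reachability in the induced finite MDP). Both steps are needed before Lemma~\ref{lem:reduction-cq-dq}(1) can be invoked, and neither appears in your proposal.
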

\begin{proof}[Proof (sketch)]
    \cite[Lem. 2]{FH10} describes a family of (deterministic) games $(\game_n)_{n \geq 1}$ where \Achiever\ needs $2^n - 1$ memory to visit all $n$ targets under \emph{deterministic} strategies, i.e., to achieve a \emph{qualitative reachability CQ}.
    A more detailed analysis of the games $\game_n$ shows that even if both players may use general strategies, \Achiever\ still needs $2^{n}-1$ memory \iftoggle{arxiv}{(Appendix~\ref{app:proof-scomp-SS-R-quant})}{\cite{arxiv}}.
    We stress that this is non-trivial: in some cases, memory \emph{can} be traded for randomization to achieve qualitative DQs.
    A minimal example is an MDP with $\Smax = \{s_0\}$, $\Sprob = \{t_1,t_2\}$, $P(t_1,s_0) = P(t_2,s_0) =1$, $\choices(s_0) = \{t_1,t_2\}$, $T_1 = \{t_1\}$, $T_2 = \{t_2\}$; the CQ $\probability(\reach T_1) \geq 1 \wedge \probability(\reach T_2) \geq 1$ is achievable by a memoryless randomized strategy but not by an MD strategy.
    By Lemma~\ref{lem:reduction-cq-dq}.1, we can reduce each $\game_n$ to an SG with a \emph{quantitative reachability DQ} where \Achiever's winning strategies are the same as those in the original $\game$, i.e., they require at least $2^n{ - }1$ memory.
\end{proof}

We now prove a lower bound for safety DQs.
To the best of our knowledge, unlike Lemma~\ref{lem:scomp-SS-R-quant}, the bound for safety DQs does not easily follow from related works.
Therefore we propose a new construction for this case (shown on the right of Figure~\ref{fig:scomp-SS-S-quant}; see \iftoggle{arxiv}{Appendix~\ref{app:proof-scomp-SS-S-quant}}{\cite{arxiv}} for the proof).

\begin{restatable}{lemma}{lemscompSSSquant}
    \label{lem:scomp-SS-S-quant}
    In the general-strategies case, quantitative \emph{safety} DQs need $2^n - 1$ memory.
\end{restatable}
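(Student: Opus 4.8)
The plan is to exhibit, for each $n\geq 1$, a stochastic game $\game_n$ together with a quantitative safety DQ $\query_n$ of dimension $n$ such that $\query_n$ is achievable in $\game_n$, but every achieving strategy of \Achiever\ requires at least $2^n-1$ memory states. The game is the one sketched on the right of Figure~\ref{fig:scomp-SS-S-quant}. It proceeds in three stages: in Stage~A the probabilistic environment picks a nonempty subset $\targets_1\subseteq\{1,\ldots,n\}$ (say uniformly over all $2^n-1$ nonempty subsets, or over singletons plus a "tie" option—the exact distribution only needs to make Stage~A-states pairwise distinguishable); in Stage~B \Achiever\ picks a subset $\targets_2\subseteq\{1,\ldots,n\}$; then the game moves to a probabilistic penalty state which with probability $p(\targets_2)$ jumps to a harmless absorbing sink inside every safe set, and with probability $1-p(\targets_2)$ passes control to \Spoiler\ in Stage~C, where \Spoiler\ selects $n-|\targets_2|-1$ of the indices and routes the play into the corresponding unsafe sets $\overline{T_i}$. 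The penalty probability $p(\targets_2)$ is chosen to be strictly increasing in $|\targets_2|$, so \Achiever\ is pushed towards making $\targets_2$ large, but Stage~C will always "use up" $\targets_2$ plus whatever \Spoiler\ adds, so the only way to keep some $T_i$ fully safe is to have picked $\targets_2$ that, together with the observed $\targets_1$, leaves exactly one index untouched.

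The key steps, in order, are: (1) Fix the threshold vector $\vec{x}$ and the exact value of $p(\targets_2)$ (a function of $|\targets_2|$ only) so that the DQ $\query_n=\bigvee_{i=1}^n \probability(\safe T_i)\geq x_i$ is achievable precisely when, in Stage~B, \Achiever\ responds to the environment's choice $\targets_1$ by playing $\targets_2=\targets_1$ (or some canonical $\targets_2$ determined by $\targets_1$). Intuitively: if $\targets_2$ is too small, \Achiever\ pays too much in expected "lost safety" because \Spoiler\ in Stage~C has too many free indices to hit; if $\targets_2$ is too large (in particular if it differs from $\targets_1$ in a way that forces $|\targets_2|+|{\text{Stage~C picks}}|$ to cover all $n$ indices), then \emph{no} $T_i$ stays safe with the required probability; the unique sweet spot is $\targets_2=\targets_1$. (2) Show achievability: the strategy "observe $\targets_1$ in Stage~A, play $\targets_2=\targets_1$ in Stage~B" achieves $\query_n$; this is a direct computation of $\probability(\safe T_i)$ for the single index $i$ left free. (3) Show the lower bound: any achieving strategy must, for each of the $2^n-1$ distinct Stage-A histories, commit to the matching $\targets_2$, and these are $2^n-1$ \emph{distinct} moves at Stage-B states; since a strategy automaton must be in a distinct memory state to issue distinct next-moves at states that are otherwise identical (all Stage-B states have the same action set), this forces at least $2^n-1$ memory states. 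Finally, note the construction uses only quantitative \emph{safety} objectives and that the unsafe sets need not be sinks.

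The main obstacle I expect is step (1): calibrating $p(\targets_2)$ and $\vec{x}$ so that the inequality $\probability(\safe T_i)\geq x_i$ holds for some $i$ \emph{exactly} in the intended cases and fails otherwise, uniformly in $n$. One has to verify both directions of the "iff": that $\targets_2=\targets_1$ always leaves a genuine winning index against \emph{every} \Spoiler-strategy (including randomized ones in the general-strategies case, which only helps \Spoiler\ spread unsafe mass, so the worst case is a pure choice), and that every other $\targets_2$ either makes the best achievable safety probability of each index drop below its threshold or is dominated. A secondary subtlety is making the $2^n-1$ Stage-A histories genuinely forcing: \Achiever\ must not be able to "merge" two environment choices into one memory state and still win, so the thresholds must be tight enough that the canonical responses to distinct $\targets_1$'s are pairwise incompatible. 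I would handle this by choosing the $x_i$ so that for each index $i$ there is a \emph{unique} $\targets_1$ whose canonical response leaves $i$ free, and the safety probability of $i$ under any other canonical response is strictly below $x_i$; then distinct $\targets_1$'s demand distinct Stage-B moves, completing the memory lower bound. The remaining computations (values of reachability/safety probabilities along the three stages) are routine geometric-series-free arithmetic given the stage structure.
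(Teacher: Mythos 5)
You have the right skeleton --- the three-stage game of Figure~\ref{fig:scomp-SS-S-quant} with a $|\targets_2|$-dependent coin and a Stage~C granting \Spoiler\ $n-|\targets_2|-1$ picks --- but you have inverted the penalty mechanism, and this breaks the forcing argument that the whole lower bound rests on. In the paper's construction the penalty branch visits \emph{all} targets, spoiling every safety objective, so a penalty probability $p(\targets_2)$ that increases with $|\targets_2|$ pushes \Achiever\ towards \emph{small} $\targets_2$; combined with the Stage-C threat, which forces $\targets_2 \supseteq \targets_1$ (otherwise \Spoiler\ has enough picks to cover everything outside $\targets_1 \cup \targets_2$), the unique optimal response is exactly $\targets_2 = \targets_1$. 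That identity map over the $2^n-1$ possible values of $\targets_1$ is what forces $2^n-1$ memory states. In your version the penalty branch lands in ``a harmless absorbing sink inside every safe set'' and you explicitly say \Achiever\ is ``pushed towards making $\targets_2$ large''. With that sign there is no downward pressure on $|\targets_2|$: the optimal response becomes any sufficiently large superset of $\targets_1$ --- e.g.\ $\targets\setminus\{i\}$ for an arbitrary $i\notin\targets_1$, or even a fixed $(n{-}1)$-element set independent of $\targets_1$ under the literal reading where Stage~C only makes \Spoiler's own picks unsafe. \Achiever\ then needs to remember at most one index outside $\targets_1$, i.e.\ $O(n)$ memory rather than $2^n-1$. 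No calibration of $p$ and $\vec{x}$ can rescue this; the direction of the incentive has to be flipped.

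Two smaller points. First, your plan to choose thresholds ``so that for each index $i$ there is a unique $\targets_1$ whose canonical response leaves $i$ free'' cannot pair up $n$ indices with $2^n-1$ subsets; the paper instead uses a single uniform threshold $x$ for all objectives and fixes it implicitly as $1 - E[X^\minstrat]/n$, where $X^\minstrat$ counts visited targets under the intended strategy against an expectation-maximizing \Spoiler\ --- a symmetry argument that avoids explicit calibration entirely. Second, your claim that randomized \Spoiler\ strategies ``only help spread unsafe mass, so the worst case is a pure choice'' is backwards here: \Spoiler's randomization in Stage~C is precisely what equalizes all reach-probabilities at the value that makes the threshold tight, and the lower bound must also exclude \emph{randomized} low-memory strategies of \Achiever, which the paper handles via the strict monotonicity of the penalty.
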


In summary, quantitative DQs require the \emph{full} goal-unfolding---which is of exponential size in $n$---while qualitative ones do not need memory at all.

\subsection{Strategy Complexity under Deterministic Strategies}
\label{sec:strat_comp_det}

\begin{example}
    \label{ex:det_strat_mem} 
    Consider the SG in Figure~\ref{fig:det-strat-mem-example} (left).
    The qualitative DQ $\probability(\reach T_1) \geq 1 \vee \probability(\reach T_2) \geq 1$ is achievable if \Spoiler\ has to choose deterministically in $s_0$:
    If he chooses the upper path, then \Achiever\ chooses $T_1$ in $s_1$; conversely, if he chooses the lower path, then \Achiever\ moves to $T_2$.
    Clearly, this strategy requires \Achiever\ to use one bit of memory (i.e., memory size $2$) and there is no memoryless strategy that achieves the query.
    This example shows that Lemma~\ref{lem:strat-upper-bound} is not valid in the deterministic-strategies case: memory is needed even for sink queries.
\end{example}

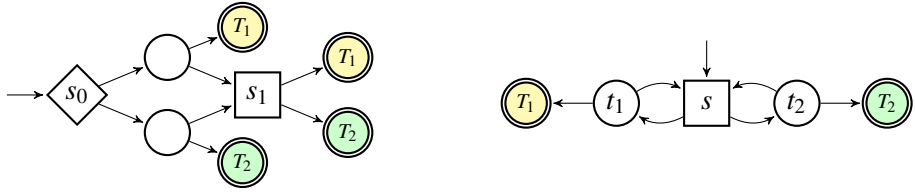
\begin{figure}[t]
    \centering
    \begin{tikzpicture}[node distance = 5mm and 12mm, on grid, initial text=,myArrowStyle]
        \node[min,initial] (s0) {$s_0$};
        \node[prob, above right=of s0] (pa) {};
        \node[prob, below right=of s0] (pb) {};
        \node[prob,target,col1,right=10mm of pa,yshift=4mm] (T1a) {\scriptsize $T_1$};
        \node[prob,target,col2,right=10mm of pb,yshift=-4mm] (T2b) {\scriptsize$T_2$};
        \node[max,below right=of pa] (s1) {$s_1$};
        \node[prob,target,col1,above right=of s1] (T1c) {\scriptsize$T_1$};
        \node[prob,target,col2,below right=of s1] (T2d) {\scriptsize $T_2$};
        
        \draw[trans] (s0) -- node[above] {} (pa);
        \draw[trans] (s0) --node[below] {} (pb);
        \draw[trans] (pa) -- (T1a);
        \draw[trans] (pb) -- (T2b);
        \draw[trans] (pa) -- (s1);
        \draw[trans] (pb) -- (s1);
        \draw[trans] (s1) --node[above] {} (T1c);
        \draw[trans] (s1) -- node[below] {}(T2d);
    \end{tikzpicture}
    \hspace{15mm}
    \begin{tikzpicture}[node distance = 5mm and 12mm, on grid, initial text=,initial where=above,myArrowStyle]
        \node[max,initial] (s) {$s$};
        \node[prob, right=of s] (s1) {$t_2$};
        \node[prob, left=of s] (s0) {$t_1$};
        \node[prob, right=of s1,target,col2] (t1) {\scriptsize$T_2$};
        \node[prob, left=of s0,target,col1] (t0) {\scriptsize$T_1$};
        
        \draw[trans] (s) edge[bend left] (s0);
        \draw[trans] (s0) edge[bend left] (s);
        \draw[trans] (s0) edge (t0);
        \draw[trans] (s) edge[bend right] (s1);
        \draw[trans] (s1) edge[bend right] (s);
        \draw[trans] (s1) edge (t1);
        
        % space hack
        \node[below=10mm of s] () {};
    \end{tikzpicture}
    \caption{
        All probabilities are $0.5$.
        Left:
        If only deterministic strategies are allowed, then \Achiever's achieving strategies for $\probability(\reach T_1) \geq 1 \vee \probability(\reach T_2) \geq 1$ need memory, despite the fact that targets/unsafe sets contain only sinks.
        Right:
        There exists $x$ such that the CQ $\probability(\reach T_1) \geq x \wedge \probability(\reach T_2) \geq 1{-}x$ is achievable under deterministic strategies iff infinite memory is allowed.
    }
    \label{fig:det-strat-mem-example}
\end{figure}

The above example indicates that the deterministic-strategies case is already ``interesting'' for sink queries.
Therefore---due to space limitations of the paper---we have decided to restrict our study of deterministic strategies to sink queries only.
The somewhat counter-intuitive situation that \emph{more} memory is needed if \emph{less general} strategies are allowed is due to the fact that only \Spoiler\ benefits from randomization, but not \Achiever.
In fact, \Achiever\ can in general achieve \emph{more} queries in the deterministic-strategies setting.
Nonetheless, the example in Figure~\ref{fig:det-strat-mem-example} shows that \Achiever's added power does not come for free:
She has to invest more resources (memory) on her side as well.
The following shows that this is necessary in general:

\begin{restatable}{lemma}{lemscompSDSRqual}
    \label{lem:scomp-SD-SR-qual}
    In the deterministic-strategies case, $n \choose n/2$ memory is necessary for qualitative reachability or safety DQs. This holds even for sink queries.
\end{restatable}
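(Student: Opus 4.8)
The plan is to build, for each $n$, a \emph{sink} game $\game_n$ together with a qualitative reachability DQ over $n$ targets such that \Achiever\ wins against every deterministic \Spoiler-strategy, but every winning \Achiever-strategy must distinguish at least $\binom{n}{n/2}$ histories. The natural template to imitate is the lower-bound construction of~\cite[Lem.\ 1]{FH10} for safety DQs in non-stochastic games, which already yields the bound $\binom{n}{\lfloor n/2\rfloor}$; the new ingredient is to reproduce the same combinatorial obstruction in a turn-based \emph{stochastic} setting with \emph{sink} targets, where \Spoiler\ is restricted to deterministic moves. Concretely I would let \Spoiler\ move first through a gadget in which he reveals, one index at a time, a set $\targets_1 \subseteq \{1,\dots,n\}$ of size $n/2$: at each of $n$ consecutive \Spoiler-states he deterministically decides whether the current index is ``in'' or ``out''. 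After this reveal phase the game reaches a probabilistic branching point; with probability split over the indices \emph{not} in $\targets_1$ the play is forced into the corresponding sink target $T_i$ (so those objectives are already satisfied with the mass Adam was compelled to spend), while with the remaining mass the play reaches an \Achiever-controlled state from which she must herself pick an index and be sent to that sink. To win the qualitative DQ $\bigvee_{i=1}^n \probability(\reach T_i)\ge 1$, \Achiever\ must ensure that \emph{with probability $1$} some single target is hit; since Adam's half of the mass is spread across $n/2$ distinct complementary targets, the only way for \Achiever\ to concentrate full mass on one target is to pick an index \emph{inside} $\targets_1$, matching Adam's choice. (Here I would tune the probabilities in the branching gadget exactly as in the introductory Figure~\ref{fig:introExample} reasoning, so that no convex combination of the forced targets can reach a full-probability-one single objective, forcing \Achiever\ into $\targets_1$.)

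The memory lower bound then follows from an adversary / fooling-set argument on histories. I would show that for two distinct size-$n/2$ sets $\targets_1 \neq \targets_1'$ revealed by \Spoiler, the corresponding histories reaching \Achiever's decision state require \emph{different} responses: a memory state that is correct for $\targets_1$ (i.e.\ commits to some $i \in \targets_1$) is incorrect for any $\targets_1'$ with $i \notin \targets_1'$, and since $|\targets_1| = |\targets_1'| = n/2 < n$ one can always choose such a conflicting $\targets_1'$. Hence no two of the $\binom{n}{n/2}$ reveal-histories may be collapsed into the same memory element of a strategy automaton realizing a winning strategy, giving $|M| \ge \binom{n}{n/2}$. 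For the \emph{achievability} half I would exhibit the obvious strategy automaton with exactly one memory element per size-$n/2$ subset: \Achiever\ records Adam's reveal bit-by-bit and, upon reaching her state, plays the stored matching index; a routine check shows this hits a single target with probability $1$ against every deterministic counter-strategy, so the DQ is won. The safety version is obtained dually, replacing each reachability sink target $T_i$ by a safety objective $\safe \overline{\{T_i\}}$ (or equivalently by complementing targets), exactly as safety and reachability are interchanged in the preliminaries; the same counting argument applies verbatim.

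The main obstacle I anticipate is the stochastic tuning of the branching gadget: in the non-stochastic game of~\cite{FH10} the obstruction is purely Boolean (Adam ``kills'' safety objectives by visiting unsafe states), whereas here I must realize the analogous effect via \emph{probabilistic mass that cannot be recombined}, and simultaneously keep all targets \emph{sinks} (so the goal-unfolding is trivial and the bound is not an artifact of unfolding). Getting the probabilities right so that (i) the $n/2$ forced targets each receive strictly positive but insufficient mass for any single-objective $\ge 1$ claim, and (ii) \Achiever's only winning move is to join Adam's half on one of \emph{his} chosen indices, is the delicate point; I expect this to mirror precisely the $s_2$-analysis in Figure~\ref{fig:introExample}, scaled to $n$ alternatives. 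A secondary subtlety is ruling out that \Achiever\ exploits randomization of \emph{Adam within a deterministic counter-strategy} — but since we are in the deterministic-strategies case by hypothesis, Adam is a pure reveal-function and this does not arise. Everything else — the fooling-set counting and the explicit winning automaton — is routine once the gadget is fixed.
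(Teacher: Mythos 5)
There is a genuine gap, and it lies exactly where you flagged your own worry: the probabilistic branching gadget cannot be ``tuned'' to work for a \emph{qualitative} query. In your construction, \Spoiler's half of the probability mass is distributed over sinks each belonging to a \emph{single} target $T_j$ with $j \notin \targets_1$. For a qualitative reachability DQ, \Achiever\ must make \emph{some one} target $T_i$ be reached with probability exactly $1$, which means every positive-probability branch must end in $T_i$. As soon as two distinct indices $j \neq j'$ outside $\targets_1$ each receive positive mass in their own single-target sinks (which happens for all $n \geq 4$), no single target covers all branches, so \emph{no} strategy of \Achiever\ --- with any amount of memory --- achieves the DQ in your game. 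A game in which the query is unachievable witnesses no memory lower bound. The underlying error is importing the quantitative threshold reasoning of Figure~\ref{fig:introExample} (state $s_2$, thresholds $(0.5,0.5)$) into the qualitative setting: as the paper's Lemma~\ref{lem:scomp-SS-M-qual} shows, a qualitative DQ collapses to satisfying a \emph{single} disjunct with probability $1$, so ``spreading mass that cannot be recombined'' buys \Spoiler\ nothing and buys you no obstruction --- it simply kills achievability outright.

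The paper's construction avoids this by making each terminating probabilistic branch visit an entire \emph{set} of targets simultaneously, in three stages: \Spoiler\ picks a $q$-set $\targets_A$ (with $n = 2q$) which is visited wholesale with probability $\nicefrac{1}{2}$; otherwise \Achiever\ picks a $q$-set $\targets_B$, visited with conditional probability $\nicefrac{1}{2}$; otherwise \Spoiler\ picks and visits a $(q{+}1)$-set $\targets_C$. A target is reached with probability $1$ iff it lies in $\targets_A \cap \targets_B \cap \targets_C$, and the counting fact $q + (q{+}1) > n$ guarantees this intersection is nonempty precisely when \Achiever\ replicates $\targets_B = \targets_A$; any other choice lets \Spoiler\ pick $\targets_C$ disjoint from $\targets_A \cap \targets_B$. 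Remembering which of the ${n \choose q}$ sets \Spoiler\ revealed then gives the bound. Your fooling-set counting at the end is the right shape of argument, and your safety adaptation is in the right spirit (the paper uses $T' = T \cup (S \setminus \mathit{Sinks}(\game))$), but both rest on a gadget that first has to force a set-matching behaviour that is actually achievable --- yours does not.
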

\begin{proof}[Proof (sketch)]
    We sketch the construction for qualitative reachability DQs. Let $n = 2q$ be even.
    The game comprises three stages, A, B, and C.
    In the initial Stage A, Adam specifies a combination $\targets_A$ of $q$ different targets.
    Those are then visited with probability $\nicefrac 1 2$ and the game ends.
    With the remaining probability of $\nicefrac 1 2$, the game moves on to Stage B which is similar to stage A, but controlled by \Achiever.
    Let $\targets_B$ be the set of $q$ targets that \Achiever\ specifies in this stage.
    Consequently, with total probability $\nicefrac 1 4$, the game enters the final stage C where Adam chooses and visits $q+1$ different targets $\targets_C$.
    It follows that a target is visited with probability $1$ iff it is chosen in all three stages.
    The only achieving strategy of \Achiever\ consists in selecting exactly $\targets_B = \targets_A$, requiring ${n \choose q}$ memory.
    See \iftoggle{arxiv}{Appendix~\ref{app:proof-scomp-SD-SR-qual}}{\cite{arxiv}} for the remaining details and the adaptation of the construction to safety.
\end{proof}

We consider quantitative DQs next.
The following lemma is the quantitative-bounds version of Lemma~\ref{lem:reduction-cq-dq}.
The difference is, however, that the reduction only works under \emph{deterministic} strategies (see~\iftoggle{arxiv}{Appendix~\ref{app:proof-reduction-cq-dq-quant}}{\cite{arxiv}} for the proof).

\begin{restatable}{lemma}{lemreductioncqdqquant}
    \label{lem:reduction-cq-dq-quant}
    For every SG $\game$ with quantitative reachability CQ $\query = \bigwedge_{i=1}^n \probability(\reach T_i) \geq x_i$, there exists a game $\game'$ (where \Achiever's strategies $\maxstrat$ are in one-to-one correspondence) and a quantitative reachability DQ $\query'$ such that, under the assumption of \emph{deterministic} strategies, $\sigma$ achieves $\query$ in $\game $ iff $\sigma$ achieves $\query'$ in $\game'$.
\end{restatable}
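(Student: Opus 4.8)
The plan is to reuse the $\nicefrac{1}{2}$-gadget of Lemma~\ref{lem:reduction-cq-dq} but to calibrate the thresholds of the resulting DQ so that they encode the individual bounds $x_i$ of the CQ. Concretely, I would build $\game'$ from $\game$ by adding a fresh probabilistic initial state $s_0'$ with $P(s_0',\sinit) = P(s_0',s_{\mathrm{sel}}) = \nicefrac{1}{2}$, where $s_{\mathrm{sel}}$ is a fresh \Spoiler-state whose actions lead to fresh sink states $r_1,\dots,r_n$. I put $r_i$ into the $i$-th target set only, i.e.\ I work with $T_i' \eqdef T_i \cup \{r_i\}$ and keep $r_i$ out of every $T_k$ with $k \neq i$. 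The new query is $\query' \eqdef \bigvee_{i=1}^n \probability(\reach T_i') \geq \tfrac{x_i+1}{2}$; its thresholds lie in $(\nicefrac{1}{2},1] \subseteq (0,1]$, so $\query'$ is a legal quantitative reachability DQ. Since \Achiever\ takes no decisions outside the embedded copy of $\game$ (the added states are probabilistic, controlled by \Spoiler, or sinks), her strategies in $\game'$ are in one-to-one correspondence with her strategies in $\game$, and this correspondence preserves both determinism and memory size.

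To establish the equivalence, I would fix a deterministic \Achiever\ strategy $\maxstrat$. Every deterministic \Spoiler\ strategy $\minstrat'$ in $\game'$ decomposes into a deterministic \Spoiler\ strategy $\minstrat$ inside $\game$ together with a single index $j$ chosen at $s_{\mathrm{sel}}$. Because the two branches out of $s_0'$ are disjoint and $r_i$ is reachable only via the selection branch and only when \Spoiler\ picks $i$, we get $\probability^{\maxstrat,\minstrat'}(\reach T_i') = \tfrac{1}{2}\,\probability^{\maxstrat,\minstrat}(\reach T_i)$ for $i \neq j$, and $\probability^{\maxstrat,\minstrat'}(\reach T_j') = \tfrac{1}{2}\,\probability^{\maxstrat,\minstrat}(\reach T_j) + \tfrac{1}{2}$. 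The first quantity is at most $\nicefrac{1}{2}$, which is strictly below the threshold $\tfrac{x_i+1}{2}$ since $x_i > 0$; hence for $i \neq j$ the $i$-th disjunct is never satisfied, so $\query'$ holds under $(\maxstrat,\minstrat')$ iff its $j$-th disjunct does, i.e.\ iff $\probability^{\maxstrat,\minstrat}(\reach T_j) \geq x_j$. Quantifying over all deterministic $\minstrat'$ -- equivalently over all deterministic $\minstrat$ in $\game$ and all indices $j$ -- then shows that $\maxstrat$ achieves $\query'$ in $\game'$ iff $\probability^{\maxstrat,\minstrat}(\reach T_i) \geq x_i$ holds for every deterministic $\minstrat$ and every $i$, which is exactly the statement that $\maxstrat$ achieves the CQ $\query$ in $\game$ (with $\maxstrat$ fixed, \Spoiler\ may attack each conjunct by a separate deterministic counter-strategy, so the universal quantifiers over $\minstrat$ and over conjuncts commute).

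The delicate points I anticipate are: (i) verifying that ``reach $T_i'$ in the $\game$-branch'' and ``reach $r_i$ in the selection branch'' are disjoint events, so that their probabilities simply add -- this is precisely why $r_i$ must be fresh and excluded from every $T_k$, $k \neq i$; and (ii) the strictness $\nicefrac{1}{2} < \tfrac{x_i+1}{2}$, which is what forces every winning play through the branch that \Spoiler\ selects, and is also the reason the reduction fails once \Spoiler\ may randomize at $s_{\mathrm{sel}}$: a randomizing \Spoiler\ can spread the mass $\nicefrac{1}{2}$ over several selection targets, destroying the ``dead disjuncts'' argument, and for pairwise distinct $x_i$ no fixed threshold vector then recaptures the CQ (for uniform thresholds one can still patch this, as in Lemma~\ref{lem:reduction-cq-dq}). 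I expect (ii) -- pinning down why randomization breaks the equivalence and phrasing the claim so it is not over-strong -- to be the main subtlety; everything else is routine bookkeeping. As a sanity check, specializing to $\vec{x} = (1,\dots,1)$ yields $\vec{y} = (1,\dots,1)$ and recovers exactly the deterministic-strategies half of Lemma~\ref{lem:reduction-cq-dq}.
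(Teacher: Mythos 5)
Your construction and thresholds are exactly the paper's: the $\nicefrac{1}{2}$-gadget in which \Spoiler\ deterministically selects one target $T_j$ with the remaining probability mass, the DQ thresholds $\tfrac{x_i+1}{2} = \tfrac{1}{2}x_i + \tfrac{1}{2}$, and the observation that all disjuncts $i \neq j$ are dead because $x_i > 0$ makes $\nicefrac{1}{2}$ strictly below the threshold. The argument is correct and essentially identical to the paper's proof, merely phrased as a single per-strategy equivalence rather than two separate directions.
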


We can use the previous reduction to show that in general, infinite memory is necessary for achieving quantitative DQs under deterministic strategies.

\begin{lemma}
    \label{lem:mdp-inf-mem}
    For the MDP from Figure~\ref{fig:det-strat-mem-example} (right), there exists $x \in [0,1]$ such that the CQ $\probability(\reach T_1) \geq x \wedge \probability(\reach T_2) \geq 1{-}x$ is only achievable by an infinite-memory strategy.
\end{lemma}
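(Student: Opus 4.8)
The plan is to analyse this MDP directly; it is small enough for an explicit computation. Note first that \Spoiler\ plays no role here (it is a genuine MDP with $\Smin = \emptyset$), so the only choices are \Achiever's at $s$. Every play that has not yet been absorbed in $T_1$ or $T_2$ has the form $s\,t_{i_1}\,s\,t_{i_2}\,s\cdots$, so a deterministic strategy $\maxstrat$ of \Achiever\ is completely described by the sequence $\vec i = i_1 i_2 \cdots \in \{1,2\}^\omega$ of choices it makes at the successive visits to $s$. Since the $k$-th visit to $s$ occurs with probability $2^{-(k-1)}$ and from $t_{i_k}$ one is absorbed in $T_{i_k}$ with probability $\nicefrac12$, an easy computation gives $\probability^{\maxstrat}(\reach T_j) = \sum_{k \,:\, i_k = j} 2^{-k}$ for $j \in \{1,2\}$; in particular $\probability^{\maxstrat}(\reach T_1) + \probability^{\maxstrat}(\reach T_2) = \sum_{k \geq 1} 2^{-k} = 1$.

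Now I would fix $x$ to be any irrational number in $(0,1)$ and claim it has the stated property. Because the two reachability probabilities always sum to $1$, a strategy $\maxstrat$ achieves the CQ $\probability(\reach T_1) \geq x \wedge \probability(\reach T_2) \geq 1-x$ if and only if $\probability^{\maxstrat}(\reach T_1) = x$ (exactly). For achievability by \emph{some} strategy, let $x = \sum_{k\geq 1} b_k 2^{-k}$ be the binary expansion of $x$ and take the (infinite-memory) strategy that, at the $k$-th visit to $s$, uses a counter over $\Naturals$ to move to $t_1$ if $b_k = 1$ and to $t_2$ otherwise; by the formula above this yields $\probability(\reach T_1) = x$, so the CQ is achievable.

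It remains to rule out finite memory. If $\maxstrat$ is a finite-memory deterministic strategy, then the induced chain $\game^{\maxstrat}$ is a \emph{finite} Markov chain all of whose transition probabilities are rational, hence $\probability^{\maxstrat}(\reach T_1)$ is rational (it is a solution of a linear system over $\Rationals$). Since $x$ is irrational, $\probability^{\maxstrat}(\reach T_1) \neq x$, so by the equivalence above $\maxstrat$ does not achieve the CQ. Hence the CQ is achievable, but only by infinite-memory strategies.

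I expect the proof to be essentially routine; the one point that deserves a sentence of care is that finite memory forces a rational reachability probability even under the paper's liberal notion of memory that allows stochastic memory updates. This still holds because $\game^{\maxstrat}$ remains a finite Markov chain with rational probabilities (alternatively: the support of the memory distribution at successive visits to $s$ evolves by a fixed map on the finite powerset of $M$, so the realised choice sequence $\vec i$ is ultimately periodic and $\probability^{\maxstrat}(\reach T_1)$ is again rational). Keeping this bookkeeping straight, rather than any conceptual difficulty, is the main obstacle.
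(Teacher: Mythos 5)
Your proof is correct, and it shares the paper's overall skeleton (identify a deterministic strategy with its choice sequence $i_1 i_2 \cdots$ at $s$, observe that $\probability^{\maxstrat}(\reach T_1)$ is the binary number determined by that sequence and that the two reachability probabilities sum to $1$, so achieving the CQ forces $\probability^{\maxstrat}(\reach T_1) = x$ exactly), but it diverges at the decisive step. The paper fixes the concrete aperiodic value $x = (0.1^1 0 1^2 0 1^3 0 \cdots)_2$ and runs a pumping argument: a finite strategy automaton must reach the same memory state after two distinct prefixes of the forced choice sequence, and since the continuation from a memory state is determined, the sequence would have to be eventually periodic, a contradiction. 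You instead take $x$ to be an arbitrary irrational and argue that any finite-memory deterministic strategy induces a finite Markov chain with rational transition probabilities, whose reachability probabilities are rational; this is a valid and arguably cleaner route that works uniformly for every irrational threshold and sidesteps the combinatorics of a particular witness, at the cost of invoking the (standard) fact about rational solutions of the linear system for reachability in finite Markov chains. Both arguments also need the implicit achievability half, which you make explicit via the counter strategy following the binary expansion of $x$. One small remark: the paper's default notion of memory size is defined via automata with \emph{deterministic} memory update (stochastic update is introduced as a separate variant), so your closing caveat about stochastic updates is not needed for the lemma as stated, though your fallback argument there is essentially sound.
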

\begin{proof}
    Every deterministic strategy $\maxstrat$ can be identified with an infinite string $\maxstrat = \maxstrat_1 \maxstrat_2 \ldots \in \{0,1\}^\omega$ where for all $i \geq 1$, $\maxstrat_i = 1$ ($\maxstrat_i = 0$) indicates that $\maxstrat$ moves to $t_1$ ($t_2$, resp.) when $s$ is entered for the $i$-th time.
    Clearly, $\probability^{\maxstrat}(\reach T_1) = (0.\maxstrat)_2$, i.e., the probability to reach $T_1$ is equal to the real number whose decimal binary representation is the infinite string $0.\maxstrat$.
    %
%    The above CQ with $x = (0.1^101^201^301^40...)_2$ can only be achieved by a strategy that uses an unbounded counter,  requiring infinite memory.
    %
    We claim that the above CQ $\probability(\reach T_1) \geq x \wedge \probability(\reach T_2) \geq 1{-}x$ with $x = (0.1^101^201^301^40...)_2$ can only be achieved by a strategy that uses infinite memory.
    If not, then let $\mathcal{M}$ be a finite-state strategy automaton that achieves the CQ.
    Since $\mathcal{M}$ is finite, there exist two distinct prefixes $\path_1 \neq \path_2$ of $1^101^201^301^40...$ such that after reading $\path_1$ or $\path_2$, the automaton $\mathcal{M}$ is in the same memory state $m$.
    Suppose that from $m$ on, $\mathcal{M}$ plays action sequence $\pi \in \{0,1\}^\omega$.
    Since $\mathcal{M}$ is achieving, we must have that $\pi_1 \pi = \pi_2 \pi = 1^101^201^301^40...$ which, however, implies $\pi_1 = \pi_2$, contradiction.
\end{proof}

\begin{corollary}
    \label{cor:scomp:SD-SR-quant}
    In general, infinite memory is necessary for achieving quantitative reachability or safety DQs under deterministic strategies. This holds even for sink queries.
\end{corollary}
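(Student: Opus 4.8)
The plan is to derive Corollary~\ref{cor:scomp:SD-SR-quant} by combining the preceding two results---first for reachability, and then obtaining safety essentially for free. Lemma~\ref{lem:mdp-inf-mem} already supplies an MDP (hence an SG) $\game$ together with a quantitative reachability \emph{sink} CQ $\query = \probability(\reach T_1) \geq x \wedge \probability(\reach T_2) \geq 1{-}x$ that is achievable under deterministic strategies, but only by infinite-memory strategies. Feeding this $\game$ and $\query$ into Lemma~\ref{lem:reduction-cq-dq-quant} produces a game $\game'$ and a quantitative reachability DQ $\query'$ together with a one-to-one correspondence of \Achiever's strategies under which $\maxstrat$ achieves $\query$ in $\game$ iff $\maxstrat$ achieves $\query'$ in $\game'$. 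First I would note that this correspondence maps finite-memory strategies to finite-memory strategies in both directions---$\game'$ differs from $\game$ only by a fixed \emph{finite} gadget, so wrapping or unwrapping a strategy changes its memory size by at most a constant factor---whence no finite-memory strategy achieves $\query'$. Since the gadget only introduces fresh sink targets, $\query'$ is moreover a sink query. This establishes the reachability half of the corollary.

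For the safety half I would not build a new game but instead rewrite $\query'$ as a logically equivalent \emph{safety} DQ. The key observation is that in the MDP of Figure~\ref{fig:det-strat-mem-example} (right) every play is almost surely absorbed in \emph{exactly one} of the two sinks $T_1,T_2$, so $\probability(\reach T_1) + \probability(\reach T_2) = 1$ holds in every induced Markov chain; and this invariant survives the reduction of Lemma~\ref{lem:reduction-cq-dq-quant}, since a play either enters the embedded copy of $\game$---reaching exactly one of the two corresponding targets---or enters the finite target-selection gadget, where again exactly one target is reached. Consequently, writing $T_1', T_2'$ for the two (sink) targets of the DQ $\query' = \probability(\reach T_1') \geq a \vee \probability(\reach T_2') \geq b$, we have $\probability(\reach T_1') + \probability(\reach T_2') = 1$ in every Markov chain induced by $\game'$, so, using $\probability(\safe U) = 1 - \probability(\reach \overline{U})$, the DQ $\query'$ imposes exactly the same constraint there as the safety DQ $\query'' = \probability(\safe \overline{T_2'}) \geq a \vee \probability(\safe \overline{T_1'}) \geq b$. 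Hence $\maxstrat$ achieves $\query''$ iff it achieves $\query'$, so $\query''$ too requires infinite memory; and since $\overline{T_2'}$ and $\overline{T_1'}$ differ from the full state space only in sinks, $\query''$ is a sink safety DQ, which finishes the proof.

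I expect the only non-routine point---and the reason Lemma~\ref{lem:reduction-cq-dq-quant} is phrased as an honest \emph{bijection} of \Achiever's strategies rather than a mere equi-achievability statement---to be the check that the reduction does not ``launder'' the necessary infinite memory into its finite gadget; this is precisely where the explicit correspondence is used. Everything else is elementary: that the MDP of Lemma~\ref{lem:mdp-inf-mem} reaches one of its two sinks almost surely, that this almost-sure absorption is preserved by the reduction, and the specialisation of reachability/safety duality under the constraint $\probability(\reach T_1') + \probability(\reach T_2') = 1$.
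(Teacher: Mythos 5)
Your proposal is correct and follows essentially the same route as the paper: apply the reduction of Lemma~\ref{lem:reduction-cq-dq-quant} to the MDP of Lemma~\ref{lem:mdp-inf-mem} for reachability, and exploit the almost-sure absorption in exactly one of the two sinks to trade the reachability objectives for safety ones. You merely spell out two points the paper leaves implicit---that the strategy correspondence preserves finite memory, and that the absorption invariant survives the reduction gadget---both of which check out.
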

\begin{proof}
    Apply the reduction from Lemma~\ref{lem:reduction-cq-dq-quant} to the MDP from Lemma~\ref{lem:mdp-inf-mem}.
    For safety notice that in the MDP, $\probability(\safe \overline{T}_1) \geq x$ iff $\probability(\reach T_2) \geq x$.
\end{proof}

\section{Computational Complexity}
\label{sec:ccomp}
In this section, we study the complexity of the achievability problem for DQs in the standard semantics, i.e., the decision problem ``$\exists  \maxstrat \forall \minstrat \colon \bigvee_{i=1}^n \probability^{\maxstrat,\minstrat}(\genobj_i) \geq x_i$'' in some given game.
We consider the same variations of the problem as in Section~\ref{sec:strat_comp}, that is, qualitative vs.\ quantitative DQs, reachability vs.\ safety queries and deterministic vs.\ general strategies.
For the complexity theoretic results, we assume that all transition probabilities in the games and thresholds in the queries are rational numbers given as binary-encoded integer pairs.

We again briefly discuss the case of purely deterministic games on graphs.
\cite[Theorem~1]{FH10} shows that, in deterministic games, qualitative safety DQs are $\PSPACE$-complete via a reduction from quantified Boolean formulas.
Reachability DQs, as mentioned in Section \ref{sec:strat_comp_gen}, can be reduced to solving a standard single-target reachability game which can be solved in $\P$.
We now present our results in detail, first for general (Section~\ref{sec:ccomp-gen-strat}) and then for deterministic strategies (Section~\ref{sec:ccomp-det-strat}). Table~\ref{tab:ccomp} summarizes the results.

\begin{theorem}
    The complexity bounds in Table~\ref{tab:ccomp} are correct.
\end{theorem}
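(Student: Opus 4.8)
The plan is to prove matching upper and lower bounds for each entry of Table~\ref{tab:ccomp}, organised by the same case distinctions as in Section~\ref{sec:strat_comp} (qualitative vs.\ quantitative DQs, reachability vs.\ safety, general vs.\ deterministic strategies). Throughout, upper bounds are obtained by guess-and-verify arguments whose cost is governed by the strategy-complexity results of Section~\ref{sec:strat_comp}, and the hardness results come from reductions from canonical problems.

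\textbf{Upper bounds, general strategies.} For \emph{qualitative} DQs, Lemma~\ref{lem:scomp-SS-M-qual} reduces achievability of $\bigvee_{i=1}^n \probability(\genobj_i T_i) \geq 1$ to achievability of a single objective $\probability(\genobj_i T_i) \geq 1$; each of the latter is a classical qualitative reachability/safety game on $\game$, solvable in polynomial time, so iterating over $i \in \{1,\dots,n\}$ places this case in $\P$. For \emph{quantitative} DQs, Lemma~\ref{lem:strat-upper-bound} gives \Achiever\ a deterministic achieving strategy with at most $2^n{-}1$ memory, i.e.\ an MD strategy in the goal-unfolding $\unf(\game,\query)$. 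We guess such a strategy and verify it: \Achiever\ wins iff \Spoiler\ \emph{cannot} simultaneously violate all objectives in the induced MDP, i.e.\ iff the conjunctive strict-upper-bound reachability/safety problem is infeasible there, which is decidable by linear programming~\cite{EKVY08}. Materialising the (exponential) unfolding and running this verification yields the bound stated in the table; for \emph{sink} DQs the unfolding is trivial, MD strategies give a polynomial-size witness, and the complexity drops accordingly.

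\textbf{Lower bounds and the deterministic-strategies case.} For \emph{qualitative} DQs under deterministic strategies we prove $\PSPACE$-hardness by a reduction from validity of quantified Boolean formulas, adapting~\cite[Theorem~1]{FH10} from deterministic safety games to the stochastic disjunctive setting and using stochastic branching (as in the constructions of Section~\ref{sec:strat_comp}) to tie together the quantifier blocks; the matching $\PSPACE$ upper bound comes from a recursive, alternation-style exploration of the sink game up to polynomial depth. For \emph{quantitative reachability} DQs under deterministic strategies we prove \emph{undecidability}: combining the binary-expansion MDP of Lemma~\ref{lem:mdp-inf-mem} with the CQ-to-DQ reduction of Lemma~\ref{lem:reduction-cq-dq-quant}, we encode an undecidable problem (e.g.\ halting of a two-counter machine, routed through a probability-comparison problem for MDPs) such that a deterministic achieving strategy exists iff the instance is positive; Corollary~\ref{cor:scomp:SD-SR-quant}, which rules out finite-memory witnesses, is the structural reason this can work. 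Safety variants are obtained throughout by dualising objectives, exactly as in Corollary~\ref{cor:scomp:SD-SR-quant}.

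\textbf{Main obstacle.} The two delicate points are: (i) pinning down the \emph{exact} class for quantitative DQs under general strategies---here the goal-unfolding is exponential and Lemmas~\ref{lem:scomp-SS-R-quant} and~\ref{lem:scomp-SS-S-quant} show the required memory $2^n{-}1$ is unavoidable, so a naive guess is expensive and one must argue carefully whether the structured ``subset of visited targets'' memory admits a cheaper certificate or, conversely, a matching hardness proof; and (ii) making the undecidability reduction survive the simultaneous restriction to DQs \emph{and} deterministic strategies, which is precisely where Lemmas~\ref{lem:mdp-inf-mem} and~\ref{lem:reduction-cq-dq-quant} carry the argument.
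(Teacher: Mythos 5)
Your overall organisation matches the paper's, and several pieces are essentially the paper's own arguments: the $\P$ bound for qualitative DQs under general strategies via Lemma~\ref{lem:scomp-SS-M-qual}, the $\NEXP$ guess-and-verify in the goal-unfolding using~\cite{EKVY08}, and the QBF reduction for $\PSPACE$-hardness under deterministic strategies. However, three entries of Table~\ref{tab:ccomp} are not actually established by your argument. First, you never prove the ``$\geq \PSPACE$'' entry for \emph{quantitative} DQs under \emph{general} strategies; you only flag it as a delicate point in your ``main obstacle'' paragraph. The paper proves it concretely (Lemmas~\ref{lem:comp-SS-R-quant} and~\ref{lem:comp-SS-S-quant}) by reducing from the $\PSPACE$-hard qualitative reachability CQ problem in MDPs~\cite[Lem.~2]{RRS17} via Lemma~\ref{lem:reduction-cq-dq}.1, and, for safety, by first showing that \emph{strict-bound} reachability CQs in MDPs are $\PSPACE$-hard and then dualising. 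Without some such reduction this row of the table is unproven.

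Second, your claimed ``matching $\PSPACE$ upper bound'' for qualitative DQs under deterministic strategies is wrong on two counts: the table itself only asserts $\leq \EXPTIME$ and only for \emph{safety} (reachability is left open), and the alternation-style exploration you describe does not give $\PSPACE$. The paper's machine (Lemma~\ref{lem:comp-SD-S-qual}) is an \emph{alternating} polynomially space-bounded Turing machine that must additionally backtrack over both successors of every probabilistic state via a stack in order to certify that all unsafe sets are hit with positive probability along possibly different probabilistic branches; alternating $\PSPACE$ equals $\EXPTIME$, not $\PSPACE$. Third, your undecidability argument for quantitative reachability DQs under deterministic strategies conflates the need for infinite memory (Lemma~\ref{lem:mdp-inf-mem}, Corollary~\ref{cor:scomp:SD-SR-quant}) with undecidability---requiring infinite memory does not by itself make a problem undecidable, and a fresh two-counter-machine encoding is neither given nor needed. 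The paper instead simply cites the known undecidability of quantitative reachability \emph{CQs} under deterministic strategies~\cite[Theorem~3]{cfk13} and transfers it through Lemma~\ref{lem:reduction-cq-dq-quant}; that citation is the missing ingredient that makes this entry a one-line corollary rather than a new construction.
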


\begin{table}[t]
    \caption{
        Complexity of the feasibility problem for DQs in the standard semantics.
        Column \emph{``SG with deterministic strats.''} applies exclusively to sink queries.
    }
    \label{tab:ccomp}
    \centering
    {
        \setlength\tabcolsep{4pt}
        \def\arraystretch{1.1}
    \begin{tabular}{l  c  c  c  c   c c}
        \toprule
%         & \multicolumn{4}{c}{\emph{Stochastic game}} & \multicolumn{2}{c}{\emph{Deterministic game}} \\
         
          & \multicolumn{2}{c}{\emph{SG with general strats.}} & \multicolumn{2}{c}{\emph{SG with deterministic strats.}} & \multicolumn{2}{c}{\emph{Non-SG with deterministic strats.}} \\
        
         & \multicolumn{2}{c}{$\safe$ /  $\reach$} & \multicolumn{2}{c}{$\safe$ /  $\reach$} & $\safe$ & $\reach$\\
        
        \midrule
        
        \emph{Qual.}   & \multicolumn{2}{c}{$\P$~[Lem.~\ref{lem:comp-SS-qual}]} & \multicolumn{2}{c}{$\geq \PSPACE$~[Lem.~\ref{lem:comp-SD-SR-qual}] } & $\PSPACE$~\cite{FH10} & $\P$~[trivial] \\
        & & & \multicolumn{2}{c}{$\leq \EXPTIME$~[Lem.~\ref{lem:comp-SD-S-qual}] / ?} & &  \\
        
        \midrule
        
        \emph{Quant.}  & \multicolumn{2}{c}{$\geq \PSPACE$~[Lem.~\ref{lem:comp-SS-S-quant},\ref{lem:comp-SS-R-quant}]} & \multicolumn{2}{c}{undecidable~[Lem.~\ref{lem:comp-SD-R-quant}]} & \multicolumn{2}{c}{---\emph{not applicable}---} \\
        & \multicolumn{2}{c}{$\leq \NEXP$} & \multicolumn{2}{c}{} \\
 
        \bottomrule
    \end{tabular}
    } 
\end{table}

\subsection{Computational Complexity under General Strategies}
\label{sec:ccomp-gen-strat}

\begin{lemma}
    \label{lem:comp-SS-qual}
    In the general-strategies case, qualitative mixed DQs are decidable in $\P$.
\end{lemma}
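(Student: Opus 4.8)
The plan is to reduce the achievability of a qualitative mixed DQ to a polynomial number of single-objective qualitative reachability/safety queries in stochastic games, each of which is solvable in $\P$ by~\cite{Condon90}. By Lemma~\ref{lem:scomp-SS-M-qual}, in the general-strategies case a strategy of \Achiever\ achieves $\bigvee_{i=1}^n \probability(\genobj_i T_i) \geq 1$ if and only if it achieves a single disjunct $\probability(\genobj_i T_i) \geq 1$ for some fixed $i$. Hence $\query$ is achievable in $\game$ if and only if there exists some $i \in \{1,\ldots,n\}$ such that the single-dimensional qualitative query $\probability(\genobj_i T_i) \geq 1$ is achievable (here we also use that single-dimensional SGs are determined, so that ``achievable'' for the single disjunct is just the usual almost-sure winning question). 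This gives a decision procedure: for each $i$, check achievability of $\probability(\genobj_i T_i) \geq 1$, and answer ``yes'' iff at least one check succeeds.

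First I would make precise the reduction of each single disjunct to a standard game-solving problem. For a reachability disjunct $\probability(\reach T_i) \geq 1$, this is exactly the question whether \Achiever\ has a strategy to reach $T_i$ almost surely against all \Spoiler\ strategies; for a safety disjunct $\probability(\safe T_i) \geq 1 = \probability(\reach \overline{T_i}) = 0$, it is the dual question whether \Achiever\ can avoid $\overline{T_i}$ almost surely, which is again an almost-sure reachability question from \Spoiler's perspective or, more directly, an almost-sure safety computation. Both are instances of computing the set of states from which a player wins a qualitative (limit-sure = almost-sure, since these objectives are Borel and the games are finite turn-based) reachability/safety game with probabilistic vertices; this is classically in $\P$ — one iterates the standard attractor-style fixpoint over the game graph (see~\cite{Condon90}) — and only requires the graph structure, not the precise transition probabilities. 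I would then just check whether $\sinit$ lies in the winning region for at least one $i$.

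The second ingredient to spell out is why the disjunction collapses to "one disjunct must be achievable by a \emph{single} strategy," i.e.\ the content already granted by Lemma~\ref{lem:scomp-SS-M-qual}: if \Achiever\ fixes $\sigma$ and \Spoiler\ best-responds, then $\probability^{\sigma,\tau}(\genobj_i T_i)$ for a fixed $i$ is minimized by a counter-strategy $\tau$ that does not depend on how close to $1$ the other coordinates get; since the bounds are all exactly $1$, \Spoiler\ can drive \emph{every} disjunct that is not almost-surely secured below $1$ simultaneously, so \Achiever's only hope is to secure one disjunct outright. This is precisely the equivalence stated (and proved in the full version) in Lemma~\ref{lem:scomp-SS-M-qual}, so in the proof I would simply invoke it.

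The only mild subtlety — and the place I would be most careful — is that $\query$ is \emph{mixed}, so the $n$ subroutines are not all of the same kind: some are almost-sure reachability games, some are almost-sure safety games. One must check that the $\P$ upper bound applies uniformly to both, which it does: almost-sure safety for \Achiever\ with respect to $T_i$ is almost-sure avoidance of $\overline{T_i}$, computable by the same polynomial fixpoint machinery. Everything else is bookkeeping: $n$ is at most the size of the input, each subroutine runs in polynomial time, and the overall ``or'' over the $n$ answers is trivially polynomial. I expect no real obstacle here — the main work was already done in establishing Lemma~\ref{lem:scomp-SS-M-qual}; this lemma is essentially a corollary plus the classical fact that qualitative single-objective SGs are in $\P$.
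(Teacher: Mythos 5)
Your proposal is correct and follows essentially the same route as the paper: invoke Lemma~\ref{lem:scomp-SS-M-qual} to reduce the qualitative DQ to $n$ independent single-objective almost-sure reachability/safety checks, each solvable in polynomial time (the paper cites the qualitative simple-stochastic-game algorithm of Etessami and Yannakakis where you cite~\cite{Condon90}, but this is immaterial). The extra care you take about mixed queries and about the determinacy of single-dimensional SGs is sound and consistent with the paper's argument.
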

\begin{proof}
    As in Lemma~\ref{lem:scomp-SS-M-qual}, in the qualitative case it suffices to check for each objective $\genobj_i T_i$, $i = 1,\ldots,n$, $\genobj_i \in \{\reach, \safe\}$, individually whether \Achiever\ can satisfy it with probability $1$.
    Hence $n$ queries to a polynomial time algorithm for qualitative simple stochastic games~\cite{DBLP:conf/stacs/EtessamiY06} suffice.
\end{proof}

\begin{lemma}
    \label{lem:comp-SS-R-quant}
    In the general-strategies case, quantitative \emph{reachability} DQs are $\PSPACE$-hard.
\end{lemma}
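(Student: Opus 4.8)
The plan is to reduce from the achievability problem for \emph{qualitative reachability conjunctive queries} in stochastic games, i.e., deciding whether $\exists\maxstrat\,\forall\minstrat\colon \bigwedge_{i=1}^n \probability^{\maxstrat,\minstrat}(\reach T_i)\geq 1$ holds in a given SG. I claim this problem is already $\PSPACE$-hard on \emph{deterministic} games (SGs with $\Sprob=\emptyset$). Indeed, on a deterministic game a strategy of \Achiever\ achieves $\bigwedge_i \probability(\reach T_i)\geq 1$ exactly when it forces every resulting play to visit each of $T_1,\dots,T_n$, which is the \emph{generalized reachability} winning condition. Equivalently, by determinacy of deterministic two-player games with $\omega$-regular objectives, \Achiever\ fails to achieve this query iff \Spoiler\ can force a play that avoids some $T_i$ forever, i.e., iff \Spoiler\ achieves the qualitative \emph{safety} DQ $\bigvee_i \probability(\safe \overline{T_i})\geq 1$ (after relabelling which player is which). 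The latter is $\PSPACE$-complete by \cite[Theorem~1]{FH10} (shown there via a reduction from QBF); since $\PSPACE$ is closed under complementation and the naming of the two players is immaterial, the qualitative reachability CQ achievability problem is $\PSPACE$-hard, already on deterministic games and hence a fortiori on SGs. (This is essentially the family of games from \cite[Lemma~2]{FH10} already used in the proof of Lemma~\ref{lem:scomp-SS-R-quant}.)

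Next I would invoke Lemma~\ref{lem:reduction-cq-dq}.1. It maps every SG $\game$ with qualitative reachability CQ $\query=\bigwedge_{i=1}^n\probability(\reach T_i)\geq1$ to a game $\game'$ with a \emph{quantitative} reachability DQ $\query'_{quant}=\bigvee_{i=1}^n\probability(\reach T_i)\geq\tfrac12+\tfrac1{2n}$ such that, under general strategies, some $\maxstrat$ achieves $\query$ in $\game$ iff some $\maxstrat$ achieves $\query'_{quant}$ in $\game'$; in particular $\query$ is achievable in $\game$ iff $\query'_{quant}$ is achievable in $\game'$. Moreover $\game'$ is obtained from $\game$ by prepending one probabilistic state that branches $\nicefrac{1}{2}$ into $\game$ and $\nicefrac{1}{2}$ into a fresh \Spoiler-controlled gadget consisting of one $\Smin$-state with $n$ actions leading to $n$ fresh sinks; thus $|\game'| = O(|\game| + n)$ and the threshold has $O(\log n)$-bit numerator and denominator, so the reduction is computable in polynomial time. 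Composing the two reductions yields $\PSPACE$-hardness of the achievability problem for quantitative reachability DQs in the general-strategies case.

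The step that needs the most care is the first one: carefully justifying that the $\PSPACE$-hardness of \cite{FH10} for (deterministic) qualitative safety DQs / generalized reachability games really does yield $\PSPACE$-hardness of the qualitative reachability CQ achievability problem that Lemma~\ref{lem:reduction-cq-dq} consumes as input, keeping precise track of the two players' roles and using determinacy of deterministic games to pass between the safety-DQ and reachability-CQ formulations. Everything after that is bookkeeping: Lemma~\ref{lem:reduction-cq-dq} already does the substantive work of relating conjunctions to disjunctions via a probabilistic branching gadget, and verifying that the composed reduction is polynomial-time is routine.
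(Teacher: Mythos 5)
Your proof is correct, and the second half coincides exactly with the paper's: both invoke Lemma~\ref{lem:reduction-cq-dq}.1 to turn a qualitative reachability CQ instance into a quantitative reachability DQ instance in polynomial time. Where you genuinely diverge is in the source of hardness for the base CQ problem. The paper cites \cite[Lem.~2]{RRS17}, which gives $\PSPACE$-hardness of qualitative reachability CQs already for \emph{MDPs} with $\Smin=\emptyset$; there the achievability question is a bare $\exists\maxstrat$, so no issues of quantifier order, opponent strategies, or determinacy arise. You instead start from \cite{FH10}'s deterministic generalized reachability games, which forces you to bridge two gaps: (i) passing from the sure-winning, deterministic-strategy semantics of \cite{FH10} to the probability-one, general-strategy semantics of the CQ, and (ii) passing between \Spoiler's safety DQ and \Achiever's reachability CQ via determinacy and closure of $\PSPACE$ under complement. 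Both gaps are bridgeable, but note that your stated justification for (i) --- that a strategy achieves $\bigwedge_i \probability(\reach T_i)\geq 1$ ``exactly when it forces every resulting play to visit each $T_i$'' --- is false for randomized strategies (probability one only requires almost every play). The equivalence of \emph{achievability} still holds: if \Achiever\ has no deterministic sure-winning strategy, Borel determinacy gives \Spoiler\ a deterministic strategy under which every play misses some target, and a union bound then shows that against this strategy some $\probability(\safe\,\overline{T_i})\geq \nicefrac{1}{n}>0$, so no randomized strategy achieves the CQ either. With that repair your route is sound; the paper's MDP-based route simply buys a shorter argument by sidestepping the two-player subtleties entirely.
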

\begin{proof}
    The feasibility problem for qualitative reachability CQs is $\PSPACE$-hard, which holds already for MDPs where $\Smin = \emptyset$~\cite[Lem. 2]{RRS17}.
    Lemma~\ref{lem:reduction-cq-dq}.1 in Section~\ref{sec:strat_comp_gen}, reduces the MDP CQ problem to a quantitative DQ problem in an SG in polynomial time.
\end{proof}

\begin{restatable}{lemma}{lemcompSSSquant}
    \label{lem:comp-SS-S-quant}
    In the general-strategies case, quantitative \emph{safety} DQs are $\PSPACE$-hard.
\end{restatable}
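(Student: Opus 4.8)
The plan is to reduce from quantitative \emph{reachability} DQs, which are $\PSPACE$-hard by Lemma~\ref{lem:comp-SS-R-quant}. The conceptual core is a duality that is available only in \emph{stopping} games: if $\game$ is a game whose only sinks are pairwise-disjoint target sets $T_1,\dots,T_n$ together with one further sink $\sink$, and every play almost surely ends in one of these sinks under every strategy pair, then for all $\sigma,\tau$ and all $i$ we have $\probability^{\sigma,\tau}(\reach T_i) = 1 - \probability^{\sigma,\tau}(\reach \overline{U_i}) = \probability^{\sigma,\tau}(\safe U_i)$ where $U_i \eqdef S \setminus ((\bigcup_{j\neq i} T_j) \cup \{\sink\})$. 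Hence the reachability DQ $\bigvee_i \probability(\reach T_i) \geq x_i$ and the safety DQ $\bigvee_i \probability(\safe U_i) \geq x_i$ are achievable in exactly the same games — in fact by exactly the same strategies — so $\PSPACE$-hardness transfers, \emph{provided} the hard instances behind Lemma~\ref{lem:comp-SS-R-quant} can be taken to be stopping with disjoint sink targets.

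So the technical step is to establish the latter. Inspecting the chain behind Lemma~\ref{lem:comp-SS-R-quant} — the $\PSPACE$-hard qualitative reachability CQ for MDPs of~\cite{RRS17} pushed through the construction of Lemma~\ref{lem:reduction-cq-dq}.1 — one sees that it already ``wants'' to halt: the $\tfrac12$-branch in which \Spoiler\ names a target ends in a fresh sink, and in the MDP-branch one may assume (after renaming) that the targets are pairwise disjoint and that the MDP is stopping. It remains to make the MDP-branch halt \emph{at} a target, which can be done without the exponential blow-up of a full goal-unfolding by adding only $O(n)$ states that record which target is hit first and redirecting each such event to its own dedicated sink. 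Because the instance descends from a \emph{conjunctive} query, \Achiever's (essentially unique) good behaviour forces her to reach every target almost surely, so the probabilities of reaching the dedicated ``first-hit'' sinks still sum to one along the MDP-branch and the reach-DQ threshold $\tfrac12 + \tfrac1{2n}$ is preserved; applying the duality of the previous paragraph to the resulting stopping game then yields the $\PSPACE$-hard quantitative safety DQ.

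The main obstacle is precisely this asymmetry of the dualization: a \emph{lower} bound on a reachability probability corresponds to an \emph{upper} bound on a safety probability, which a safety DQ cannot express, so there is no black-box reduction and one is forced to pass through a stopping game with sink targets. The delicate point is therefore verifying that the \emph{particular} hard family can be massaged into that shape in polynomial time while exactly preserving achievability. If the first-hit argument turns out to be too fragile, a robust fallback is a fresh reduction from truth of quantified Boolean formulas in the style of~\cite{FH10}'s $\PSPACE$-hardness for qualitative safety DQs in deterministic games, but now with quantitative thresholds and an added probabilistic ``penalty'' gadget (cf.\ the right-hand game of Figure~\ref{fig:scomp-SS-S-quant}) whose sole purpose is to make any mixing on \Spoiler's side strictly counterproductive, thereby restoring the deterministic QBF semantics even when randomization is available.
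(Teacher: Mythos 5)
Your main line of attack has a genuine gap at the ``first-hit'' step. The hard instances behind Lemma~\ref{lem:comp-SS-R-quant} descend from a \emph{conjunctive} reachability query: correctness of the threshold $\tfrac12+\tfrac1{2n}$ in the game of Lemma~\ref{lem:reduction-cq-dq}.1 rests on the fact that a good strategy of \Achiever\ reaches \emph{every} target with probability $1$ in the MDP-branch, contributing a full $\tfrac12$ to \emph{each} $\probability(\reach T_i)$. Once you redirect the play to a dedicated sink at the first target hit, $\probability(\reach T_i)$ turns into the probability that $T_i$ is hit \emph{first}; these quantities sum to at most $1$ over $i$, so even under a perfect strategy the best single target only collects $\tfrac12\cdot\tfrac1n$ from the MDP-branch rather than $\tfrac12$, the threshold $\tfrac12+\tfrac1{2n}$ becomes unachievable even for true QBF instances, and the yes/no distinction of the reduction collapses (first-hit probabilities do not separate ``all targets reached a.s.''\ from ``some target missed with positive probability''). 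More fundamentally, turning a multi-visit reachability instance into one with pairwise-disjoint absorbing targets — which your stopping-game duality genuinely requires — is precisely what the goal-unfolding accomplishes, and that costs a factor $2^n$; there is no $O(n)$-state shortcut here, which is why your self-diagnosed ``delicate point'' is in fact fatal to this route. (The targets in the \cite{RRS17} gadget also overlap, so ``renaming'' them disjoint is not innocent either.)

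The paper takes a different and cleaner route that sidesteps the reach-to-safe conversion entirely. Since $\PSPACE$ is closed under complement, it suffices to show hardness of \emph{non}-achievability. For a game with $\Smax=\emptyset$, negating $\exists\maxstrat\forall\minstrat\bigvee_i\probability(\safe T_i)\ge x_i$ and pushing the negation inside yields $\exists\minstrat\bigwedge_i\probability(\reach\overline{T_i})>1-x_i$ in the player-swapped MDP, i.e.\ a reachability \emph{CQ with strict bounds} in an MDP. The paper then re-analyzes the QBF construction of \cite[Lem.~2]{RRS17} to show that this strict-bounds CQ problem is $\PSPACE$-hard, with the explicit threshold $1-\tfrac{2^{-n}}{m}$ obtained via a union-bound calculation. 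Your closing remark that the asymmetry of dualization (lower bounds on reachability become upper bounds on safety) blocks a black-box reduction is exactly right — the paper's answer is to absorb that asymmetry into a complementation of the decision problem rather than into a transformation of the game. Your QBF fallback points in roughly the right direction but is not developed enough to assess.
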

\begin{proof}[Proof (sketch)]
    It can be shown that quantitative reachability CQs with \emph{strict bounds} in MDPs can be reduced to quantitative safety DQs (with non-strict bounds) in SGs.
    We prove that strict-bounded reachability CQs in MDPs are $\PSPACE$-hard which is done by analyzing a construction from \cite[Lem. 2]{RRS17} in greater detail \iftoggle{arxiv}{(see Appendix~\ref{app:comp-SS-S-quant})}{(see \cite{arxiv})}.
\end{proof}

$\PSPACE$-hardness in the previous two lemmas is caused by the exponential size of the goal-unfolding that can, as we have shown in Section~\ref{sec:strat_comp_gen}, not be avoided in general.
Indeed, for sink queries the complexity of quantitative (mixed) DQs drops to $\NP$-complete in the general-strategies case~\cite[Corollary 1]{cfk13}, where the upper bound stems from the fact that MD strategies suffice and can be verified in polynomial time using linear programming~\cite{EKVY08}.

Regarding upper bounds on quantitative DQ feasibility, we remark that the problem can be decided in $\NEXP$:
Guess an (exponentially large) MD strategy $\maxstrat$ in the goal unfolding, consider the induced MDP $\game^{\maxstrat}$ and verify in polynomial time in the size of $\game^{\maxstrat}$ that \Spoiler\ does not have a strategy violating all thresholds at once by using the multi-objective MDP algorithm from~\cite{EKVY08}.
We currently do not know of a tighter upper bound.

\subsection{Computational Complexity under Deterministic Strategies}
\label{sec:ccomp-det-strat}

As in Section~\ref{sec:strat_comp_det}, we consider only sink queries in this section.

\begin{restatable}{lemma}{lemcompSDSRqual}
    \label{lem:comp-SD-SR-qual}
    Under deterministic strategies, qualitative reachability and safety DQs are $\PSPACE$-hard, even in the case of sink queries.
\end{restatable}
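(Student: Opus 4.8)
The plan is to establish the two $\PSPACE$ lower bounds separately, reducing the reachability variant from a hard problem about \emph{conjunctive} queries in MDPs and the safety variant from a hard problem about non-stochastic games; in both cases the sink-query shape will be inherited from (or arranged in) the source instances.

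For qualitative \emph{reachability} DQs I reduce from the feasibility problem for qualitative reachability CQs, which is $\PSPACE$-hard already for MDPs with $\Smin = \emptyset$ by \cite[Lem.~2]{RRS17}. The first point is that this hardness does not depend on whether strategies may randomize. Indeed, a strategy $\maxstrat$ achieves $\bigwedge_{i=1}^n \probability(\reach T_i) \geq 1$ in an MDP $\game$ iff the strategy obtained by lifting $\maxstrat$ to the goal-unfolding $\unf(\game,\query)$ reaches the set of states whose bit-vector equals $(1,\ldots,1)$ with probability one; since memoryless deterministic strategies suffice for value-one single-target reachability in MDPs, the CQ is achievable by some strategy iff it is achievable by a deterministic strategy (of memory at most $2^n$). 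Hence \cite{RRS17}'s construction witnesses $\PSPACE$-hardness of qualitative reachability CQs already \emph{under deterministic strategies}. I then apply Lemma~\ref{lem:reduction-cq-dq}.2, which in polynomial time turns such an MDP-with-CQ instance into an SG $\game'$ and a \emph{qualitative} reachability DQ $\query'_{qual}$ for which, under deterministic strategies, every $\maxstrat$ achieves the CQ in the MDP iff it achieves $\query'_{qual}$ in $\game'$. This gives the reachability half of the statement. To obtain it already for \emph{sink} queries, I observe that \cite{RRS17}'s hard instances can be assumed to place only sinks in their target sets (this is standard for its QBF-style construction; if not, one first passes to $\unf(\game,\query)$ and makes the all-ones states absorbing), and that the gadget inserted by Lemma~\ref{lem:reduction-cq-dq}.2---where \Spoiler\ picks one target and the play halts there---keeps every target a sink.

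For qualitative \emph{safety} DQs, stochasticity is not needed at all: a deterministic game on a finite graph is precisely the special case of an SG with $\Sprob = \emptyset$, and on such a game the deterministic-strategies achievability problem for a safety DQ is exactly the problem shown $\PSPACE$-complete in \cite[Thm.~1]{FH10} via a reduction from quantified Boolean formulas. Since that construction terminates in absorbing states, its unsafe sets consist of sinks, so the hardness already concerns sink safety DQs. (Alternatively, the safety bound can be read off the reachability construction: $\game'$ above can be arranged to almost-surely reach one of finitely many pairwise disjoint sink targets $T_1,\ldots,T_n$, and then $\bigvee_{i=1}^n \probability(\reach T_i) \geq 1$ is equivalent to the sink safety DQ $\bigvee_{i=1}^n \probability(\safe (S \setminus \bigcup_{j \neq i} T_j)) \geq 1$, mirroring the duality used in Corollary~\ref{cor:scomp:SD-SR-quant}.)

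The step I expect to be the main obstacle is the first one: making the $\PSPACE$-hardness of reachability CQs survive simultaneously the restriction to deterministic strategies and the sink-query restriction, and then checking that the one-to-one correspondence of \Achiever's strategies furnished by Lemma~\ref{lem:reduction-cq-dq}.2 genuinely transports both the qualitative thresholds and the sink shape of the targets into $\game'$. The safety half, in contrast, is essentially the observation that non-stochastic games form a sub-class of SGs on which our deterministic-strategies semantics collapses to the one studied in \cite{FH10}.
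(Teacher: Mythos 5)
Your overall strategy differs from the paper's: the paper gives a single direct reduction from QBF with a DNF matrix, building a bespoke game $\game_\psi$ in which every literal state branches with probability $\nicefrac{1}{2}$ into a \emph{sink} that belongs to exactly the terms compatible with that literal, so that $\probability(\reach T_j)\geq 1$ encodes ``term $t_j$ is satisfied''; you instead chain the $\PSPACE$-hardness of qualitative reachability CQs in MDPs \cite[Lem.~2]{RRS17} through Lemma~\ref{lem:reduction-cq-dq}.2 for reachability, and invoke \cite{FH10} on non-stochastic games for safety. The logical skeleton of your reachability chain is fine (including the observation that deterministic strategies suffice for the CQ via the goal-unfolding, and the correct use of Lemma~\ref{lem:reduction-cq-dq}.2), but there is a genuine gap at exactly the step you flag: the sink-query requirement. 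The targets in the \cite{RRS17} construction are the literal states on the spine of the MDP and are \emph{not} sinks, and Lemma~\ref{lem:reduction-cq-dq}.2 carries those same target sets into $\game'$ unchanged. Your proposed repair---passing to $\unf(\game,\query)$ and making the all-ones states absorbing---is not a polynomial-time reduction, since the unfolding has $|S|\cdot 2^n$ states with $n$ the number of clauses. Worse, the obvious polynomial sink-ification (splitting off each literal state's target membership into a probability-$\nicefrac{1}{2}$ sink branch) silently changes the semantics from ``some literal on the branch satisfies clause $C_i$'' to ``\emph{every} chosen literal is compatible with $C_i$'', i.e., from CNF-clause to DNF-term semantics---which is precisely why the paper reduces from DNF-QBF to a \emph{disjunctive} query directly rather than routing through conjunctive queries. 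So the reachability half needs a new construction, not a patch.

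The safety half of your argument is sound in principle---non-stochastic games with deterministic strategies are a special case of the lemma's setting, so $\PSPACE$-hardness there transfers---but it too rests on the unverified assertion that the unsafe sets in the \cite[Thm.~1]{FH10} construction consist only of sinks; your fallback via the duality $\probability(\reach T_i)\geq 1 \Leftrightarrow \probability(\safe(S\setminus\bigcup_{j\neq i}T_j))\geq 1$ additionally requires the targets to be pairwise disjoint and almost-surely reached, which the \cite{RRS17}-derived instances do not provide. The paper sidesteps both issues by adapting its own reachability gadget: it replaces each target $T$ by the safe set $T'\eqdef T\cup(S\setminus Sinks(\game))$, whose complement consists only of sinks, so the safety version is obtained from the same construction for free.
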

\begin{proof}[Proof (sketch)]
    Inspired by similar constructions in~\cite{FH10,RRS17}, we reduce from the problem of deciding truth of a quantified Boolean formula (QBF).
    See Figure~\ref{fig:qbf_example} for an example and \iftoggle{arxiv}{Appendix~\ref{app:proof-comp-SD-SR-qual}}{\cite{arxiv}} for the full proof.
\end{proof}

\begin{figure}
    \centering
    \begin{tikzpicture}[node distance = 4mm and 12mm, on grid, initial text=,myArrowStyle]
    \node[state, max,initial] (sx) {$s_1$};
    \node[state, prob, above right=of sx] (x) {$x_1$};
    \node[state, prob, below right=of sx] (nx) {$\overline{x}_1$};
    
    \node[state, min, below right=of x] (sy) {$s_2$};
    \node[state, prob, above right=of sy] (y) {$x_2$};
    \node[state, prob,  below right=of sy] (ny) {$\overline{x}_2$};
    
    \node[state, max, below right=of y] (sz) {$s_3$};
    \node[state, prob,  above right=of sz] (z) {$x_3$};
    \node[state, prob, below right=of sz] (nz) {$\overline{x}_3$};
    
    \node[state,prob,target,col2,above right=of x] (Tx) {2};
    \node[state,prob,target,col1,above right=of y] (Ty) {1};
    \node[state,prob,target,col2,above right=of z] (Tz) {2};
    
    \node[state,prob,target,both cols,below right=of nx] (Tnx) {\scriptsize 1,2};
    \node[state,prob,col2,target,below right=of ny] (Tny) {2};
    \node[state,prob,target,col1,below right=of nz] (Tnz) {1};
    
    \draw[trans] (x) -- (Tx);
    \draw[trans] (y) -- (Ty);
    \draw[trans] (z) -- (Tz);
    \draw[trans] (z) edge[loop above] (z);
    \draw[trans] (nx) -- (Tnx);
    \draw[trans] (ny) -- (Tny);
    \draw[trans] (nz) -- (Tnz);
    \draw[trans] (nz) edge[loop below] (nz);

    \draw[trans] (sx) -- node[above left]{} (x);
    \draw[trans] (sx) -- node[below left]{} (nx);
    \draw[trans] (x) -- (sy);
    \draw[trans] (nx) -- (sy);
    
    \draw[trans] (sy) -- (y);
    \draw[trans] (sy) -- (ny);
    \draw[trans] (y) -- (sz);
    \draw[trans] (ny) -- (sz);

    \draw[trans] (sz) -- node[above left]{} (z);
    \draw[trans] (sz) -- node[below left]{} (nz);
    
    \end{tikzpicture}
    \caption{
        Example reduction from Lemma~\ref{lem:comp-SD-SR-qual} for the QBF $\exists x_1 \forall x_2 \exists x_3 (\overline{x}_1 \wedge x_2 \wedge \overline{x}_3) \vee (\overline{x}_2 \wedge x_3)$.
        Numbers in target states indicate whether they belong to $T_1$ and/or $T_2$.
        The QBF is true as witnessed by \Achiever's strategy that first goes to $\overline{x}_1$ and then to $x_3$ if \Spoiler\ had selected $\overline{x}_2$ and otherwise to $\overline{x}_3$.
    }
    \label{fig:qbf_example}
\end{figure}
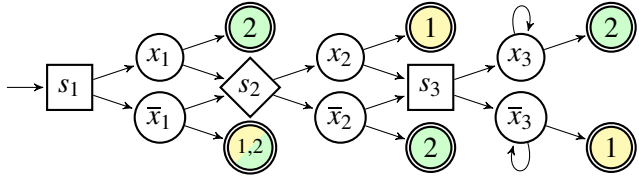

Notably, the corresponding \emph{conjunctive} problem in the setting of Lemma~\ref{lem:comp-SD-SR-qual} (qualitative sink CQ, deterministic strategies) can be solved in $\P$ as it reduces to simply checking if the intersection $\bigcap_{i=1}^n T_i$ can be reached with probability 1.
Contrary to most other results, Lemma~\ref{lem:comp-SD-SR-qual} thus identifies a setting where DQs are \emph{much} harder than CQs.
Moreover, due to the restriction to sink queries, Lemma~\ref{lem:comp-SD-SR-qual} also yields $\PSPACE$-hardness for disjunctions of expected reward objectives under deterministic strategies.
In the general-strategies case, such expected reward DQs are decidable in $\NP$~\cite{cfk13}.
Next we show an upper bound for qualitative \emph{safety} DQs:

\begin{restatable}{lemma}{lemcompSDSqual}
    \label{lem:comp-SD-S-qual}
    In the deterministic-strategies case, qualitative safety sink DQs are decidable in $\EXPTIME$.
\end{restatable}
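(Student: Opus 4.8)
The goal is to decide, under deterministic strategies, whether \Achiever\ can achieve $\bigvee_{i=1}^n \probability(\safe T_i) \geq 1$ for a sink query (so each unsafe set $\overline{T}_i$ consists of sinks). The plan is to reduce the problem to a sequence of ordinary two-player safety games. First I would observe that, since the query is qualitative and the unsafe sets are sinks, a fixed pair of deterministic strategies $(\maxstrat,\minstrat)$ satisfies the DQ iff for \emph{some} $i$ the induced Markov chain never reaches a sink in $\overline{T}_i$ with positive probability; because everything is deterministic-strategy but $\Sprob$-states still randomize, "probability $0$ of reaching $\overline{T}_i$" is equivalent to "the set $\overline{T}_i$ is unreachable in the underlying graph $\game^{\maxstrat,\minstrat}$." Thus randomization in $\Sprob$-states is irrelevant, and the problem becomes a purely graph-theoretic disjunctive safety game on the finite graph underlying $\game$, where \Achiever\ picks deterministic edge choices in $\Smax$, \Spoiler\ in $\Smin$, and \Sprob\ contributes \emph{all} its outgoing edges adversarially (i.e.\ probabilistic states behave like \Spoiler-states, since avoiding $\overline{T}_i$ almost surely is the same as avoiding it along every branch).

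The core difficulty is that disjunctive safety games are not memoryless and in fact $\PSPACE$-hard (Lemma~\ref{lem:comp-SD-SR-qual}), so we cannot just iterate over small strategies. Instead I would use the standard powerset/subset approach à la~\cite{FH10}: guess, along the play, the set $I \subseteq \{1,\ldots,n\}$ of indices that \Achiever\ still intends to keep safe. The key insight is that \Achiever\ must commit to a nonempty candidate set, but may \emph{shrink} it as the play evolves — she can drop index $i$ from $I$ as soon as the play has moved past the "danger" for $i$, or more importantly she can branch her commitment depending on \Spoiler's moves. Formally, build a finite game $\game'$ whose states are pairs $(s, I)$ with $s \in S$ and $\emptyset \neq I \subseteq \{1,\ldots,n\}$: from $(s,I)$, if $s \in \Smax$ \Achiever\ chooses an action $t \in \choices(s)$ together with a nonempty $I' \subseteq I$ and moves to $(t, I')$ provided $t \notin \overline{T}_j$ for all $j \in I'$; if $s \in \Smin \cup \Sprob$ the opponent chooses any successor $t$, and the resulting set is $I \setminus \{j : t \in \overline{T}_j\}$, which must remain nonempty; a play is won by \Achiever\ iff it is infinite (never gets stuck). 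Then one shows $\query$ is achievable under deterministic strategies in $\game$ starting from $(\sinit, \{1,\ldots,n\})$ iff \Achiever\ wins this ordinary safety game $\game'$.

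Once this equivalence is established, the complexity bound is immediate: $\game'$ has $|S| \cdot (2^n - 1)$ states and bounded branching, and ordinary two-player safety games are solvable in polynomial time in the size of the arena via attractor computation, giving overall running time polynomial in $|S|$ and in $2^n$, hence $\EXPTIME$ in the input size. The main obstacle I anticipate is the correctness of the reduction in the "only if" direction: I must argue that any deterministic achieving strategy $\maxstrat$ for \Achiever\ in $\game$ can be converted into a winning strategy in $\game'$, i.e.\ that \Achiever\ can consistently maintain a nonempty index set along every \Spoiler/\Sprob-branch. This follows because for a fixed $\maxstrat$, every counter-strategy $\minstrat$ (and every branch of the probabilistic states) yields a play that stays in some $T_{i}$; taking $I$ at each history to be the set of indices $i$ such that, against \emph{all} continuations consistent with $\maxstrat$, the play so far avoids $\overline{T}_i$, one checks by a König's-lemma / finiteness argument that this $I$ is always nonempty and shrinks consistently — the sink assumption is what makes "has avoided $\overline{T}_i$ so far" stable under future \Achiever-moves. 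The converse direction is routine: a winning strategy in $\game'$ projects to a deterministic strategy in $\game$ whose induced plays each stay safe for whichever index survives in the limit. Full details are deferred to \iftoggle{arxiv}{Appendix~\ref{app:proof-comp-SD-S-qual}}{\cite{arxiv}}.
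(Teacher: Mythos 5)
Your reduction to an ordinary two\-/player safety game is unsound, and the failure is precisely at the step you present as the key insight: handing the probabilistic states to \Spoiler\ and deciding the disjunct per play. For a \emph{fixed} index $i$ it is indeed true that $\probability^{\maxstrat,\minstrat}(\safe T_i) \geq 1$ iff every branch of the induced chain avoids $\overline{T_i}$, so a \emph{single} safety objective reduces to a graph game with $\Sprob$ adversarial. But the DQ demands a witness $i$ that works \emph{uniformly} over all probabilistic branches of $\game^{\maxstrat,\minstrat}$ (the quantifier order is $\exists i\,\forall\,\text{branch}$), whereas in your game $\game'$ a play is winning as soon as \emph{that particular play} preserves some index (the order becomes $\forall\,\text{branch}\,\exists i$). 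These do not coincide. Concretely, take $\sinit \in \Sprob$ with two successors $u_1,u_2$, each with probability $\nicefrac{1}{2}$, where $u_1,u_2$ are sinks with $u_1 \in \overline{T_1}\setminus\overline{T_2}$ and $u_2 \in \overline{T_2}\setminus\overline{T_1}$. This is a qualitative safety sink DQ with no choices for either player; $\probability(\safe T_1)=\probability(\safe T_2)=\nicefrac{1}{2}$, so the DQ $\probability(\safe T_1)\geq 1 \vee \probability(\safe T_2)\geq 1$ is not achievable. Yet \Achiever\ wins your game from $(\sinit,\{1,2\})$: whichever successor \Spoiler\ picks, the surviving index set is a nonempty singleton and the play runs forever. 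Your ``routine'' converse direction is exactly where this breaks: different probabilistic branches may survive with \emph{different} indices, and then no single disjunct holds with probability $1$.

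The paper's proof is built around avoiding this trap. It decides the \emph{complement}, $\forall\maxstrat\,\exists\minstrat\,\bigwedge_{i=1}^n \probability^{\maxstrat,\minstrat}(\reach\overline{T_i})>0$, with a polynomially space-bounded alternating Turing machine (using that alternating $\PSPACE$ equals $\EXPTIME$). It first shows that positive-probability reachability of all the unsafe sinks can be witnessed within $|S|$ steps, and then simulates the $|S|$-step game while exploring \emph{both} successors of every probabilistic state by depth-first backtracking with a stack, accumulating a global bit vector recording which $\overline{T_i}$ have been hit on \emph{some} branch; it accepts iff all bits are set. This explicit traversal of the entire probabilistic branching tree is what correctly implements the quantifier order that your per-play safety condition inverts. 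To repair your construction you would need the game state to carry information about the whole set of branches explored so far (as the ATM's stack and bit vector do), not merely the index set surviving along the current play.
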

\begin{proof}[Proof (sketch)]
    The proof relies crucially on the fact that non-achievability of a qualitative safety DQs can be witnessed after at most a bounded number steps of the game.
    Indeed, if $\maxstrat$ is a non-achieving strategy of \Achiever, then \Spoiler\ has a counter-strategy that can reach all the unsafe sets $\overline{T}$ with \emph{positive} probability after at most $|S|$ steps of the game.
    The result then follows by constructing a polynomially space-bounded alternating Turing machine that simulates the game for at most $|S|$ steps and accepts iff the query is not achievable.
    We handle probabilistic branching via backtracking using a stack whose content remains of polynomial size throughout the execution (see \iftoggle{arxiv}{Appendix~\ref{app:proof-comp-SD-S-qual}}{\cite{arxiv}} for details).
\end{proof}

The above proof cannot simply be extended to reachability because, intuitively, reaching a target with probability $1$ may only occur in the limit.
In fact, the question whether qualitative reachability DQs are decidable under deterministic strategies remains open.

Regarding the quantitative case, \cite[Theorem 3]{cfk13} proves that reachability \emph{CQs} are \emph{undecidable} under deterministic strategies.
Thus with Lemma~\ref{lem:reduction-cq-dq-quant} we also have:
\begin{lemma}
    \label{lem:comp-SD-R-quant}
    Quantitative reachability DQs are undecidable under deterministic strategies.
\end{lemma}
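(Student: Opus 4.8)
The plan is to obtain undecidability by reduction from the conjunctive case. By \cite[Theorem~3]{cfk13}, it is undecidable whether, under deterministic strategies, \Achiever\ can achieve a given quantitative reachability CQ $\query = \bigwedge_{i=1}^n \probability(\reach T_i) \geq x_i$ in an SG $\game$. I would feed an arbitrary such instance $(\game,\query)$ into the construction of Lemma~\ref{lem:reduction-cq-dq-quant}, which yields---effectively, and in polynomial time---a game $\game'$ together with a quantitative reachability DQ $\query'$ such that \Achiever's strategies in $\game$ and in $\game'$ are in one-to-one correspondence and, under deterministic strategies for both players, a strategy $\maxstrat$ achieves $\query$ in $\game$ iff the corresponding strategy achieves $\query'$ in $\game'$.

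\textbf{Conclusion step.}
Since the map $(\game,\query)\mapsto(\game',\query')$ is computable and the above biconditional shows that $\query$ is achievable in $\game$ exactly when $\query'$ is achievable in $\game'$, a decision procedure for quantitative reachability DQ achievability under deterministic strategies would immediately decide quantitative reachability CQ achievability under deterministic strategies, contradicting \cite[Theorem~3]{cfk13}. Hence the DQ problem is undecidable. Consistently with the scope of Section~\ref{sec:ccomp-det-strat}, this already applies to sink queries.

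\textbf{Where the work lives.}
All of the substance is outsourced: the deep ingredient is the undecidability result of \cite{cfk13}, and the technical heart is Lemma~\ref{lem:reduction-cq-dq-quant}, which I am assuming as given. The one point to keep in mind when writing out the full argument is that this CQ-to-DQ translation is faithful \emph{only} in the deterministic-strategies regime---if \Spoiler\ were permitted to randomize when he freely selects a target in $\game'$, the equivalence would fail and one would instead be forced to a shifted quantitative threshold, exactly as in the randomized branch of Lemma~\ref{lem:reduction-cq-dq}. So the full proof consists of nothing more than invoking Lemma~\ref{lem:reduction-cq-dq-quant} in the deterministic setting and composing it with \cite[Theorem~3]{cfk13}; I do not anticipate any further obstacle.
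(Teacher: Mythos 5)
Your proposal is correct and is essentially identical to the paper's own proof: the paper likewise combines the undecidability of quantitative reachability CQs under deterministic strategies from \cite[Theorem~3]{cfk13} with the reduction of Lemma~\ref{lem:reduction-cq-dq-quant}. No gaps.
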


\section{Value Iteration}
\label{sec:alg_vi}
In a nutshell, value iteration (VI) algorithms in general evaluate the $k$-step game $\game^{\leq k}$ using information about the game $\game^{\leq k{-}1}$ up to some reasonable $k$.
In this section, we present a VI-style algorithm for computing the Pareto sets $\val(\game^{\leq k},\query)$ or $\upval(\game^{\leq k},\query')$ for a given game $\game$, DQ $\query$ or CQ $\query'$, and step bound $k \geq 0$.
Consequently, we do not fix a threshold vector $\vec{x}$ in our queries but consider query \emph{templates} instead.
To keep the presentation simple, we focus on \emph{general-strategies} and consider only sink queries.
%
%The results can be adapted to the general case by considering the goal-unfolding as usual.

We briefly recall the VI from \cite{cfk13} (subsequently called \emph{CQ-VI}) that for a given CQ $\query$ successively outputs $\val(\game^{\leq k},\query)$ for all $k \geq 0$.
Let $\pcurves$ be the set of all downward-closed polyhedra in $[0,1]^n$ and let $X \in \pcurves^S$.
In the following, we write $X_s$ for $X(s)$.
The update function $F \colon \pcurves^S \to \pcurves^S$ is defined according to Figure~\ref{fig:vi_ops}.
Note that $F$ is well-defined, i.e., always yields downward-closed polyhedra.
For CQ $\query = \bigwedge_{i=1}^n \probability(\genobj_i T_i) \geq x_i$ and each $s \in S$  define the $n$-dimensional zero-one vector $\mathbbm{1}^\query_s$, such that $(\mathbbm{1}^\query_s)_i \eqdef 1$ iff $s \in T_i$ and let $X^0(\query)_s \eqdef \dwc(\mathbbm{1}^\query_s)$.
Then \cite{cfk13} implies that $F^k(X^0(\query))_{s_0} = \val(\game^{\leq k},\query)$ for all $k \geq 0$.

\begin{figure}[t]
    \centering
    {\def\arraystretch{1.3}
        \begin{tabular}{l l l}
        \toprule
        & $F(X)_s$ & $\uviop(\uvidomelem)_s$ \\
        \midrule 
        $s \in \Smin \quad$ & $\bigcap_{t \in \choices(s)} X_t$ & $\bigcup_{t \in \choices(s)} \uvidomelem_t$ \\
        $s \in \Smax$ & $\conv\big(\bigcup_{t \in \choices(s)} X_t \big) \quad$ & $\big\{\, \conv\big( \bigcup_{t \in \choices(s)} X_t\big) \,\mid\, X \inpw \uvidomelem \, \big\}$ \\
        $s \in \Sprob$ & $\sum_{t \in S} P(s,t)X_t$ & $\big\{\, \sum_{t \in S} P(s,t)X_t \,\mid\, X \inpw \uvidomelem\, \big\}$ \\
        \bottomrule
    \end{tabular}}
    \caption{The value iteration operators $F$ and $\uviop$ for computing $\val(\game^{\leq k},\query)$ and $\upval(\game^{\leq k},\query)$, respectively.}
    \label{fig:vi_ops}
    \vspace{-0mm}
\end{figure}

We now address the question whether an iteration analogous to CQ-VI can be devised for \emph{DQs}.
CQ-VI computes the horizon-$(k{+}1)$ Pareto set of any given state by taking only the horizon-$k$ sets of its successors (and the relevant probability distribution) into account.
For DQs, this is impossible in general:

\begin{observation}
    Suppose $s \in S$ has successors $s_1, s_2$.
    In general, the horizon-$k$ Pareto sets of $s_1$ and $s_2$ w.r.t.\ a DQ do \emph{not} uniquely determine the horizon-$(k{+}1)$ Pareto set of $s$.
\end{observation}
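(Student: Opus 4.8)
The plan is to exhibit a single state $s$ with two successors $s_1, s_2$ and two different games $\game, \game'$ that agree on the horizon-$k$ Pareto sets of $s_1$ and $s_2$ (for some small $k$ and small dimension $n$), but in which the horizon-$(k{+}1)$ Pareto set of $s$ differs. The cleanest choice is to make $s$ a probabilistic state with $P(s,s_1) = P(s,s_2) = \nicefrac12$, so that by the $\Sprob$-row of Figure~\ref{fig:vi_ops} one might naively expect the answer to be $\nicefrac12 \val(\game^{\leq k}_{s_1},\query) + \nicefrac12 \val(\game^{\leq k}_{s_2},\query)$; the point is that this formula is \emph{wrong} for DQs, precisely because the Pareto frontier of a DQ need not be convex (cf.\ the discussion around Figure~\ref{fig:introExample}). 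I would take $n=2$ and a reachability DQ $\query = \probability(\reach T_1) \geq x_1 \vee \probability(\reach T_2) \geq x_2$.

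The concrete construction: in both games let $s_1$ be a state from which \Achiever\ can surely reach $T_1$ \emph{or} surely reach $T_2$ (like $s_1$ in Figure~\ref{fig:introExample}), so its Pareto set is the full box $[0,1]^2$ already at a small horizon; symmetrically let $s_2$ be another such state, or simply a copy of $s_1$. Then in $\game$, route both $s_1$ and $s_2$ into the same substructure, while in $\game'$ route them so that the reachable target is \emph{correlated} with which successor was taken — e.g.\ from $s_1$ \Achiever\ can only influence $T_1$ and from $s_2$ only $T_2$ within the remaining $k$ steps, but when one looks \emph{one step further back} the extra step lets her, say, delay the commitment. The key is to design it so that the horizon-$k$ local data $(X_{s_1}, X_{s_2})$ coincide across the two games but the actual achievable threshold vectors at $s$, which depend on how \Achiever's strategy choices at horizon $k{+}1$ from $s_1$ and $s_2$ are \emph{jointly} resolved against \Spoiler, come out differently. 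A particularly sharp instantiation: in $\game$, state $s$'s successors are two independent copies, giving Pareto set $\dwc(\{(1,0),(0,1)\})$-type shape; in $\game'$, a single shared downstream \Achiever-state forces a common choice, giving the convex hull. Since both have the same $X_{s_1}=X_{s_2}$, the Observation follows. The verification amounts to computing two small Pareto sets by hand using the values already worked out for Figure~\ref{fig:introExample}.

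I expect the main obstacle to be \emph{not} the nonconvexity per se — that is easy to arrange — but rather ensuring the two games genuinely agree on the horizon-$k$ Pareto sets of $s_1$ and $s_2$ while disagreeing at $s$; the natural temptation is to cheat by putting the distinguishing structure \emph{strictly between} $s$ and its successors, which would change $X_{s_1}$ or $X_{s_2}$ and defeat the point. The fix is to push all the distinguishing behavior \emph{downstream of} $s_1$ and $s_2$ but arrange that it only becomes visible when an extra initial step (the one from $s$) is available — i.e.\ exploit that $\game^{\leq k}$ truncates the play, so a feature that matters only at depth $k{+}1$ is invisible in the horizon-$k$ sets of the successors yet active in the horizon-$(k{+}1)$ set of $s$. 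Once the gadget is set up this way, the rest is a routine two-by-two computation; I would present it with a small figure analogous to Figure~\ref{fig:introExample} and a one-line remark that the same phenomenon forces the more elaborate operator $\uviop$ on sets of polyhedra developed in the remainder of the section.
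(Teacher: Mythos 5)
Your high-level plan is the same as the paper's -- exhibit two games whose successors have identical horizon-$k$ Pareto sets but whose parent state has different horizon-$(k{+}1)$ Pareto sets, with the non-convexity/non-determinacy of DQs as the culprit -- and you correctly identify the central difficulty (the distinguishing structure must not leak into the successors' Pareto sets). However, neither of the two concrete mechanisms you sketch actually closes that gap. First, the ``sharp instantiation'' (two independent \Achiever-copies vs.\ a shared downstream \Achiever-state, each able to surely reach either target) does not produce a difference at $s$: for a reachability DQ, in \emph{both} variants \Achiever\ can simply direct all probability mass to $T_1$, so $s$ gets the full box $[0,1]^2$ in both games. The essential asymmetry in the paper's counterexample is missing from your gadget: one successor must be \Spoiler-controlled, and the two games must differ in \emph{why} the \Achiever-successor has a full-box Pareto set -- in one game because \Achiever\ has a committing choice (which \Spoiler\ can then mirror at the sibling state, collapsing the parent's set), in the other because the state is unconditionally good (a state that never reaches any unsafe set, for the paper's safety DQ $\probability(\safe \overline{T_1}) \geq x_1 \vee \probability(\safe \overline{T_2}) \geq x_2$), leaving \Spoiler\ nothing to correlate against. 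That is what makes $(0.75,0.75)$ achievable from one root but not the other.

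Second, your proposed fix -- ``a feature that matters only at depth $k{+}1$ is invisible in the horizon-$k$ sets of the successors yet active in the horizon-$(k{+}1)$ set of $s$'' -- cannot work. In $\game^{\leq k+1}$ started at $s$, the one extra transition is consumed in moving from $s$ to $s_1$, after which exactly $k$ transitions remain; so the portion of the game visible below $s_1$ in the horizon-$(k{+}1)$ set of $s$ is precisely the horizon-$k$ unfolding of $s_1$. Anything invisible in $\val(\game^{\leq k}_{s_1},\query)$ is equally invisible from $s$ at horizon $k{+}1$. The counterexample therefore cannot come from truncation depth; it must come from the fact that a Pareto set is a marginal summary that forgets how its points are realized, which is the information the paper's operator $\uviop$ (sets of polyhedra indexed by \Spoiler's deterministic choices) is designed to retain. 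As written, the proposal defers the decisive two-by-two computation, and the gadget it describes would not survive that computation.
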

\begin{proof}
    Consider the game in Figure \ref{fig:no_analog} (left) and the DQ $\query = \probability(\safe \overline{T_1}) \geq x_1 \vee \probability(\safe \overline{T_2}) \geq x_2$.
    The horizon-$1$ Pareto sets of $s_1, s_2$ and $t_1, t_2$, as well as the horizon-$2$ sets of $s_0$ and $t_0$ are sketched next to the corresponding state.
    We claim that the threshold vector $(x_1, x_2) = (0.75, 0.75)$ is achievable from $t_0$, but not from $s_0$:
    As deterministic strategies suffice for \Achiever, we can assume by symmetry that she moves to $T_1$ in $s_1$.
    But then \Spoiler\ can respond by moving to $T_2$ in $s_2$ and both $\safe \overline{T_1}$ and $\safe \overline{T_2}$ are satisfied with probability exactly $0.5$.
    Thus $(0.75, 0.75)$ is not achievable.
    However, at $t_2$ we can assume by symmetry that \Spoiler\ moves to $T_1$ with probability $\geq 0.5$ and so $\safe \overline{T_2}$ is satisfied with probability $ \geq 0.75$ from $t_0$.
\end{proof}

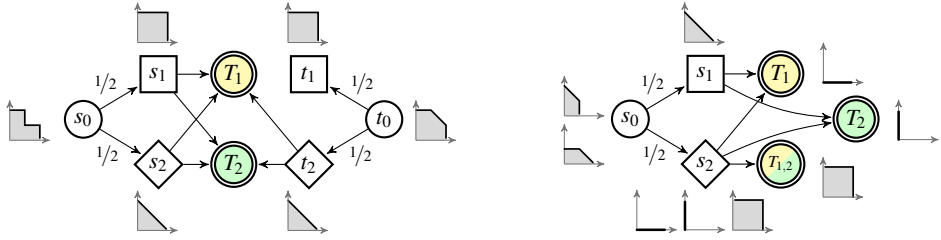
\begin{figure}[t]
    \centering
    \begin{tikzpicture}[node distance=6mm and 10mm, every node/.style={scale=0.8}, on grid, myArrowStyle]
    \node[prob] (prob) {$s_0$};
    \node[max,above right=of prob] (s1) {$s_1$};
    \node[min,below right=of prob] (s2) {$s_2$};
    \node[prob,target,col1,right=of s1] (T1) {$T_1$};
    \node[prob,target,col2,right=of s2] (T2){$T_2$};
    
    \draw[trans] (prob) -- node[above left=-1mm] {\small $\nicefrac{1}{2}$} (s1);
    \draw[trans] (prob) -- node[below left=-1mm] {\small  $\nicefrac{1}{2}$}(s2);
    \draw[trans] (s1) -- (T1);
    \draw[trans] (s1) -- (T2);
    \draw[trans] (s2) -- (T1);
    \draw[trans] (s2) -- (T2);
%    \draw[trans] (T1) edge[loop right] (T1);
%    \draw[trans] (T2) edge[loop right] (T2);
    
    \node[below = 6mm of s2] {\tikz[scale=.5]{
            \filldraw[black!15] (0,1) -- (1,0) -- (0,0) -- cycle;
            \draw[pcurve] (0,1) -- (1,0);
            \drawaxes
    }};
    
    \node[above= 7mm of s1] {\tikz[scale=.5]{
            \filldraw[black!15] (0,1) -- (1,1) -- (1,0) -- (0,0) -- cycle;
            \draw[pcurve] (0,1) -- (1,1) -- (1,0);
            \drawaxes
    }};
    
    \node[left= 7mm of prob] {\tikz[scale=.5]{
            \filldraw[black!15] (0,1) -- (.5,1) -- (.5,.5) -- (1,.5) -- (1,0) -- (0,0) --cycle;
            \draw[pcurve] (0,1) -- (.5,1) -- (.5,.5) -- (1,.5) -- (1,0);
            \drawaxes
    }};
    
    \node[prob, right = 40mm of prob] (prob) {$t_0$};
    \node[max,above left=of prob] (s1) {$t_1$};
    \node[min,below left=of prob] (s2) {$t_2$};
%    \node[prob,target,col1,right=of s1] (T1) {$T_1$};
%    \node[prob,target,col2,right=of s2] (T2){$T_2$};
    
    \draw[trans] (prob) -- node[above right=-1mm] {\small$\nicefrac{1}{2}$} (s1);
    \draw[trans] (prob) -- node[below right=-1mm] {\small$\nicefrac{1}{2}$}(s2);
%    \draw[trans] (s1) edge[loop right] (s1);
    \draw[trans] (s2) -- (T1);
    \draw[trans] (s2) -- (T2);
%    \draw[trans] (T1) edge[loop right] (T1);
%    \draw[trans] (T2) edge[loop right] (T2);
    
    \node[below = 6mm of s2] {\tikz[scale=.5]{
            \filldraw[black!15] (0,1) -- (1,0) -- (0,0) -- cycle;
            \draw[pcurve] (0,1) -- (1,0);
            \drawaxes
    }};
    
    \node[above= 7mm of s1] {\tikz[scale=.5]{
            \filldraw[black!15] (0,1) -- (1,1) -- (1,0) -- (0,0) -- cycle;
            \draw[pcurve] (0,1) -- (1,1) -- (1,0);
            \drawaxes
    }};
    
    \node[right= 7mm of prob] {\tikz[scale=.5]{
            \filldraw[black!15] (0,1) -- (.5,1) -- (1,.5) -- (1,0) -- (0,0) --cycle;
            \draw[pcurve] (0,1) -- (.5,1)  -- (1,.5) -- (1,0);
            \drawaxes
    }};
    \end{tikzpicture}
    \hspace{1cm}
    \begin{tikzpicture}[node distance=6mm and 10mm, on grid,every node/.style={scale=0.8}, myArrowStyle]
    \node[prob] (prob) {$s_0$};
    \node[max,above right=of prob] (s1) {$s_1$};
    \node[min,below right=of prob] (s2) {$s_2$};
    \node[prob,target,col1,right=of s1] (T1) {$T_1$};
    \node[prob,target,col2,right=30mm of prob] (T2){$T_2$};
    \node[prob,target,both cols,right=of s2] (T12){\scriptsize $T_{1,2}$};

    \draw[trans] (prob) -- node[above left=-1mm] {\small$\nicefrac{1}{2}$} (s1);
    \draw[trans] (prob) -- node[below left=-1mm] {\small$\nicefrac{1}{2}$}(s2);
    \draw[trans] (s1) -- (T1);
    \draw[trans] (s1) edge[bend right=13] (T2);
    \draw[trans] (s2) -- (T1);
    \draw[trans] (s2) edge[bend left=8] (T2);
    \draw[trans] (s2) -- (T12);
    
    \node[above =7mm of s1,xshift=0] {\tikz[scale=.5]{
            \filldraw[black!15] (0,1) -- (1,0) -- (0,0) -- cycle;
            \drawaxes
            \draw[pcurve] (0,1) -- (1,0);
            %            \node at (-0.3,1) {1)};
    }};
    
    \node[below=6mm of s2,xshift=-8mm] {\tikz[scale=.5]{
            \drawaxes
            \draw[pcurve,very thick] (0,0) -- (1,0);
            
    }};
    \node[below=6mm of s2,xshift=0] {\tikz[scale=.5]{
            \drawaxes
            \draw[pcurve,very thick] (0,0) -- (0,1);
            
    }};
    \node[below=6mm of s2,xshift=8mm] {\tikz[scale=.5]{
            \filldraw[black!15] (0,1) -- (1,1) -- (1,0) -- (0,0) --cycle;
            \draw[pcurve] (1,0) -- (1,1) -- (0,1);
            \drawaxes
    }};
    
    \node[left=of prob,yshift=4mm,xshift=5mm] {\tikz[scale=.5]{
            
            %            \filldraw[black!15] (0,0.5) -- (.5,.5) -- (1,0) -- (0,0) --cycle;
            \filldraw[black!15] (0,1) -- (.5,.5) -- (.5,0) -- (0,0) --cycle;
            %            \filldraw[black!30] (0,0.5) -- (.5,.5) -- (.5,0) -- (0,0) --cycle;
            \drawaxes
            %            \draw[pcurve] (0,0.5) -- (.5,.5) -- (1,0);
            \draw[pcurve] (0,1) -- (.5,.5) -- (.5,0);
            %            \node at (-0.3,1) {2)};
    }};
    \node[left=of prob,yshift=-4mm,xshift=5mm] {\tikz[scale=.5]{
            
            \filldraw[black!15] (0,0.5) -- (.5,.5) -- (1,0) -- (0,0) --cycle;
            %        \filldraw[black!15] (0,1) -- (.5,.5) -- (.5,0) -- (0,0) --cycle;
            %        \filldraw[black!30] (0,0.5) -- (.5,.5) -- (.5,0) -- (0,0) --cycle;
            \drawaxes
            \draw[pcurve] (0,0.5) -- (.5,.5) -- (1,0);
            %        \draw[pcurve,dotted] (0,1) -- (.5,.5) -- (.5,0);
            %            \node at (-0.3,1) {2)};
    }};

    \node[right=of T1,xshift=-2mm,yshift=2mm] {\tikz[scale=.5]{
            \drawaxes
            \draw[pcurve,very thick] (0,0) -- (1,0);
            %            \node at (-0.3,1) {0)};
            
    }};
    \node[right=of T2,xshift=-2mm] {\tikz[scale=.5]{
            \drawaxes
            \draw[pcurve,very thick] (0,0) -- (0,1);
            %            \node at (-0.3,1) {0)};
    }};
    
    \node[right=of T12,xshift=-2mm,yshift=-2mm] {\tikz[scale=.5]{
            \filldraw[black!15] (0,1) -- (1,1) -- (1,0) -- (0,0) --cycle;
            \draw[pcurve] (1,0) -- (1,1) -- (0,1);
            \drawaxes
            %        \node at (-0.3,1) {0)};
    }};
    \end{tikzpicture}
    \caption{Left: The successors of both $s_0$ and $t_0$ have the same Pareto set w.r.t.\ the DQ $\probability(\safe \overline{T_1}) \geq x_1 \vee \probability(\safe \overline{T_2}) \geq x_2$, but the Pareto sets of $s_0$ and $t_0$ are different. Right: Example run of Algorithm~\ref{alg:VI} for $k=2$ and query $\probability(\reach T_1) \geq x_1 \wedge \probability(\reach T_2) \geq x_2$. At $s_2$, the rightmost polyhedron is removed by the $\mu$-operation in line 4. The result is the intersection of the two polyhedra at $s_0$.}
    \label{fig:no_analog}
\end{figure}

%
%For the lower Pareto curve (set of achievable points), CFK+13 formulated a value iteration algorithm that, when applied to reachability, successively outputs the sets of points achievable within $k$ steps of the game. We now address the question whether an analogous algorithm exists for the upper Pareto curve. Perhaps surprisingly, the answer is \emph{no}:
%\begin{proposition}
%    Let $s \in S$ have successors $s_1, s_2$. In general, the upper horizon-$k$ Pareto curves of $s_1$ and $s_2$ do not uniquely determine the upper horizon-$k+1$ Pareto curve of $s$.
%\end{proposition}

Intuitively, the example in Figure~\ref{fig:no_analog} demonstrates that---unlike in CQ-VI---the Pareto sets alone do not convey enough information to allow for a sound VI.
In the remainder of this section we present a work-around for this problem.
The idea is to account for the missing information by extending the domain of VI to \emph{sets} of Pareto sets.
We will not work with DQs directly but with CQs in the $\forall\exists$-semantics.
This is justified because \emph{non-achievability} of a DQ can be recast as follows:
\begin{equation}
    \label{eq:dq-to-cq}
     \neg \, \exists \maxstrat \, \forall \minstrat \, \bigvee_{i=1}^n \probability^{\maxstrat,\minstrat}(\genobj_i T_i) \geq x_i 
    \quad \iff \quad
    \forall \maxstrat \, \exists \minstrat \, \bigwedge_{i=1}^n \probability^{\maxstrat,\minstrat}(\overline{\genobj_i} \overline{T_i}) > 1 - x_i
\end{equation}
where $\overline{\reach} = \safe$ and $\overline{\safe} = \reach$.
That is, for deciding achievability of a DQ $\query = \bigvee_{i=1}^n \probability(\genobj_i T_i) \geq x_i $, we can equivalently consider the dual CQ $\overline{\query} \eqdef \bigwedge_{i=1}^n \probability(\overline{\genobj_i} \overline{T_i}) > 1{-}x_i$ from \eqref{eq:dq-to-cq} under the $\forall\exists$-semantics in the game $\tilde{\game}$ where the roles of \Spoiler\ and \Achiever\ have been swapped.
In fact, with \eqref{eq:dq-to-cq}, the whole Pareto set $\val(\game,\query)$ can be recovered from $\upval(\tilde{\game},\overline{\query})$.

To define our VI, we introduce some auxiliary notation first.
$\pset{\pcurves}$ denotes the powerset of $\pcurves$. %i.e.\ the set of all sets of downward-closed polyhedra in $[0,1]^n$.
For mappings $X \in \pcurves^S$ and $\mathcal{X} \in \pset{\pcurves}^S$ from states to (sets of) polyhedra, we write $X \inpw \mathcal{X}$ iff $X(s) \in \mathcal{X}(s)$ for all $s \in S$.
Further, we let $\{X\}$ be the lifting of $X$ to $\pset{\pcurves}^S$, i.e., $\{X\}_s \eqdef \{X_s\}$.
Formally, for any fixed CQ $\query$ our new VI can be seen as a function $\uviop \colon \pset{\pcurves}^S \to \pset{\pcurves}^S$ and is defined according to Figure~\ref{fig:vi_ops}.
The iteration is started with $\mathcal{X}^0(\query) \eqdef \{X^0(\query)\}$, where $X^0(\query)$ is the same initial element as for CQ-VI.
\begin{lemma}
    \label{lem:uvicorrect}
     For all $k\geq 0$ we have that $\bigcap \left(\uviop^k(\mathcal{X}^0)_{s_0}\right) = \upval(\game^{\leq k}, \query)$.
\end{lemma}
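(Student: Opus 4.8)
The plan is to prove the lemma by induction on $k$, simultaneously establishing a stronger statement that keeps track of \emph{which} sets of Pareto sets actually arise: namely that for every state $s$, the family $\uviop^k(\mathcal{X}^0)_s$ equals the collection of horizon-$k$ Pareto sets $\{\val(\game^{\leq k, \maxstrat}, \query)_s \mid \maxstrat \text{ a strategy of \Achiever}\}$, where $\game^{\leq k,\maxstrat}$ denotes the $k$-step MDP induced by fixing $\maxstrat$. Intuitively, each element $X_s$ of $\uviop^k(\mathcal{X}^0)_s$ should correspond to the achievable region under some fixed choice of \Achiever's strategy, with \Spoiler\ still playing optimally (i.e.\ adversarially against the conjunction). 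Then $\upval(\game^{\leq k}, \query)_{s_0} = \bigcup_\maxstrat \val(\game^{\leq k,\maxstrat},\query)_{s_0}$ by definition of the $\forall\exists$-semantics, but since in a conjunctive query \Spoiler\ wants to violate \emph{all} thresholds and the achievable set for a fixed \Achiever-strategy in an MDP is \emph{downward-closed} and in fact the union corresponds to the upward-directed picture\dots{} the precise bookkeeping is that $\bigcap(\uviop^k(\mathcal{X}^0)_{s_0}) = \bigcap_\maxstrat \val(\game^{\leq k,\maxstrat},\query)_{s_0}$ is \emph{not} what we want; rather we want the union. I need to be careful here: $\bigcap$ over the \emph{set} $\uviop^k(\mathcal{X}^0)_{s_0}$ of polyhedra should reconstruct $\upval$, so each polyhedron in the set must \emph{over}-approximate $\upval$ and their intersection must be tight. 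So the invariant should instead say each $X_s$ in the family is an achievable set for \emph{some} counter-strategy response pattern in a way that, intersected, yields exactly the $\forall\exists$ value.

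\textbf{Key steps.} First I would set up the base case: $\mathcal{X}^0(\query)_s = \{\dwc(\mathbbm{1}^\query_s)\}$ is a singleton and $\bigcap$ of it is $\dwc(\mathbbm{1}^\query_s) = \val(\game^{\leq 0},\query)_s = \upval(\game^{\leq 0},\query)_s$, which holds since at horizon $0$ no player moves. Second, for the inductive step I would analyze the three cases of the $\uviop$ operator from Figure~\ref{fig:vi_ops} in parallel with the semantic definition of $\upval(\game^{\leq k+1},\query)_s$ via the recursive characterization of the $\forall\exists$-value: at a \Spoiler\ state, $\upval(\game^{\leq k+1})_s = \bigcup_{t\in\choices(s)}\upval(\game^{\leq k})_t$ because \Achiever, seeing \Spoiler's fixed strategy, can restrict to whatever successor \Spoiler\ picks; and correspondingly $\uviop$ takes $\bigcup$ of the successor families — the key point being that $\bigcap(\bigcup_t \mathcal{X}_t)$ need not equal $\bigcup_t \bigcap(\mathcal{X}_t)$, which is exactly \emph{why} we carry the whole set of polyhedra rather than just its intersection (this is the content of the Observation preceding the lemma). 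At an \Achiever\ state, $\uviop$ forms $\{\conv(\bigcup_t X_t) \mid X \inpw \mathcal{X}\}$, matching the fact that \Achiever\ commits to one polyhedron-choice per successor and then convexifies (randomization among successors); here I would invoke that \Achiever's achievable sets are closed under convex combination. At a probabilistic state, $\uviop$ forms $\{\sum_t P(s,t)X_t \mid X\inpw\mathcal{X}\}$, matching the weighted Minkowski sum of achievable sets. Third, I would verify that taking $\bigcap$ commutes appropriately: $\bigcap\{\conv(\bigcup_t X_t) \mid X\inpw\mathcal{X}\} = \conv(\bigcup_t \bigcap\mathcal{X}_t)$ and $\bigcap\{\sum_t P(s,t)X_t\mid X\inpw\mathcal{X}\} = \sum_t P(s,t)\bigcap\mathcal{X}_t$ — these use that $\conv$ and Minkowski sum with fixed weights distribute over intersections of downward-closed polyhedra in the required one-sided way, and that the $\inpw$ quantifier ranges over all independent choices. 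Combined with the \Spoiler\ case handled by the union (no collapse), induction closes. I would also need to confirm $\uviop$ is well-defined, i.e.\ preserves downward-closed polyhedra and finiteness of the families, which follows since $\choices(s)$ and $S$ are finite so all unions and products are finite.

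\textbf{Main obstacle.} The hard part will be the two distributivity identities at \Achiever\ and probabilistic states, specifically showing $\bigcap\{\sum_{t} P(s,t) X_t \mid X\inpw\mathcal{X}\} = \sum_t P(s,t)\,\bigcap(\mathcal{X}_t)$ and the analogous one with $\conv$. The inclusion $\supseteq$ is routine, but $\subseteq$ requires a separation argument: a point outside $\sum_t P(s,t)\bigcap(\mathcal{X}_t)$ must be excluded by \emph{some} choice of $X_t \in \mathcal{X}_t$, which one shows by picking, for each $t$, a polyhedron in $\mathcal{X}_t$ whose supporting halfspace in a witnessing direction is tightest — using that all the polyhedra involved are downward-closed and that we only need the identity to hold after the final intersection at $s_0$. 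I would phrase this via support functions in nonnegative directions (for downward-closed sets the relevant separating directions have nonnegative coordinates), reducing the set identity to the scalar identity $\min_{X\inpw\mathcal{X}} \sum_t P(s,t) h_{X_t}(\vec d) = \sum_t P(s,t)\min_{X_t\in\mathcal{X}_t} h_{X_t}(\vec d)$, which holds because the minimization over the product decomposes coordinatewise. The convex-hull case is similar with $\min$ replaced by $\max$ over successors $t$ inside, and $\min$ over the family outside, again decomposing. Handling these carefully — and making sure the one-sided nature of the identities (we only ever apply $\bigcap$ at the end, at $s_0$) is enough — is where the real work lies; everything else is bookkeeping against Figure~\ref{fig:vi_ops} and the definition of $\upval(\game^{\leq k},\query)$.
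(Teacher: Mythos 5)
Your overall plan (induction on $k$, tracking per-state families of polyhedra) points in the right direction, but the proof as outlined fails at the step you yourself flag as the main obstacle: the two distributivity identities at \Achiever's and probabilistic states are false. In general $\bigcap\{\sum_t P(s,t)X_t \mid X \inpw \mathcal{X}\} \supsetneq \sum_t P(s,t)\,\bigcap(\mathcal{X}_t)$, and the right-hand game of Figure~\ref{fig:no_analog} is a direct counterexample: there $\mathcal{X}_{s_1}$ is the singleton containing the triangle $T = \dwc(\conv\{(1,0),(0,1)\})$ and $\mathcal{X}_{s_2} = \{\dwc((1,0)), \dwc((0,1))\}$ after the $\mu$-step, so the right-hand side of your identity at $s_0$ is $\tfrac12 T + \tfrac12\{(0,0)\}$, which excludes the point $(\tfrac12,\tfrac12)$, whereas the intersection of the two polyhedra actually produced at $s_0$ contains it (take $\tfrac12(0,1)+\tfrac12(1,0)$ in one and $\tfrac12(1,0)+\tfrac12(0,1)$ in the other). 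Indeed, if your identities held, the final intersection at every state would be determined by the final intersections at its successors, so carrying \emph{sets} of polyhedra would be pointless --- this is exactly what the Observation preceding the lemma rules out. You instead locate the obstruction at \Spoiler's states, but there the collapse is harmless, since $\bigcap(\bigcup_t \mathcal{X}_t) = \bigcap_t\bigcap(\mathcal{X}_t)$ always holds; relatedly, your local law $\upval(\game^{\leq k+1},\query)_s = \bigcup_{t}\upval(\game^{\leq k},\query)_t$ at \Spoiler's states is wrong, because \Achiever\ must meet the thresholds against \emph{every} choice of \Spoiler, which forces an intersection of the successor values. Your support-function reduction also breaks down because the support function of an intersection of convex sets is not the pointwise minimum of their support functions.

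The repair is the invariant you gesture at in your first paragraph and then abandon: index the family not by \Achiever's strategies but by \Spoiler's \emph{deterministic $k$-step strategies} $\minstrat$ (deterministic strategies suffice for him by Lemma~\ref{lem:strat-upper-bound}, as he is trying to achieve the dual DQ), and prove that $\uviop^k(\mathcal{X}^0)_s$ contains, for each such $\minstrat$, the polyhedron $\val(\game_s^{\leq k,\minstrat},\query)$ of thresholds \Achiever\ can secure in the induced $k$-step MDP. The inductive step then matches each case of Figure~\ref{fig:vi_ops} against the decomposition of a $k$-step strategy of \Spoiler\ into a first move plus $(k{-}1)$-step continuations at the successors, and the intersection is taken only once, at the very end, where it implements the outer $\forall\minstrat$ of the $\forall\exists$-semantics. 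No commutation of $\bigcap$ with $\uviop$ is needed --- or available.
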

\begin{proof}[Proof (sketch)]
    It can be shown that for all $k \geq 0$, the set $\uviop^k(\mathcal{X}^0)_{s_0}$ contains the Pareto sets achievable by \Achiever\ in the $k$-step MDPs induced by each possible \emph{deterministic} $k$-step strategy $\minstrat$ of \Spoiler\ \iftoggle{arxiv}{(Appendix~\ref{app:proof-uvicorrect})}{\cite{arxiv}}.
    The final intersection over $\uviop^k(\mathcal{X}^0)_{s_0}$ is due to the outer $\forall$-quantifier in the $\forall\exists$-semantics.
\end{proof}

%Moreover, for each initial state $s$, the limit of the sequence $\bigcap \uviop^k(\mathcal{X}^0)_s$, $k=0,1,2,\ldots$, exists and is equal to the infinite-horizon Pareto set $\upval(\game,\query)$.

The iteration according to Lemma~\ref{lem:uvicorrect} is essentially equivalent to enumerating all possible deterministic $k$-step strategies of \Spoiler\ and analyzing the induced MDPs.
In the worst case, there are doubly exponentially many (in $k$) such strategies and thus the number of polyhedra $|\uviop^k(\mathcal{X}^0)_s|$ maintained per state $s \in S$ in the $k$-th step is also at most doubly exponential.
This can be improved.
In general, not all $k$-step strategies have to be considered:
If for $k$-step  strategies $\minstrat, \minstrat'$ it holds that the set of points achievable by \Achiever\ in the induced MDP $\game^\minstrat$ is contained in set of points achievable in $\game^{\minstrat'}$, then only $\minstrat$ is relevant and $\minstrat'$ can be discarded.
Intuitively, \Spoiler\ would always (independently of the thresholds $\vec{x}$) prefer $\minstrat$ over $\minstrat'$ in such a situation.
We can incorporate this observation into our value iteration: Let $\mu \colon \pset{\pcurves}^S \to \pset{\pcurves}^S$ be the function that removes the non-inclusion-minimal polyhedra of each $\mathcal{X}_s$.
We can then iterate $\mu \circ \uviop$ instead of $\uviop$ without changing the result of the ``final intersection'' in Lemma~\ref{lem:uvicorrect}. We summarize the overall procedure as Algorithm~\ref{alg:VI}.

%\begin{algorithm}[t]
%    \small
%    \caption{Value Iteration for CQs in the asserted-exposure ($\forall \exists$) semantics}\label{alg:VI}
%    \begin{algorithmic}[1]
%        \Require Game $\game$, (mixed) CQ $\query$, horizon $k \geq 0$
%        \Ensure The horizon-$k$ Pareto set $\upval(\game^{\leq k}, \query)$
%        \State $\mathcal{X} \gets \{X^0(\query)\}$ \Comment{Initialization}
%        \For {$i$ from $1$ to $k$}
%        \State $\mathcal{X} \gets \uviop(\mathcal{X})$ \Comment{Apply the update according to Table~\ref{tab:vi_ops}}
%        \State $\mathcal{X} \gets \mu(\mathcal{X})$ \Comment{Retain only $\subseteq$-minima of each $\mathcal{X}_s$}
%        \EndFor
%        \State \textbf{return} $\bigcap \mathcal{X}_\sinit$ \Comment{Intersection of curves at initial state}
%    \end{algorithmic}
%\end{algorithm}

\begin{restatable}{theorem}{thmalgcorrect}
    \label{thm:alg-correct}
    Algorithm~\ref{alg:VI} is correct.
\end{restatable}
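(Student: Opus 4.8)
The plan is to prove Theorem~\ref{thm:alg-correct} by reducing it to Lemma~\ref{lem:uvicorrect}, which already establishes that iterating the operator $\uviop$ (without the pruning $\mu$) yields $\bigcap(\uviop^k(\mathcal{X}^0)_{s_0}) = \upval(\game^{\leq k},\query)$. The only extra ingredient needed is that interleaving the pruning step $\mu$ does not affect the value of the final intersection $\bigcap(\cdot)_{s_0}$ after any number of iterations. So the core of the argument is a ``commutation/soundness of pruning'' lemma: for every $\mathcal{X} \in \pset{\pcurves}^S$ and every $k \geq 0$,
\[
    \textstyle\bigcap\big((\mu \circ \uviop)^k(\mathcal{X})_{s_0}\big) ~=~ \bigcap\big(\uviop^k(\mathcal{X})_{s_0}\big).
\]

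First I would isolate the key monotonicity facts about $\uviop$ and $\mu$ with respect to the pointwise ``dominance'' preorder on $\pset{\pcurves}^S$ defined by $\mathcal{X} \sqsubseteq \mathcal{Y}$ iff for every $s$ and every $P \in \mathcal{X}_s$ there is $Q \in \mathcal{Y}_s$ with $P \subseteq Q$ (and symmetrically, every element of $\mathcal{Y}_s$ is dominated by one of $\mathcal{X}_s$, so that the two sets have the same ``upper envelope''). Concretely I need: (i) $\mu(\mathcal{X})$ and $\mathcal{X}$ are equivalent under this preorder, since $\mu$ only deletes polyhedra that are contained in another retained one; (ii) $\uviop$ is monotone for $\sqsubseteq$ — this follows by inspecting each case of Figure~\ref{fig:vi_ops}, using that $\bigcup$, $\bigcap$, $\conv$, and the Minkowski-style combination $\sum_t P(s,t)X_t$ are all monotone in each argument under set inclusion, and that the ``$\{\,\cdots \mid X \inpw \uvidomelem\}$'' comprehensions preserve the envelope; (iii) the final intersection $\bigcap(\cdot)_{s_0}$ is invariant under passing to an $\sqsubseteq$-equivalent mapping, because $\bigcap$ of a family of downward-closed polyhedra equals $\bigcap$ of any subfamily with the same upper envelope. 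Granting these, a straightforward induction on $k$ gives $(\mu\circ\uviop)^k(\mathcal{X}^0) \equiv \uviop^k(\mathcal{X}^0)$ (where $\equiv$ is mutual $\sqsubseteq$), and hence the two final intersections agree; combined with Lemma~\ref{lem:uvicorrect} this is exactly the claim that Algorithm~\ref{alg:VI} outputs $\upval(\game^{\leq k},\query)$, which by the duality~\eqref{eq:dq-to-cq} recovers $\val(\game,\query)$ as intended.

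I would also need to check the two ``bookkeeping'' correctness aspects of the algorithm: that $\mu\circ\uviop$ indeed maps $\pset{\pcurves}^S$ to itself (each entry is a finite set of downward-closed polyhedra — finiteness is preserved because each operator produces finitely many polyhedra from finitely many inputs, and $\mu$ only removes elements), and that the initialization $\mathcal{X}^0(\query) = \{X^0(\query)\}$ matches the one in Lemma~\ref{lem:uvicorrect}. If Algorithm~\ref{alg:VI} contains a termination criterion (e.g.\ a fixpoint or $\varepsilon$-closeness check), I would additionally argue its soundness — but since the theorem as stated concerns computing the horizon-$k$ Pareto sets, I expect ``correctness'' to mean precisely: for each $k$, the $k$-th iterate evaluated at $s_0$ has intersection $\upval(\game^{\leq k},\query)$.

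The main obstacle I anticipate is establishing monotonicity of $\uviop$ for the Eve- and probabilistic-state cases, i.e.\ showing that the set-comprehension constructions $\{\conv(\bigcup_t X_t) \mid X \inpw \uvidomelem\}$ and $\{\sum_t P(s,t)X_t \mid X \inpw \uvidomelem\}$ behave correctly under $\sqsubseteq$. The subtlety is that these produce sets that are in general not $\mu$-reduced and may be exponentially large, and one must verify that replacing some $\uvidomelem_t$ by a dominated/pruned version changes the resulting set only up to $\sqsubseteq$-equivalence — in particular that pruning a successor cannot delete a polyhedron needed to witness some point in the final intersection at $s_0$. This requires carefully tracking how a point in $\bigcap(\uviop^k(\mathcal{X}^0)_{s_0})$ is ``assembled'' from choices of polyhedra at the successors (essentially a deterministic $k$-step strategy of Adam, per the proof of Lemma~\ref{lem:uvicorrect}), and observing that whenever Adam's chosen polyhedron at some state is pruned, the dominating polyhedron yields an at-least-as-good alternative for that strategy, so the intersection is unchanged. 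Once this dominance argument is in place, the induction and the appeal to Lemma~\ref{lem:uvicorrect} are routine.
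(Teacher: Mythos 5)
Your overall architecture is the right one and matches the paper's: reduce to Lemma~\ref{lem:uvicorrect} by showing that interleaving the pruning $\mu$ does not change the final intersection, via a commutation statement and a domination argument on $\pset{\pcurves}^S$. However, as written your key claims (i) and (iii) have the domination relation pointing the wrong way, and claim (iii) is false in that form. The operation $\mu$ in Algorithm~\ref{alg:VI} keeps the $\subseteq$-\emph{minimal} polyhedra of each $\mathcal{X}_s$; it deletes a polyhedron $Q$ precisely when some retained $P \subsetneq Q$ exists --- not, as you write, polyhedra ``contained in another retained one.'' Consequently $\mu$ does \emph{not} preserve the upper envelope $\bigcup \mathcal{X}_s$ (removing the larger $Q$ can shrink the union), and, more importantly, the final intersection is \emph{not} determined by the upper envelope: take $\mathcal{X}_{s_0} = \{P\}$ and $\mathcal{Y}_{s_0} = \{P, R\}$ with $R \subsetneq P$. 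These are equivalent under your preorder (every element of either family is contained in $P$, which both contain), yet $\bigcap \mathcal{X}_{s_0} = P \neq R = \bigcap \mathcal{Y}_{s_0}$. The intersection is determined by the \emph{lower} envelope, i.e., by the $\subseteq$-minimal elements --- which is exactly why $\mu$ retains those.

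The fix is to dualize the preorder: declare $\mathcal{X}$ and $\mathcal{Y}$ equivalent when, for every $s$, every polyhedron of $\mathcal{X}_s$ \emph{contains} some polyhedron of $\mathcal{Y}_s$ and vice versa. Then (i) $\mu(\mathcal{X}) \equiv \mathcal{X}$ holds by definition of minimality, (iii) equivalence preserves the intersection, and (ii) monotonicity of $\uviop$ in the required direction follows because $\bigcup$, $\conv \circ \bigcup$ and the weighted sums are $\subseteq$-monotone in each argument: replacing each chosen successor polyhedron in a selection $X \inpw \mathcal{X}$ by a minimal one below it produces an element of $\uviop(\mu(\mathcal{X}))_s$ contained in the corresponding element of $\uviop(\mathcal{X})_s$, while the converse direction is trivial since $\mu(\mathcal{X})_s \subseteq \mathcal{X}_s$. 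With this correction your induction coincides with the paper's proof, which establishes $\mu(\uviop(\mathcal{X})) = \mu(\uviop(\mu(\mathcal{X})))$ for finite families and concludes $\mu(\uviop^k(\mathcal{X}^0)) = (\mu \circ \uviop)^k(\mathcal{X}^0)$, whence the two final intersections agree and Lemma~\ref{lem:uvicorrect} applies. Your closing paragraph already contains the correct intuition --- a pruned polyhedron is a superset of a retained one, so it cannot remove anything from the intersection that the retained one does not already remove --- you just need to carry that direction consistently through claims (i)--(iii).
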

%\begin{proof}[Proof (sketch)]
%    \todo{Give actual proof}
%    With Lemma~\ref{lem:uvicorrect} we only need to show that restricting to inclusion-minimal polyhedra (line 5) is correct. Intuitively, the polyhedra-operations convex union and convex combination used in operator $\uviop$ \eqref{eq:uv} are monotonic with respect to inclusion. Therefore, ``filtering out'' the inclusion minimal polyhedra \emph{after} iterating $\uviop$ for some finite amount of steps is equivalent to removing them \emph{after each individual} iteration.
%\end{proof}

\subparagraph{Experiments.}

To assess the complexity of Algorithm~\ref{alg:VI} in practice we have built a prototypical implementation\footnote{Available at \url{https://doi.org/10.5281/zenodo.5047440}} using the \textsc{Parma Polyhedra Library}~\cite{PPL}.
We have tested our implementation on a variant of the smart heating example from~\cite{BF16} that was itself inspired from the case study in \cite{larsen-smartheating}.
Further, we consider randomly generated 2-dimensional games with 10 states (see \iftoggle{arxiv}{Appendix~\ref{app:exp-details}}{\cite{arxiv}} for more details).
To determine the relative overhead of our algorithm compared to CQ-VI we consider the number $n^k_s$ of polyhedra maintained at state $s$ in iteration $k$.
For the floor heating example we found that $n^k_s = 1$ for all $k \geq 0$ and $s \in S$, which means that \Spoiler\ has a unique optimal strategy from each $s$ and step-bound $k$.
Moreover, the game is determined.
For the randomly generated games, we observed that approximately 90\% of them also had $n^k_s = 1$ for all $k \geq 0$ and $s \in S$.
We conjecture that this is indeed a typical situation (as in the floor heating example), however, 90\% might be a too high estimate due to trivial random games.
In the following, we only consider ``hard'' instances with $n^k_s > 1$ for at least one state $s$ and some $k\geq 0$.
In Table 4, we report the mean number of polyhedra $\overline{n}^k = \frac{1}{|S|}\sum_{s \in S} n^k_s$ of 100 ``hard'' games for various iteration counts $k$.
The empirical average of $\overline{n}^k$ over the 100 instances that were processed within the timeout is given in column $E[\overline{n}^k]$ and the number of timeouts (10 seconds) in column T/O.
We have also compared the algorithm with and without the $\mu$-operation in Line 4 of Algorithm~\ref{alg:VI} (columns $\mu \circ \uviop$ and $\uviop$, respectively).

%For $k \in \{1,5,10,20,100\}$ we have sampled bunches of 100 ``interesting'' instances with $n^k_s > 1$ for at least one state $s$ and have computed the mean $\overline{n}^k = \frac{1}{|S|}\sum_{s \in S} n^k_s$.
%%
%Note that $\overline{n}^k$ is the average number of polyhedra maintained per state at the $k$-th iteration of the algorithm.
%%
%The results are summarized in Table~\ref{tab:exp}.
%%
%$E[\overline{n}^k]$ denotes the average value of $\overline{n}^k$ observed among all instances that could be handled within the fixed timeout of $10s$.

In summary, our experiments show that in many cases the necessary number of polyhedra is low enough to be feasible, often even only 1. In ``hard'' cases, our results show that after dozens of iterations the number of polyhedra blows up dramatically, frequently resulting in a timeout (which is why the numbers for $\uviop$ decrease after 10 iterations, as only the instances with lower $\overline{n}^k$ finish). This highlights the difficulty of DQs compared to CQs.
Still, using our optimization $\mu$, in many ``hard'' cases the computation finishes and the number of polyhedra per state stays below 10, and thus is 2 orders of magnitude smaller than without~$\mu$.

%\begin{table}[t]
%    \small
%    \centering
%    \caption{Experimental results bla bla bla bla}
%    \label{tab:exp}
%    \begin{tabular}[b]{c c c c c}\toprule
%        & \multicolumn{2}{c}{$E[\overline{n}^k]$} & \multicolumn{2}{c}{T/O}\\ 
%        $k$ & $\uviop$  & $\mu \circ \uviop$ &$\uviop$ & $\mu \circ \uviop$\\ \midrule
%        1 & 1.287 & 1.140 & 0 & 0 \\
%        5 & 16.10 & 1.785 & 0  & 0\\
%        10 & 526.3 & 6.752 & 63 & 12\\
%        20 & 202.4 & 5.374 & 80 & 30 \\
%        100 & 78.96 &2.622 & 90 & 50\\ \bottomrule
%    \end{tabular}
%\end{table}

\begin{figure}[t] % the wrapping figure is to put the minipage on top of the page
%    \begin{minipage}{\linewidth}
        \begin{minipage}[t]{0.6\textwidth}
            \centering
%            \small
            \captionof{algorithm}{Value Iteration for CQs in the asserted-exposure ($\forall \exists$) semantics.}
            \label{alg:VI}
            \rule{\linewidth}{0.8pt}
            \begin{algorithmic}[1]
                \Require Game $\game$, (mixed) CQ $\query$, horizon $k \geq 0$
                \Ensure The horizon-$k$ Pareto set $\upval(\game^{\leq k}, \query)$
                \State $\mathcal{X} \gets \{X^0(\query)\}$ \Comment{Initialization}
                \For {$i$ from $1$ to $k$}
                \State $\mathcal{X} \gets \uviop(\mathcal{X})$ \Comment{Apply $\uviop$ according to Figure~\ref{fig:vi_ops}}
                \State $\mathcal{X} \gets \mu(\mathcal{X})$ \Comment{Keep only $\subseteq$-minima of each $\mathcal{X}_s$}
                \EndFor
                \State \textbf{return} $\bigcap \mathcal{X}_\sinit$ \Comment{Intersection of curves at initial state}
            \end{algorithmic}
            \rule{\linewidth}{0.8pt}
        \end{minipage}%
        \hfill
        \begin{minipage}[t]{0.35\textwidth}
%            \small
            \centering
            \captionof{table}{Experimental results for a fixed timeout of 10s.}
            \label{tab:exp}
            {\def\arraystretch{1.099}
            \begin{tabular}{r r r r r}\toprule[0.8pt]
                 & \multicolumn{2}{c}{$E[\overline{n}^k]$} & \multicolumn{2}{c}{T/O}\\ 
                $k$ & $\uviop$  & $\mu \circ \uviop$ &$\uviop$ & $\mu \circ \uviop$\\ \midrule
                1 & 1.2 & 1.1 & 0 & 0 \\
                5 & 16.1 & 1.8 & 0  & 0\\
                10 & 526.3 & 6.8 & 63 & 12\\
                20 & 202.4 & 5.4 & 80 & 30 \\
                100 & 78.9 &2.6 & 90 & 50\\ \bottomrule[0.8pt]
            \end{tabular}}
        \vspace{0.2cm}
        \end{minipage}
%    \end{minipage}
\end{figure}

\section{Conclusion and Future Work}

We have presented a detailed picture of computational and strategy complexity of SGs with DQ winning conditions.
The results were obtained in part by providing reductions from CQs to DQs and applying results from the literature.
Future work on the complexity side includes closing the gaps in Tables \ref{tab:scomp} and \ref{tab:ccomp}; however, we conjecture that this requires significant new insights.
For example, a major obstacle towards proving $\PSPACE$ membership of the quantitative general-strategies DQs problem is that one has to reason about \emph{exact} reachability probabilities in the exponentially large goal-unfolding.
It is not at all obvious that the number of bits needed for the rational representations of these quantities remains polynomially bounded.

We have also argued that DQs are equivalent to CQs in the optimistic ``asserted-exposure'' ($\forall\exists$) semantics obtained by changing the quantification order over strategies---unlike in simple SGs, this makes a difference since our games are not always determined.
Moreover, we have formulated the first VI-style algorithm for DQs in the standard and CQs in the $\forall\exists$-semantics.
It should be straightforward to extend our algorithm to expected rewards as well.
Another interesting application of the algorithm is to certify determinacy (for a finite step bound).
Regarding future work, it would be appealing to implement the algorithm in a tool such as PRISM-games and to experiment with more realistic case studies.
Yet another direction is to investigate (counter-)strategy synthesis for \Achiever\ in the $\forall\exists$-semantics, e.g., by constructing strategy templates where some choices depend on \Spoiler's observable strategy.

%%
%% Bibliography
%%

\bibliographystyle{eptcs}
\bibliography{ref}

\iftoggle{arxiv}{
    \appendix
    
%\section{Further details on basic definitions}
%\input{appendix_prelims}

\section{Full proofs of Section~\ref{sec:strat_comp}}
\subsection{Proof of Lemma~\ref{lem:strat-upper-bound}}
\label{app:proof-strat-upper-bound}

\lemstratupperbound*

\begin{proof}
    By \cite[Theorem 7]{cfk13}, MD strategies are sufficient for quantitative DQs with \emph{expected reward} objectives (see the formal definition in~\cite{cfk13}).
    We reduce our reachability and safety objectives to expected reward objectives in the goal-unfolding.
    More formally, given a mixed DQ with $n$ objectives, each of which is either reachability or safety, we define the reward functions $r_i \colon S \to \Reals$, $i = 1,\ldots,n$, as follows:
    For each reachability objective $\reach T_i$, we grant reward $r_i(s) = 1$ if $T_i$ is visited in $s$ \emph{for the first time}, and $r_i(s) = 0$ for all other $s$.
    Similarly, for each safety objective $\safe T_i$, we grant reward $r_i(s) = -1$ if the unsafe set $S \setminus T_i$ is visited in $s$ \emph{for the first time}, and again $r_i(s) = 0$ for all other $s$.
    Then under any pair $\maxstrat, \minstrat$ of strategies, $\probability^{\maxstrat, \minstrat}(\reach T_i) \geq x_i$ iff $\expectation^{\maxstrat, \minstrat}(r_i) \geq x_i$ and $\probability^{\maxstrat, \minstrat}(\safe T_i) \geq x_i$ iff $\expectation^{\maxstrat, \minstrat}(r_i) \geq x_i - 1$, where $\expectation^{\maxstrat, \minstrat}(r_i)$ denotes the expected value of reward function $r_i$ assuming $\maxstrat, \minstrat$ are fixed.
    The result follows noticing that any strategy of \Achiever\ in the goal-unfolding can be identified with a strategy using $2^n$ memory in the original game. This can be slightly lowered to $2^n{-}1$ because \Achiever's behavior in the final stage of the unfolding where \emph{all} targets/unsafe sets have been visited is irrelevant.
\end{proof}

\subsection{Proof of Lemma~\ref{lem:scomp-SS-M-qual}}
\label{app:proof-scomp-SS-M-qual}

\lemscompSSMqual*

\begin{proof}
    We first show the following auxiliary claim: Any achieving strategy $\maxstrat$ of \Achiever\ for the qualitative DQ $\bigvee_{i=1}^n \probability(\genobj_i T_i) \geq 1$ satisfies
    \begin{equation}
    \label{eq:aux-claim}
    \exists i \in \{1,\ldots,n\}, \forall \minstrat \colon \probability^{\maxstrat, \minstrat}(\genobj_i T_i) \geq 1~.
    \end{equation}
    Towards contradiction assume \eqref{eq:aux-claim} does not hold. Then for all $T_i$, $i = 1,\ldots,n$, there exists a $\minstrat_i$ such that $\probability^{\maxstrat, \minstrat_{i}}(\genobj_i T_i) < 1$.
    Let $\minstrat$ be the strategy of \Spoiler\ that at the beginning of the game selects one of the strategies $\minstrat_i$ uniformly at random and follows this strategy for the remaining play. Note that this $\minstrat$ uses stochastic memory update. But then \emph{for each} $T_i$, $\probability^{\maxstrat, \minstrat}(\genobj_i T_i) = \frac{1}{n}\sum_{j=1}^n \probability^{\maxstrat, \minstrat_{j}}(\genobj_i T_i) \leq \frac{n-1}{n} + \frac{1}{n} \probability^{\maxstrat, \minstrat_{i}}(\genobj_i T_i) < 1$, contradiction.
    Hence by \eqref{eq:aux-claim}, \Achiever\ can play optimal for \emph{one specific} target and ignore the others. MD strategies suffice for this~\cite{Condon90}.
\end{proof}

\subsection{Proof of Lemma~\ref{lem:reduction-cq-dq}}
\label{app:proof-reduction-cq-dq}

\lemreductioncqdq*

\begin{proof}
    Let $n\geq 2$ as the claim is trivially true for $n=1$.
    We recall the construction of $\game'$ from the main text:
    The original game $\game$ is only played with probability $\nicefrac{1}{2}$.
    With the remaining $\nicefrac 1 2$, \Spoiler\ freely chooses one of the $n$ targets $T_1,\ldots,T_n$, after which the game ends immediately (Figure~\ref{fig:scomp-SS-S-quant}, left).
    If a strategy $\maxstrat$ of \Achiever\ achieves the CQ $\query$ in $\game$,
    it follows that at least one $T_i$ is visited with probability $1$ if \Spoiler\ has to resolve his extra choice deterministically.
    Otherwise, if randomization is allowed, then for all strategies $\minstrat'$ of \Spoiler\ in $\game'$ at least one $T_i$ is visited with probability $\probability^{\maxstrat,\minstrat'}(\reach T_i) \geq \frac{1}{2} + \frac{1}{2n}$.
    This means that $\maxstrat$ achieves the \emph{qualitative} DQ $\query'_{qual} \eqdef \bigvee_{i=1}^n \probability(\reach T_i)\geq1 $ in the deterministic-strategies case, and the \emph{quantitative} DQ $\query'_{quant} \eqdef \bigvee_{i=1}^n \probability(\reach T_i)\geq \frac{1}{2} + \frac{1}{2n}$ in the general-strategies case.
    On the other hand, if $\maxstrat$ does not achieve the CQ $\query$, then \Spoiler\ can ensure with some counter-strategy $\minstrat$ that at least one $T_i$, $1 \leq i \leq n$, is visited with probability $\delta < 1$ in the game $\game^\sigma$.
    In the deterministic-strategies case, he can extend $\minstrat$ to a strategy $\minstrat'$ in $\game'$ by resolving the additional choice with $T_i$.
    Then $\probability^{\maxstrat,\minstrat'}(\reach T_j) \leq \frac{1}{2}\delta + \frac{1}{2} < 1$ for all $j=1,\ldots,n$ which proves that $\maxstrat$ does not achieve the DQ $\query'_{qual}$.
    In the general-strategies case, \Spoiler\ can define $\minstrat'$ by selecting $T_i$ with probability $\frac{1}{n} + x$, where $0< x < 1{-}\delta$ and all other $T_j$, $j \neq i$, with probability $\frac{1}{n} - \frac{x}{n-1}$.
    Target $T_i$ is reached with probability equal to
    \begin{align*}
    \probability^{\maxstrat,\minstrat'}(\reach T_i) & = \frac{1}{2}\big(\frac{1}{n} + x\big) + \frac{1}{2} \delta \\
    &<  \frac{1}{2}\big(\frac{1}{n} + (1-\delta)\big)+ \frac{1}{2} \delta  \qquad \text{(as $1-\delta < 1$)}\\
    &=  \frac{1}{2n} + \frac{1}{2}~.
    \end{align*}
    Each other target $T_j$, $j \neq i$, is reached with probability at most
    \begin{align*}
    \probability^{\maxstrat,\minstrat'}(\reach T_j)& \leq \frac{1}{2}\big( \frac{1}{n} - \frac{x}{n-1}\big) + \frac{1}{2} \\
    &<  \frac{1}{2n} + \frac{1}{2}  \qquad \text{(as $x > 0$)}~.
    \end{align*}
    Thus $\probability^{\maxstrat,\minstrat'}(\reach T_j) < \frac{1}{2} + \frac{1}{2n}$ for all $j=1,\ldots,n$, implying that \Achiever's strategy $\maxstrat$ does also not achieve the quantitative DQ $\query'_{quant}$.
\end{proof}

\subsection{Proof of Lemma~\ref{lem:scomp-SS-R-quant}}
\label{app:proof-scomp-SS-R-quant}

\lemscompSSRquant*

\begin{proof}
    As explained in~\cite{FH10}, in the deterministic-strategies case \Achiever\ achieves the qualitative CQ $\bigwedge_{i=1}^n \probability(\reach T_i)\geq 1$ in the game sketched in Figure~\ref{fig:fh10-game} with the following strategy $\maxstrat$ using $2^n-1$ memory:
    She remembers exactly the set of ``petals'' that have been visited at any particular moment during a play.
    If petal $i$ is visited for the first time, she visits the corresponding target $T_i$, returning control to \Spoiler.
    If petal $i$ is visited for the second time, she plays the choice that leads to the sink, thereby visiting all targets $T_j$, $j \neq i$, and the game ends.

    First observe that this is an achieving strategy even if \Spoiler\ is allowed to use randomization.
    Assume that \Spoiler\ has a randomized counter-strategy $\minstrat$ in the finite MDP $\game^\maxstrat$.
    Then there exists some $T_i$ such that $\probability^{\maxstrat, \minstrat}(\reach T_i) < 1$.
    But since MD strategies are sufficient for minimizing reachability probabilities in finite MDPs, it follows that there exists an MD strategy $\minstrat'$ such that $\probability^{\maxstrat, \minstrat'}(\reach T_i) \leq \probability^{\maxstrat, \minstrat}(\reach T_i) < 1$.
    This is a contradiction to the fact that $\maxstrat$ reaches each target certainly under deterministic strategies.

    It remains to show that there is no randomized achieving strategy of \Achiever\ using less than $2^n - 1$ memory.
    To this end let $\maxstrat$ be a (possibly randomized) strategy of \Achiever\ using less than $2^n - 1$ memory.
    Similar to the proof in~\cite{FH10}, for each memory state $m$ of $\maxstrat$ we define the set of petals $S_m \subseteq \{1,\ldots,n\}$ as the petals where $\maxstrat$ would stop the play \emph{with positive probability} if \Spoiler\ would choose one of them next.
    Then there exists $X \subsetneq \{1,\ldots,n\}$ such that $S_m \neq X$ for all memory states $m$.
    But then \Spoiler\ wins with a strategy $\minstrat$ that always moves the token to a petal in $(X \setminus S_m ) \cup (S_m \setminus X) \neq \emptyset$:
    There are two cases:
    \begin{itemize}
        \item In each round, $\minstrat$ selects a petal $i \in X \setminus S_m$.
        Then the game runs forever with probability 1 and only the targets corresponding to the petals in $X$ are visited.
        Thus \Achiever\ loses.
        \item After some finite number of rounds, $\minstrat$ selects a petal $i \in S_m \setminus X$ for the first time.
        Then up to this point, only targets corresponding to petals in $X$ have been visited.
        In particular, $T_i$ has not been visited yet.
        \Achiever\ then loses because she moves the token with positive probability to a sink that is not contained in $T_i$.
    \end{itemize}

	To conclude the proof, we use Lemma~\ref{lem:reduction-cq-dq}.1 to reduce each $\game_n$ to an SG with a \emph{quantitative reachability DQ} where \Achiever's winning strategies are the same as those in the original $\game$, i.e., they require at least $2^n{ - }1$ memory.
\end{proof}

\begin{figure}[t]
    \centering
    \begin{tikzpicture}[node distance = 15mm]
    \node[min] (h) {};
    
    \node[max,right=of h] (e1) {};
    \node[prob,target,above = of e1] (t1) {$T_1$};
    \node[prob,target,above left= 5mm of e1,inner sep=1] (nt1) {\tiny $T_{2,3,4}$};
    
    \draw[->] (h) edge[bend right] (e1);
    \draw[->] (e1) edge[bend right] (t1);
    \draw[->] (t1) edge[bend right] (h);
    \draw[->] (e1) edge (nt1);
    
    \node[max,below=of h] (e2) {};
    \node[prob,target,right = of e2] (t2) {$T_2$};
    \node[prob,target,above right= 5mm of e2,inner sep=1] (nt2) {\tiny $T_{1,3,4}$};
    
    \draw[->] (h) edge[bend right] (e2);
    \draw[->] (e2) edge[bend right] (t2);
    \draw[->] (t2) edge[bend right] (h);
    \draw[->] (e2) edge (nt2);
    
    \node[max,left=of h] (e3) {};
    \node[prob,target,below = of e3] (t3) {$T_3$};
    \node[prob,target,below right= 5mm of e3,inner sep=1] (nt3) {\tiny $T_{1,2,4}$};
    
    \draw[->] (h) edge[bend right] (e3);
    \draw[->] (e3) edge[bend right] (t3);
    \draw[->] (t3) edge[bend right] (h);
    \draw[->] (e3) edge (nt3);
    
    \node[max,above=of h] (e4) {};
    \node[prob,target,left = of e4] (t4) {$T_4$};
    \node[prob,target,below left= 5mm of e4,inner sep=1] (nt4) {\tiny $T_{1,2,3}$};
    
    \draw[->] (h) edge[bend right] (e4);
    \draw[->] (e4) edge[bend right] (t4);
    \draw[->] (t4) edge[bend right] (h);
    \draw[->] (e4) edge (nt4);
    \end{tikzpicture}
    \caption{The (deterministic) game from~\cite[Lemma 2]{FH10} for $n = 4$ targets.}
    \label{fig:fh10-game}
\end{figure}
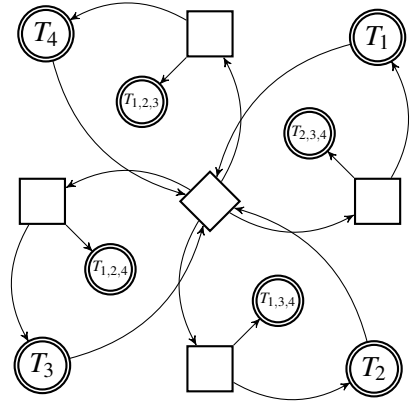

\subsection{Proof of Lemma~\ref{lem:scomp-SS-S-quant}}
\label{app:proof-scomp-SS-S-quant}

\lemscompSSSquant*

\begin{figure}[t]
	\centering
	\begin{tikzpicture}[node distance=20mm and 25mm, initial text=start,on grid,initial where=below]
		\node[circle,draw,initial,align=center,inner sep=1mm] (A) {\scriptsize A: Env.\ \\ \scriptsize chooses $\targets_1$};
		\node[rectangle,draw=black,right =of A,align=center,inner sep=3mm] (B) {\scriptsize B: \Achiever\ \\ \scriptsize chooses $\targets_2$};
		\node[state,prob,below=of B] (penalty) {};
		\node[state,max,target,right=18mm of penalty] (sink) {\scriptsize $\targets$};
		\node[diamond,draw,aspect=1.3,right=28mm of B,align=center,inner sep=-7pt] (C) {\scriptsize  C: \Spoiler\ \\ \scriptsize chooses $n{-}|\targets_2|{-}1$ \\  \scriptsize targets};
		
		\draw[trans,dashed] (A) -- (B);
		\draw[trans,dashed] (B) -- (penalty);
		\draw[trans] (penalty) -- node[above] {\scriptsize$p(\targets_2)$}(sink);
		\draw[trans] (penalty) edge node[sloped,above] {\scriptsize$1- p(\targets_2)$}(C);
	\end{tikzpicture}
	\caption{
		For convenience, we repeat here the right side of Figure~\ref{fig:scomp-SS-S-quant}, i.e., the 
		Game constructed in the proof of Lemma~\ref{lem:scomp-SS-S-quant}.
		The probability $p(\targets_2)$ increases with $|\targets_2|$.
		\Achiever\ must select exactly $\targets_2 = \targets_1$ in Stage B to avoid visiting at least one of the targets in $\targets$ with maximal probability.
	}
	\label{fig:scomp-SS-S-quant-again}
\end{figure}
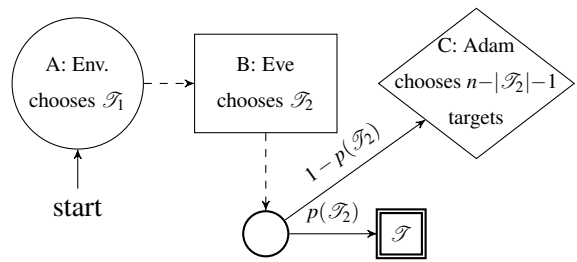

\begin{proof}
    For every $n \geq 1$ we describe a game with a safety DQ of the form $\query = \bigvee_{i=1}^n \probability(\safe \overline{T_i}) \geq x$ where achieving strategies of \Achiever\ need at least $2^n - 1$ memory.
    Note that the objectives $\probability(\safe \overline{T_i}) \geq x$ read ``avoid $T_i$ with probability at least $x$''.
    Also note that the threshold $x$ is the same for all objectives.
    Let $\mathcal{T} \eqdef \{T_1,\ldots,T_n\}$.
    The game is as follows (Figure~\ref{fig:scomp-SS-S-quant-again}):
    In Stage A, the environment chooses one of the $2^n - 1$ combinations $\targets_1 \subsetneq \targets$ uniformly at random and visits all targets in the chosen $\mathcal{T}_1$, thereby spoiling the objectives $\safe \overline{T}$ for all $T \in \targets_1$.
    In Stage B, \Achiever\ then freely selects another such combination $\targets_2 \subsetneq \targets$.
    Depending on $\targets_2$, a coin with bias $p(\targets_2)$ is tossed:
    With a certain \emph{penalty} probability $p(\targets_2)$, all targets $\targets$ are visited immediately, and with the remaining probability the game enters Stage C where \Spoiler\ chooses any probability distribution over combinations of at most $n - |\targets_2| - 1$ targets.
    The penalty $p(\targets_2)$ is strictly monotonically increasing in $|\targets_2|$, that is, selecting a larger set $\targets_2$ involves more risk from \Achiever's perspective.

    Intuitively, the idea is that one can find a threshold $x$ in $\query$, such that \Achiever\ must select \emph{exactly} $\targets_2 = \targets_1$ in Stage B.
    This clearly requires $2^n - 1$ memory as all possible combinations $\targets_1 \subsetneq \targets$ need to be remembered.
    This behavior is enforced as follows.
    There are two cases:
    (a) If $|\targets_2| < |\targets_1 \cup \targets_2|$ then $n - |\targets_2| - 1 \geq n - |\targets_1 \cup \targets_2|$ and so \Spoiler\ can visit all the remaining targets $\targets \setminus (\targets_1 \cup \targets_2)$ in Stage C, which is undesirable from \Achiever's point of view.
    (b) Otherwise $|\targets_2| \geq |\targets_1 \cup \targets_2|$ which implies $\targets_2 \supseteq \targets_1$.
    In this case, if the game reaches Stage C, at least one target will be avoided with positive probability.
    In order to offer an incentive for \Achiever\ to not simply choose an arbitrary $\targets_2 \supseteq \targets_1$, we impose a higher penalty probability on larger sets $\targets_2$.
    Then the \emph{unique} optimal strategy $\maxstrat^*$ (the one that maximizes the highest achievable threshold probability $x$) of \Achiever\ consists in choosing exactly $\targets_2 = \targets_1$.

    More formally, suppose that \Achiever\ fixes the strategy $\maxstrat^*$ described above.
    For any counter-strategy of $\minstrat$ let $X^\minstrat$ be the random variable describing the number of different visited targets.
    We can consider a counter-strategy $\minstrat$ that maximizes the expected value $E[X^\minstrat]$ (such a $\minstrat$ exists because $\game^{\maxstrat^*}$ is just a finite MDP).
    Then since the game and the strategy $\maxstrat^*$ are completely symmetric with respect to all targets, there is a maximal probability $y = E[X^\minstrat]/n$ such that \Spoiler\ can achieve $\bigwedge_{i=1}^n \probability(\reach T_i) \geq y$ in the induced MDP $\game^{\maxstrat^*}$.
    Indeed, if he could achieve $\bigwedge_{i=1}^n \probability(\reach T_i) \geq z$ for some $z > y$, then $E[X^\minstrat] \geq n  z > n  y = E[X^\minstrat]$, which is a contradiction.
    Consequently, $\maxstrat^*$ achieves the DQ $\bigvee_{i=1}^n \probability(\safe \overline{T_i}) \geq x$ with $x \eqdef 1-y$.

    On the other hand, if \Achiever\ plays any strategy $\maxstrat \neq \maxstrat^*$ we can consider each possible combination $\targets_1 \subsetneq \targets$ chosen in Stage A separately:
    If $|\targets_2| < |\targets_1 \cup \targets_2|$ (case (a) from above) then \Spoiler\ can enforce visiting \emph{all} targets with probability 1 as explained earlier.
    If $|\targets_2| \geq |\targets_1 \cup \targets_2|$ and hence $\targets_2 \supseteq \targets_1$ (case (b) from above) then nothing changes compared to playing $\maxstrat^*$ if $\maxstrat$ actually selects $\targets_2 = \targets_1$, i.e., behaves exactly like $\maxstrat^*$ given $\targets_1$ (which is possible).
    However, if $\maxstrat$ selects $\targets_2 \supsetneq \targets_1$ then for the penalties we have $p(\targets_2) > p(\targets_1)$ which means that also in this case, the probability to reach each individual target is strictly larger than if $\targets_2 = \targets_1$ had been selected by $\maxstrat$.
    Overall, \Spoiler\ can enforce $\bigwedge_{i=1}^n \probability(\reach T_i) \geq z$ for some $z > y$ because for at least one combination $\targets_1$ selected in Stage A, $\maxstrat$ differs from $\maxstrat^*$ with positive probability.
\end{proof}

\subsection{Proof of Lemma~\ref{lem:scomp-SD-SR-qual}}
\label{app:proof-scomp-SD-SR-qual}

\lemscompSDSRqual*

\begin{proof}
    We describe the construction for qualitative reachability DQs. Let $n = 2q$ be even.
    The game $\game$ comprises three stages, A, B, and C.
    In the initial stage A, Adam specifies a combination $\targets_A$ of $q$ different targets.
    Those are then visited with probability $\nicefrac 1 2$ and the game ends.
    With the remaining probability of $\nicefrac 1 2$, the game moves on to stage B which is similar to stage A, but controlled by \Achiever. Let $\targets_B$ be the set of $q$ targets that \Achiever\ specifies in this stage.
    Consequently, with total probability $\nicefrac 1 4$, the game enters the final stage C where Adam chooses and visits $q+1$ different targets $\targets_C$.
    Note that this game can be constructed such that all target states are sinks.

    Observe that a target is visited with probability $1$ iff it is chosen in all three stages.
    The only achieving strategy of \Achiever\ consists in selecting exactly $\targets_B = \targets_A$, requiring ${n \choose q}$ memory.
    On the one hand, this strategy achieves the DQ because $\targets_A \cap \targets_B \cap \targets_C \neq \emptyset$, hence there is a target that is selected in all three stages and thus visited with probability $1$.
    Note that, in agreement with Lemma~\ref{lem:strat-upper-bound}, the above argument does not work if Adam is allowed to use randomization.
    On the other hand, if $\targets_B \neq \targets_A$, then Adam can select any $q{+}1$-element subset of $\targets \setminus (\targets_A \cap \targets_B)$ in stage C, eliminating the possibility that any of the targets is selected in all three stages.

    To conclude the proof, observe that the construction can be adapted for qualitative safety DQs:
    Let $Sinks(\game)$ be the set of all sinks in $\game$.
    For each $T \in \targets$, we define a safe set $T' \eqdef T \cup (S \setminus Sinks(\game))$.
    Then for all such $T'$, the safety objective $\probability(\safe T')\geq 1$ is satisfied iff $T'$ is selected in all three stages.
\end{proof}

\subsection{Proof of Lemma~\ref{lem:reduction-cq-dq-quant}}
\label{app:proof-reduction-cq-dq-quant}

\lemreductioncqdqquant*

\begin{proof}
    The proofs parallels the one of Lemma~\ref{lem:reduction-cq-dq}.
    Let $n\geq 2$ as the claim is trivially true for $n=1$.
    We also assume that $x_i > 0$ for all $i=1,\ldots,n$.
    The game $\game'$ is the same as in Lemma~\ref{lem:reduction-cq-dq}.
    Suppose that $\maxstrat$ is a deterministic strategy of \Achiever\ that achieves $\query$ in $\game$.
    Let $\minstrat$ be any counter-strategy of \Spoiler\ in $\game'$ and suppose that it (deterministically) resolves the extra choice by moving to $T_i$, for some $i$.
    Then $\probability^{\maxstrat,\minstrat}(\reach T_i) \geq \frac{1}{2}x_i + \frac{1}{2}$ as $\maxstrat$ achieves $\query$ in the original $\game$.
    Thus $\maxstrat$ achieves the DQ $\query' \eqdef \bigvee_{i=1}^n \probability(\reach T_i) \geq \frac{1}{2}x_i + \frac{1}{2}$ in $\game'$.

    Conversely, if $\maxstrat$ does not achieve $\query$ in $\game$, then there exists some $1 \leq i \leq n$ such that \Spoiler\ has a counter-strategy $\minstrat$ satisfying $\probability^{\maxstrat,\minstrat}(\reach T_i) < x_i$.
    We extend this $\minstrat$ to a strategy $\minstrat'$ in $\game'$ by selecting $T_i$ at the additional state.
    Then in $\game'$, $\probability^{\maxstrat,\minstrat'}(\reach T_i) < \frac{1}{2}x_i + \frac{1}{2}$ and $\probability^{\maxstrat,\minstrat'}(\reach T_j) \leq  \frac{1}{2} < \frac{1}{2}x_j + \frac{1}{2}$ for all other targets $j \neq i$.
    Therefore $\maxstrat$ does not achieve $\query'$ in $\game'$ and the proof is complete.
\end{proof}

\section{Full proofs of Section~\ref{sec:ccomp}}
\subsection{Proof of Lemma~\ref{lem:comp-SS-S-quant}}
\label{app:comp-SS-S-quant}

\lemcompSSSquant*

\begin{proof}
    We show that the complement of the quantitative safety DQ feasibility problem is $\PSPACE$-hard which proves our result as $\PSPACE$ is closed under complement.
    Consider an MDP $\game$ with $\Smax = \emptyset$.
    Let $\game'$ be like the MDP $\game$ but with $\Smax$ and $\Smin$ exchanged.
    For any quantitative safety DQ $\query = \bigvee_{i=1}^n\probability(\safe T_i) \geq x_i$ it holds that
    \begin{align*}
    & \text{$\query$ is not achievable in $\game$} \\
    \iff & \neg \exists \maxstrat \forall \minstrat \colon \bigvee_{i=1}^n\probability^{\maxstrat,\minstrat}(\safe T_i) \geq x_i \\
    \iff & \forall \maxstrat \exists \minstrat \colon \bigwedge_{i=1}^n\probability^{\maxstrat,\minstrat}(\safe T_i) < x_i \\
    \iff & \forall \maxstrat \exists \minstrat \colon \bigwedge_{i=1}^n\probability^{\maxstrat,\minstrat}(\reach \overline{T_i}) > 1{-}x_i \\
    \iff & \exists \minstrat \colon \bigwedge_{i=1}^n\probability^{\minstrat}(\reach \overline{T_i}) > 1{-}x_i \tag{because $\Smax = \emptyset$} \\
    \iff & \text{the CQ $\bigwedge_{i=1}^n\probability(\reach \overline{T_i}) > 1{-}x_i$ with \emph{strict bounds} is achievable in $\game'$}
    \end{align*}
    It only remains to show that the quantitative reachability CQ feasibility problem with \emph{strict bounds} is $\PSPACE$-hard in games with $\Smin = \emptyset$, i.e., MDPs where only \Achiever\ has choices.
    To this end we analyze the reduction from \cite[Lem.\ 2]{RRS17} (restated in Figure~\ref{fig:mdp_rrs14}) which shows that the qualitative reachability CQ feasibility problem in MDPs is $\PSPACE$-hard in greater detail.

    \begin{figure}[t]
        \centering
        \begin{tikzpicture}[node distance = 4mm and 15mm, on grid, initial text=]
        \node[state, max,initial] (sx) {$x_1$};
        \node[state, prob, above right=of sx] (x) {$f_1$};
        \node[state, prob, below right=of sx] (nx) {$t_1$};
        
        \node[state, prob, below right=of x] (sy) {$x_2$};
        \node[state, prob, above right=of sy] (y) {$f_2$};
        \node[state, prob,  below right=of sy] (ny) {$t_2$};
        
        \node[ below right=of y] (dots) {$\cdots$};
        
        \node[state, max, right=of dots] (sz) {$x_n$};
        \node[state, prob,  above right=of sz] (z) {$f_n$};
        \node[state, prob, below right=of sz] (nz) {$t_n$};
        
        \draw[trans] (z) edge[loop right] (z);
        
        \draw[trans] (nz) edge[loop right] (nz);

        \draw[trans] (sx) -- node[above left]{} (x);
        \draw[trans] (sx) -- node[below left]{} (nx);
        \draw[trans] (x) -- (sy);
        \draw[trans] (nx) -- (sy);
        
        \draw[trans] (sy) -- node[above]{$\nicefrac 1 2$} (y);
        \draw[trans] (sy) -- node[below]{$\nicefrac 1 2$} (ny);
        \draw[trans] (y) -- (dots);
        \draw[trans] (ny) -- (dots);

        \draw[trans] (sz) -- node[above left]{} (z);
        \draw[trans] (sz) -- node[below left]{} (nz);
        
        \end{tikzpicture}
        \caption{
            The MDP from \cite[Lem.\ 2]{RRS17} constructed from a quantified Boolean formula (QBF) $\exists x_1 \forall x_2 \exists x_3 \ldots \forall x_{m-1} \exists x_m \colon C_1 \wedge C_2 \wedge \ldots \wedge C_n$ where each clause $C_i$ is a disjunction of three literals from the set $\{x_i,\, \overline{x_i} \mid i \in \{1,\ldots,m\}\}$.
            The target sets are $T_i \eqdef \{t_j \mid x_j \in C_i\} \cup \{f_j \mid \overline{x_j} \in C_i\}$ for all $i \in \{1,\ldots,n\}$ and $j \in \{1,\ldots,m\}$.
            It was proved in \cite{RRS17} that the QBF is true iff the qualitative reachability CQ $\bigwedge_{i=1}^m\probability(\reach T_i) \geq 1$ is achievable.
        }
        \label{fig:mdp_rrs14}
    \end{figure}
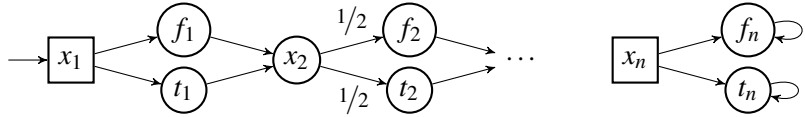

    We claim that the QBF is true iff the strict-bounds query $\query_{>} \eqdef \bigwedge_{i=1}^m\probability(\reach T_i) > 1 - \frac{2^{-n}}{m}$ is achievable in the constructed MDP.

    It was already shown in the proof of \cite[Lem.\ 2]{RRS17} that if the QBF is true, then even the qualitative query $\bigwedge_{i=1}^m\probability(\reach T_i) \geq 1$ is achievable, and thus also our quantitative strict-bounds query $\query_{>}$.

    For the other direction assume that the QBF is false.
    Then for all $\exists$-strategies in the ``formula game'', there is a counter-strategy of the $\forall$-player such that at least one clause is not satisfied.
    Let $\maxstrat$ be an arbitrary (possibly randomized) strategy of \Achiever\ in the constructed MDP.
    %
    %We introduce some auxiliary notation:
    %
    We overload $\reach T$ and interpret it as the \emph{event} that $T$ is reached.
    Formally, $\reach T$ can still be regarded a \emph{set} of paths.
    Thus the event $\bigcap_{i=1}^m \reach T_i$ indicates that \emph{all} $m$ targets are reached, and $\probability^{\maxstrat}(\bigcap_{i=1}^m \reach T_i)$ is the probability of that event.
    We have
    \begin{equation}
    \label{eq:mdp_strict_bounds}
    \probability^{\maxstrat}(\bigcap_{i=1}^m \reach T_i) \leq 1 - 2^{-n}
    \end{equation}
    because no matter the behavior of $\maxstrat$, there is at least a probability of $2^{-n}$ that the environment makes the ``right'' choices that make the formula false.
    In the MDP this means that at least one target is not visited with probability at least $2^{-n}$ or, equivalently, that all targets are visited with probability at most $1-2^{-n}$.
    The following inequalities complete the proof:
    \begin{align*}
    & \probability^{\maxstrat}(\bigcap_{i=1}^m \reach T_i) \\
    = & 1 - \probability^{\maxstrat}(\bigcup_{i=1}^m \safe \overline{T_i})\tag{probability of complement}\\
    \geq & 1 - \sum_{i=1}^m \probability^{\maxstrat}(\safe \overline{T_i}) \tag{union bound}\\
    = & \sum_{i=1}^m (1 - \probability^{\maxstrat}(\safe \overline{T_i})) - (m-1) \tag{arithmetic}\\
    = & \sum_{i=1}^m \probability^{\maxstrat}(\reach T_i) - (m-1) \tag{probability of complement}\\
    \end{align*}
    and thus with \eqref{eq:mdp_strict_bounds} we obtain
    \[
    \sum_{i=1}^m \probability^{\maxstrat}(\reach T_i) \leq \probability^{\maxstrat}(\bigcap_{i=1}^m \reach T_i) + m-1 \leq 1 - 2^{-n} + m-1 = m - 2^{-n}~.
    \]
    From this it follows that
    \[
    \min_{i = 1,\ldots,m} \probability^{\maxstrat}(\reach T_i) \leq \frac{m - 2^{-n}}{m} = 1 - \frac{2^{-n}}{m} < 1~.
    \]
    Thus since $\maxstrat$ was arbitrary, we have shown that $\forall \maxstrat \bigvee_{i=1}^m\probability^{\maxstrat}(\reach T_i) \leq  1 - \frac{2^{-n}}{m}$ which is equivalent to saying that the query $\query_{>}$ defined above is not achievable.
\end{proof}

\subsection{Proof of Lemma~\ref{lem:comp-SD-SR-qual}}
\label{app:proof-comp-SD-SR-qual}

\lemcompSDSRqual*

\begin{proof}
    We reduce from the problem of deciding truth of a quantified Boolean formula (QBF). A QBF is an expression $Q_1 x_1 Q_2 x_2 \ldots Q_m x_m \colon\psi$ with $Q_i \in \{\exists,\forall\}$ for all $i = 1,\ldots,m$ and where $\psi$ is a Boolean formula with variables $x_1,\ldots,x_m$. Deciding truth of a QBF is known to be $\PSPACE$-hard already when $\psi$ is in CNF. Consequently, since $\PSPACE$ is closed under complement, the problem is $\PSPACE$-hard for $\psi$ in DNF as well. 
    
    Given a QBF with $\psi = t_1 \vee \ldots \vee t_n$ in DNF, we construct the following game $\game_{\psi}$ (see Figure~\ref{fig:qbf_example} for an example):
    There is a target $T_j$ for each term $t_j$, $j=1,\ldots,n$, of $\psi$. Moreover, there is a state $s_i$ for each variable $x_i$ such that $s_i \in \Smax$ if $Q_i = \exists$ and $s_i \in \Smin$ if $Q_i = \forall$.
    For all $i = 1,\ldots,m$, $s_i$ has exactly two successors $x_i, \overline{x_i} \in \Sprob$.
    Each $x_i$ (resp.\ $\overline{x_i}$) leads with probability $\nicefrac 1 2$ to a state where exactly the targets $T_j$ such that either (i) the literal $x_i$ (resp.\ $\overline{x_i}$) occurs in $t_j$ or (ii) the variable $x_i$ does not occur at all in $t_j$ are satisfied. 
    With the remaining probability of $\nicefrac 1 2$, the game moves on the next $s_{i+1}$ (except if $i=m$, in which case the game stays in $x_m$, respectively $\overline{x_m}$).
    Clearly, this game can be constructed in polynomial time in the size of the QBF.
    
    As explained in~\cite{FH10}, the deterministic strategies in $\game_{\psi}$ are in one-to-one correspondence with the ``strategies'' assigning the values in the \emph{formula game} $Q_1 x_1 Q_2 x_2 \ldots Q_m x_m \colon\psi$. Note that the increase in quantifier alternations (up to $m-1$ many) in the formula game compared to the standard DQ semantics (one alternation) is because in an SG, players can remember the whole history and thus react to the opponent's past choices.
    
    In the game $\game_{\psi}$, for each (deterministic) strategy pair $\maxstrat,\minstrat$, it holds that $\bigvee_{j=1}^n\probability^{\maxstrat,\minstrat}(\reach T_j) \geq 1$ iff there is at least one term $t_j$ such that $\maxstrat$ and $\minstrat$ never select a literal whose negation occurs in $t_j$ iff the variable assignment defined by $\maxstrat,\minstrat$ satisfies $\psi$. Therefore, an achieving strategy of \Achiever\ witnesses truth of the QBF. Conversely, if the QBF is true, then we can map a winning strategy of the $\exists$-player in the formula game to a deterministic strategy in $\game_{\psi}$, such that for all counter-strategies of \Spoiler, there exists a term $t_j$ such that no player ever selects a literal whose negation is contained in $t_j$. This leads to reaching $T_j$ with probability $1$.
    
    We can adapt the construction to qualitative safety DQs similar as in Lemma~\ref{lem:scomp-SD-SR-qual}.
\end{proof}

\subsection{Proof of Lemma~\ref{lem:comp-SD-S-qual}}
\label{app:proof-comp-SD-S-qual}

\lemcompSDSqual*

\begin{proof}
    Recall that we have to decide the following problem, where $\maxstrat$ and $\minstrat$ range over \emph{deterministic} strategies only and $\probability$ implicitly refers to an SG $\game$:
    $\exists\maxstrat \forall \minstrat \bigvee_{i=1}^n \probability^{\maxstrat,\minstrat} (\safe T_i) \geq 1$.
    We show how to decide the complement of this problem with a polynomially space-bounded alternating Turing machine (ATM).
    This implies the claimed result as alternating $\PSPACE$ equals $\EXPTIME$~\cite{DBLP:journals/jacm/ChandraKS81}.
    Note that the complement is the decision problem
    $\forall \maxstrat \exists \minstrat \bigwedge_{i=1}^n \probability^{\maxstrat,\minstrat}(\reach \overline{T_i}) > 0$ where each $\overline{T_i}$ occurs only in sinks.
    We need the following observation:
    \begin{claim*} If all $T_i$, $i = 1,\ldots,n$, contain only sinks, then
        \[
        \forall \maxstrat \exists \minstrat \bigwedge_{i=1}^n \probability^{\maxstrat,\minstrat}(\reach T_i) > 0 \iff \forall \maxstrat \exists \minstrat \bigwedge_{i=1}^n \probability^{\maxstrat,\minstrat}(\reach^{\leq |S|} T_i) > 0
        \]
        where $\probability^{\maxstrat,\minstrat}(\reach^{\leq |S|} T_i)$ denotes the probability to reach $T_i$ in the $|S|$-step game $\game^{\leq |S|}$ under strategies $\maxstrat,\minstrat$.
    \end{claim*}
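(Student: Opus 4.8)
The ``$\Leftarrow$'' direction is immediate: any path that reaches $T_i$ within $|S|$ steps also reaches $T_i$, so $\probability^{\maxstrat,\minstrat}(\reach^{\le|S|}T_i)\le\probability^{\maxstrat,\minstrat}(\reach T_i)$ for all strategies, and hence a $\minstrat$ witnessing the right-hand statement for a given $\maxstrat$ witnesses the left-hand one verbatim. For ``$\Rightarrow$'' I would argue the contrapositive; informally the task becomes: if \Achiever\ has a strategy that, against every \Spoiler-strategy, keeps some $T_i$ unreached for $|S|$ steps, then she has a (possibly different) strategy that, against every \Spoiler-strategy, keeps some $T_i$ unreached \emph{forever}. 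To make this precise, for $k\in\Naturals$ let
\[
  W_k \;\eqdef\; \bigl\{\, s\in S \;\bigm|\; \exists\maxstrat\,\forall\minstrat\,\exists i\in\{1,\dots,n\}\colon \probability^{\maxstrat,\minstrat}_s(\reach^{\le k}T_i)=0 \,\bigr\},
\]
and let $W_\infty$ be the same set with $\reach^{\le k}$ replaced by $\reach$. By construction, the negation of the right-hand statement is exactly ``$\sinit\in W_{|S|}$'' and the negation of the left-hand statement is exactly ``$\sinit\in W_\infty$'', so it suffices to prove $W_{|S|}=W_\infty$. Here I would use that $\query$ is a sink query, so $\unf(\game,\query)=\game$: there is no hidden exponential blow-up, the whole analysis lives over the $|S|$-element set $S$, and this is what keeps the bound at $|S|$.

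Intuitively $W_k$ is the winning region of a \emph{blocking game} in which \Achiever\ wants the probabilistic computation tree under the two strategies to avoid at least one $T_i$ forever. Two facts about these regions would drive the proof. (i) $W_0\supseteq W_1\supseteq\dots$ and $W_\infty\subseteq\bigcap_k W_k$, because truncating a horizon-$(k{+}1)$ (or unbounded) blocking strategy yields a horizon-$k$ one: the depth-$k$ tree is a subtree of the longer tree and therefore omits every $T_i$ that the longer tree omits. In particular $(W_k)_k$ is a non-increasing chain of subsets of the \emph{finite} set $S$. (ii) The blocking game stabilises after $|S|$ iterations, i.e.\ $W_k=W_{k+1}$ for all $k\ge|S|$. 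This I would prove by the standard attractor/fixed-point argument: at \Achiever- and \Spoiler-states the horizon-$(k{+}1)$ region is obtained from the horizon-$k$ region by an $\exists$- resp.\ $\forall$-step, while at probabilistic states \Achiever\ must keep one \emph{common} target unreached on \emph{all} successor branches, which reduces to the single-target avoidance sets $A^{(i)}_k=\{s : \exists\maxstrat\,\forall\minstrat\colon\probability^{\maxstrat,\minstrat}_s(\reach^{\le k}T_i)=0\}$---each of which is a monotone iteration over $S$ that stabilises within $|S|$ steps. Combining (i) and (ii) yields $W_{|S|}=\bigcap_k W_k$.

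It then remains to show $\bigcap_k W_k\subseteq W_\infty$, which is a compactness step. Fix $s\in\bigcap_k W_k$ and view \Achiever's deterministic strategies as points of the compact product space $\prod_h \choices(\mathrm{last}(h))$ (product over finite histories ending in an \Achiever-state; each factor finite). For each $k$ the set $\Sigma_k$ of \Achiever-strategies winning the horizon-$k$ blocking game from $s$ is non-empty (as $s\in W_k$), closed (membership depends only on the finitely many strategy-coordinates usable within $k$ steps), and $\Sigma_{k+1}\subseteq\Sigma_k$; hence $\bigcap_k\Sigma_k\ne\emptyset$. Pick $\maxstrat^\ast$ in the intersection. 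For every $\minstrat$ and every $k$ the depth-$k$ tree under $(\maxstrat^\ast,\minstrat)$ misses some $T_{i_k}$; as there are only $n$ targets, some $i$ has $i_k=i$ for infinitely many $k$, hence for all $k$ (subtree monotonicity again), so $T_i$ is absent from the whole tree, i.e.\ $\probability^{\maxstrat^\ast,\minstrat}_s(\reach T_i)=0$. Thus $\maxstrat^\ast$ witnesses $s\in W_\infty$.

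The step I expect to be hardest is (ii): since the omitted target in the definition of $W_k$ is chosen \emph{after} \Spoiler's strategy, the blocking game is not a plain safety game, so the fixed-point characterisation---and with it the clean $|S|$ bound rather than merely a larger polynomial---needs the interplay of the ``conjunctive'' probabilistic states with per-state bookkeeping of which individual targets remain blockable; the sink hypothesis is used precisely here, to keep the underlying arena of size $|S|$. The compactness part is routine once the product topology and the pigeonhole over the $n$ targets are set up correctly.
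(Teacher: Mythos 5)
Your architecture is genuinely different from the paper's: the paper keeps the \emph{same} witnessing strategy $\maxstrat$, fixes a single target index $i$, hands the probabilistic states to \Spoiler, and invokes the standard $|S|$-step bound for single-target non-stochastic reachability, all in a few lines. Your compactness/pigeonhole passage from ``blocked for every finite horizon'' to ``blocked forever'' (step (iii)) is a more careful treatment of the limit step than the paper gives, and the trivial direction is fine. The problem sits exactly where you predicted, in step (ii), and your proposed way of closing it does not work. The set $W_{k+1}(s)$ is \emph{not} determined by the successors' membership in $W_k$, nor by their membership in the single-target avoidance sets $A^{(i)}_k$: under deterministic strategies the blocking condition $\forall\minstrat\,\exists i$ is strictly weaker than $\exists i\,\forall\minstrat$, so $W_k \supsetneq \bigcup_i A^{(i)}_k$ in general. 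Concretely, let $\sinit \in \Smin$ with $\choices(\sinit)=\{a,b\}$, $a,b\in\Sprob$, $P(a,t_1)=P(a,c)=P(b,t_2)=P(b,c)=\nicefrac 1 2$, where $t_i\in T_i$ are sinks and $c\in\Smax$ with $\choices(c)=\{t_1,t_2\}$. \Achiever\ can always block \emph{some} target (at $c$ she mirrors whichever of $a,b$ the history passed through), but she can block neither target uniformly, since $\minstrat=a$ hits $T_1$ with probability $\nicefrac 1 2$ and $\minstrat=b$ hits $T_2$ regardless of her play; hence $\sinit\in W_\infty\setminus(A^{(1)}_\infty\cup A^{(2)}_\infty)$. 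So there is no well-defined one-step operator on subsets of $S$ (or on the tuples $(A^{(i)}_k)_i$) whose iteration computes $W_k$ --- this is the same phenomenon as the paper's Observation~1, that successors' Pareto sets do not determine a state's Pareto set.

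A fixed-point formulation that is actually compositional has to track, per state, the downward-closed family of \emph{sets} of targets that remain simultaneously blockable; that lattice has height of order $|S|\cdot 2^n$, so monotonicity alone gives stabilisation only after exponentially many iterations, not after $|S|$. Without (ii) your chain yields $W_\infty=\bigcap_k W_k$ but not $W_\infty = W_{|S|}$, and the horizon bound $|S|$ is precisely what the enclosing lemma's alternating Turing machine needs. Note that the paper's own argument obtains the $|S|$ bound by reasoning about a \emph{fixed} index $i$ (``\Spoiler\ cannot reach one of the targets $T_i$ for some $i$''), i.e., by the very quantifier swap that your reduction to the $A^{(i)}_k$ would also require; so to complete your proof you must either justify why, for this particular claim, target-by-target reasoning suffices, or find a different route to the $|S|$ bound. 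As it stands, step (ii) is a genuine gap, not a routine verification.
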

    \begin{claimproof}
        The direction from right to left is trivial.
        For the other direction observe that if there exists $\maxstrat$ of \Achiever\, such that \Spoiler\ cannot reach one of the targets $T_i$ for some $1 \leq i \leq n$ within $|S|$ steps, then \Spoiler\ cannot reach $T_i$ in any number of steps.
        Indeed, as targets only need to be reached with positive probability, we can assume that the probabilistic states behave in \Spoiler's favor.
        In other words, if \Spoiler\ cannot reach $T_i$ for some $i$ in $|S|$ steps against a given strategy of \Achiever, then he can also not reach it even if he controlled all the random states and so he cannot reach it in any number of steps.
    \end{claimproof}
    
    For the rest of the proof we assume w.l.o.g.\ that all states $s \in \Sprob$ have exactly two successors, called left and right in the following (e.g.~\cite[Appendix A]{gandalf}).

    The ATM simulates the game for exactly $|S|$ steps (a similar technique was used in \cite{FH10}).
    States controlled by \Achiever\ correspond to universal states of the ATM and those controlled by \Spoiler\ to existential ones (note that since we consider the complementary problem, strategies of \Achiever\ are universally quantified and those of \Spoiler\ existentially).
    This can be implemented in the ATM by keeping a record of the current state of the game.
    The ATM also keeps track of the history $\path \in S^*$ of the current path and stores a bit vector $(t_1,\ldots,t_n) \in \{0,1\}^n$ indicating which targets have already been reached.
    In order to handle probabilistic states, the ATM maintains an additional stack.
    When the game reaches a probabilistic state which is not a sink, the ATM pushes the current history on the stack and explores the left branch first.
    Once the ATM reaches a configuration where the current history exceeds $|S|$ and the stack is not empty, it pops the saved history $s_0\ldots s$, $s \in \Sprob$ from the stack, sets its current history to $s_0\ldots s$ and the current game state to $s$ and explores the right successor of $s$.
    If the ATM reaches a configuration where the current history exceeds length $|S|$ and the stack is empty, then it accepts iff $(t_1,\ldots,t_n) = (1,\ldots,1)$.

    Overall, the ATM accepts iff $\forall \maxstrat \exists \minstrat \bigwedge_{i=1}^n \probability^{\maxstrat,\minstrat}(\reach \overline{T_i}) > 0$ holds:
    \begin{itemize}
        \item If the ATM accepts, then \Spoiler\ can reach each $\overline{T_i}$, $i=1,\ldots,n$, already within $|S|$ steps with positive probability against all strategies of \Achiever.
        Thus $\probability^{\maxstrat,\minstrat}(\reach \overline{T_i}) > 0$ holds.
        \item Conversely, if $\forall \maxstrat \exists \minstrat \bigwedge_{i=1}^n\probability^{\maxstrat,\minstrat}(\reach \overline{T_i}) > 0$ holds, then we also have $\forall \maxstrat \exists \minstrat \bigwedge_{i=1}^n \probability^{\maxstrat,\minstrat}(\reach^{\leq |S|} \overline{T_i}) > 0$ by the auxiliary claim.
        Thus the ATM accepts since it simulates the game appropriately for $|S|$ steps.
    \end{itemize}
\end{proof}

\section{Full proofs of Section~\ref{sec:alg_vi}}
\subsection{Proof of Lemma~\ref{lem:uvicorrect}}
\label{app:proof-uvicorrect}

    First, we only have to consider \emph{deterministic} strategies of \Spoiler\ because from his point of view, he is trying to achieve a (mixed) \emph{DQ} and deterministic strategies are sufficient for this by Lemma~\ref{lem:strat-upper-bound}.
    A deterministic \emph{$k$-step strategy} of \Spoiler\ is a deterministic strategy in the $k$-step game $\game^{\leq k}$.
    Formally, such a strategy can be identified with a mapping
    \[
        \minstrat^k \colon S^{\leq k-1} \Smin \to S
    \]
    such that $\minstrat^k(\path s) \in \choices(s)$ for all $\path s \in S^{\leq k-1} \Smin$.
    Here, $S^{\leq k-1}$ denotes the set of paths of length at most $k-1$ (by definition, $S^{-1} = \emptyset$ and $S^0 = \{\epsilon\}$, where $\epsilon$ is the \enquote{empty path}).
    In particular, $\minstrat^k$ needs to take only histories of length at most $k$ into account and may not randomize.
    In the following, we write $\mathcal{X}^0 = \mathcal{X}^0(\query)$.
    We will show the following invariant:
    \begin{quote}
        For all $k \geq 0$ and $s \in S$, the set $\uviop^k(\mathcal{X}^0)_s$ contains a polyhedron $X^k_s(\minstrat) \in \pcurves$ for each possible \emph{deterministic} $k$-step strategy $\minstrat$ of \Spoiler.
        This polyhedron satisfies $X^k_s(\minstrat) = \val(\game_s^{\leq k, \minstrat}, \query)$ where $\game_s$ is like $\game$ with initial state $s$.
        In words, $X^k_s(\minstrat)$ is precisely the Pareto set achievable from state $s$ by \Achiever\ in the induced $k$-step MDP $\game^{\leq k, \minstrat}$.
    \end{quote}
    The above invariant establishes the claim as the universal quantifier over \Spoiler's strategies translates to intersection over the polyhedra $X^k_s(\minstrat)$.
    The invariant is proved by induction on $k$.
    For $k = 0$, it holds trivially.
    Now let $k > 0$.
    First observe that a deterministic $k$-step strategy $\minstrat^k$ of \Spoiler\ can be decomposed as follows:
    For all $s \path t \in S S^{\leq k{-}1} \Smin$,
    \[
        \minstrat(s \path t) = \minstrat_s^{k-1}(\path t)
    \]
    for appropriate $(k{-}1)$-step strategies $\minstrat_s^{k-1}$ for each $s \in S$.
    We now prove existence of a polyhedron $X^k_s(\minstrat^k) \in \uviop^k(\mathcal{X}^0)_s$ for each $s \in S$ as required by the invariant.
    Therefore let $\minstrat^k$ be an arbitrary deterministic $k$-step strategy and let $s \in S$ be an arbitrary state.
    If $s$ is a sink, then the invariant is trivially satisfied due to the choice of the initial element $\mathcal{X}^0$.
    Thus assume that $s$ is not a sink.
    Then since $\query$ is a sink query, $s$ is not contained in any target/unsafe set and so we can ignore this case in the following.
    Let $\minstrat^{k-1}_{s'}$ be the decomposition of $\minstrat^k$ into $(k{-}1)$-step strategies as above, one for each $s' \in S$.
    %
%    By the induction hypothesis, $\uviop^{k-1}(\mathcal{X}^0)_{s'}$ contains the polyhedron $X^{k-1}_{s'}(\minstrat^{k-1})$ for each $s' \in S$.
    %
    We make a case distinction.
    \begin{itemize}
        \item $s \in \Smin$.
        By the induction hypothesis (IH), $X^{k-1}_t(\minstrat_t^{k-1}) \in \uviop^{k-1}(\mathcal{X}^0)_t$ for all $t\in S$.
        Let $\minstrat^k(s) = r \in \choices(s)$, i.e., $r$ is the successor of $s$ if \Spoiler\ follows $\minstrat^k$ and the game starts in $s$.
        Then $X^{k}_s(\minstrat^{k}) = X^{k-1}_{r}(\minstrat_r^{k-1})$ and thus $X^{k}_s(\minstrat^{k}) \in \uviop^{k-1}(\mathcal{X}^0)_r$ by the IH.
        By definition, $\uviop^{k}(\mathcal{X}^0)_s = \bigcup_{t \in \choices(s)} \uviop^{k-1}(\mathcal{X}^0)_t$ and thus $X^{k}_s(\minstrat^{k}) \in \uviop^{k}(\mathcal{X}^0)_s$ since $r \in \choices(s)$.
        \item $s \in \Smax$.
        %
        %Let $\maxstrat^k$ be an arbitrary (not necessarily deterministic) $k$-step strategy of \Achiever.
        %
        %We have to show that $(\probability_s^{\minstrat^k, \maxstrat^k}(\genobj_i))_{1\leq i \leq n} \in X_s^k(\minstrat^k) \in \uviop^{k}(\mathcal{X}^0)_s$.
        %
        We have $X^{k}_s(\minstrat^{k}) = \conv(\bigcup_{t \in \choices(s)} X^{k-1}(\minstrat^{k-1}_{t}))$, where the convex hull reflects the fact that \Achiever\ may randomize at $s$ and therefore she cannot only ensure the points in $X^{k-1}(\minstrat^{k-1}_{t})$ but also all their convex combinations.
        By the IH, $X^{k-1}_t(\minstrat_t^{k-1}) \in \uviop^{k-1}(\mathcal{X}^0)_t$ for all $t\in S$.
        By definition, $\uviop^{k}(\mathcal{X}^0)_s = \{\conv(\bigcup_{t \in \choices(s)} X_t \mid X \inpw\uviop^{k-1}(\mathcal{X}^0) \}$,
        and thus $X^{k}_s(\minstrat^{k}) \in \uviop^k(\mathcal{X}^0)_s$.
        \item $s \in \Sprob$.
        In this case, $X^{k}_s(\minstrat^{k}) = \sum_{t \in S}P(s,t) X^{k-1}_t(\minstrat_t^{k-1})$.
        Similar to the previous case, we have $X^{k-1}_t(\minstrat_t^{k-1}) \in \uviop^{k-1}(\mathcal{X}^0)_t$ for all $t\in S$ by the IH
        and by definition, $\uviop^{k}(\mathcal{X}^0)_s = \{\sum_{t \in S} X_t \mid X \inpw\uviop^{k-1}(\mathcal{X}^0) \}$.
        Again, it follows that $X^{k}_s(\minstrat^{k}) \in \uviop^k(\mathcal{X}^0)_s$.
    \end{itemize}

\subsection{Proof of Theorem~\ref{thm:alg-correct}}

\newcommand{\X}{\mathcal{X}}

\thmalgcorrect*

\begin{proof}
    Recall the operation $\mu \colon \pset{\pcurves}^S \to \pset{\pcurves}^S$. $\mu(\X)$ removes all non-inclusion-minimal elements from each $\X_s$, i.e., $\mu(\X)_s = \{X \in \X_s \mid X \text{ is } \subseteq\text{-minimal in } \X_s\}$.
    We first show the following auxiliary claim:
    
    \begin{claim*}
        For all $\X \in \pset{\pcurves}^S$ such that $\X(s)$ is finite it holds that
        \begin{equation*}
            \mu(\uviop(\X)) = \mu(\uviop(\mu(\X)))~.
        \end{equation*}
    \end{claim*}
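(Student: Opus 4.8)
The plan is to derive the identity from two structural properties of $\uviop$ relative to $\mu$, together with an elementary lemma about inclusion-minimal elements. Throughout, families of sets are ordered by inclusion, and for a finite family $A$ I write $\min(A)$ for its set of $\subseteq$-minimal members; note that $\mu(\mathcal{X})_s = \min(\mathcal{X}_s)$ for every $s$.

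\emph{Minimality lemma.} If $A \subseteq B$ are families of subsets of $[0,1]^n$ such that every $b \in B$ dominates some $a \in A$ (i.e.\ $\exists a \in A: a \subseteq b$), then $\min(A) = \min(B)$. I would prove this directly: if $b \in \min(B)$, pick $a \in A \subseteq B$ with $a \subseteq b$; minimality of $b$ in $B$ forces $a = b$, so $b \in A$, and minimality of $b$ in $B$ again shows $b \in \min(A)$. Conversely, if $a \in \min(A)$ and some $b \in B$ had $b \subsetneq a$, then $b$ dominates some $a' \in A$ with $a' \subseteq b \subsetneq a$, contradicting minimality of $a$; hence $a \in \min(B)$ (it lies in $B$ since $A \subseteq B$).

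Next I would establish two facts about $\uviop$. \emph{(i) Monotonicity:} since $\mu(\mathcal{X})_t \subseteq \mathcal{X}_t$ for all $t$, a direct inspection of the three rows of Figure~\ref{fig:vi_ops} gives $\uviop(\mu(\mathcal{X}))_s \subseteq \uviop(\mathcal{X})_s$ for every $s$. \emph{(ii) Domination:} for every $s$ and every $Y \in \uviop(\mathcal{X})_s$ there is $Y' \in \uviop(\mu(\mathcal{X}))_s$ with $Y' \subseteq Y$. I would prove (ii) by the same case split. For $s \in \Smin$, $Y$ already lies in some $\mathcal{X}_t$, and finiteness of $\mathcal{X}_t$ yields a minimal $Y' \in \mu(\mathcal{X})_t$ with $Y' \subseteq Y$, and $Y' \in \uviop(\mu(\mathcal{X}))_s$. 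For $s \in \Smax$, write $Y = \conv(\bigcup_{t \in \choices(s)} X_t)$ for a witness $X \inpw \mathcal{X}$; for each $t$ choose a minimal $X'_t \in \mu(\mathcal{X})_t$ with $X'_t \subseteq X_t$, and extend $X'$ to the remaining coordinates by arbitrary minimal values (possible since, in the only non-trivial case where $\uviop(\mathcal{X})_s \neq \emptyset$, every $\mathcal{X}_{s'}$ is non-empty, hence so is every $\mu(\mathcal{X})_{s'}$). Then $X' \inpw \mu(\mathcal{X})$, and monotonicity of $\conv(\cdot)$ under $\subseteq$ gives $Y' := \conv(\bigcup_{t \in \choices(s)} X'_t) \subseteq Y$ with $Y' \in \uviop(\mu(\mathcal{X}))_s$. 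The case $s \in \Sprob$ is identical, using monotonicity of the weighted Minkowski sum $\sum_{t} P(s,t)(\cdot)$ in place of $\conv$.

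Finally I would combine: fix $s$, let $B = \uviop(\mathcal{X})_s$ and $A = \uviop(\mu(\mathcal{X}))_s$; both are finite, being finite unions or images of products of finite sets. By (i), $A \subseteq B$, and by (ii) every element of $B$ dominates one of $A$, so the minimality lemma gives $\min(A) = \min(B)$, i.e.\ $\mu(\uviop(\mu(\mathcal{X})))_s = \mu(\uviop(\mathcal{X}))_s$; since $s$ is arbitrary, $\mu(\uviop(\mu(\mathcal{X}))) = \mu(\uviop(\mathcal{X}))$. The main obstacle is the domination property (ii) in the $\Smax$ and $\Sprob$ cases: one must ensure the coordinatewise choice of minimal polyhedra assembles into a legitimate witness $X' \inpw \mu(\mathcal{X})$ (handling the degenerate empty case, where both sides are empty anyway) and one needs the routine but essential monotonicity of $\conv(\cdot)$ and of weighted Minkowski sums with respect to set inclusion. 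Everything else is bookkeeping.
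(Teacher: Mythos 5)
Your proof is correct and follows essentially the same route as the paper's: both arguments combine the monotonicity $\uviop(\mu(\mathcal{X}))_s \subseteq \uviop(\mathcal{X})_s$ with the coordinate-wise replacement of each witness $X_t$ by a minimal $X_t' \subseteq X_t$ (using finiteness), and then finish with an elementary observation about minimal elements of nested finite families. The only cosmetic difference is that the paper applies the replacement argument to a \emph{minimal} element of $\uviop(\mathcal{X})_s$ to conclude $\mu(\uviop(\mathcal{X}))_s \subseteq \uviop(\mu(\mathcal{X}))_s$ directly, whereas you apply it to an arbitrary element to obtain your domination property; for finite families these intermediate facts are interchangeable, and your explicit handling of the coordinates outside $\choices(s)$ is a welcome touch of extra care.
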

    \begin{claimproof}
        We first show that
        \[
            \mu(\uviop(\X))_s \subseteq \uviop(\mu(\X))_s
        \]
        for all $s \in S$.
        Case distinction:
        \begin{itemize}
            \item $s \in \Smin$.
            By definition, $\uviop(\X)_s = \bigcup_{t \in \choices(s)} \X_t$.
            Let $X \in \uviop(\X)_s$ be a minimum, i.e., $X \in \mu(\uviop(\X))_s$.
            But then $X$ must also be a minimum in at least one of the $\X_t$, $t \in \choices(s)$.
            Thus $X \in \bigcup_{t \in \choices(s)} \mu(\X)_t = \uviop(\mu(\X))_s$.

            \item $s \in \Smax$.
            Recall that by definition, $\uviop(\X)_s = \{\conv(\bigcup_{t \in \choices(s)} X_t) \mid X \inpw \X \}$.
            Let $X \in \uviop(\X)_s$ be a minimum, i.e., $X \in \mu(\uviop(\X))_s$.
            Then $X = \conv(\bigcup_{t \in \choices(s)} X_t)$ for some $X_t \in \X_t$.
            We claim that for each $t$, we can assume that $X_t$ is a minimum in $\X_t$, i.e., $X_t \in \mu(\X)_t$.
            Indeed, suppose that for some $t$, $X_t$ is not a minimum.
            We can then exchange this $X_t$ for a minimum $X_t' \in \mu(\X)_t$ to construct an $X' \subseteq X$.
            This however implies that $X' = X$ because $X$ is a minimum.
            Thus again, $X \in \{\conv(\bigcup_{t \in \choices(s)} X_t) \mid X \inpw \mu(\X) \} = \uviop(\mu(\X))_s$.

            \item $s \in \Sprob$.
            Similar to the previous case.
        \end{itemize}
        To conclude the proof, notice that for all finite sets $\X, \mathcal{Y} \in \pset{\pcurves}$ with $\X \subseteq \mathcal{Y}$ and $\mu(\mathcal{Y}) \subseteq \mathcal{X}$ it already holds that $\mu(\X) = \mu(\mathcal{Y})$.
        It is clear that $\uviop(\mu(\X))_s \subseteq \uviop(\X)_s$ since $\mu(\X)_s \subseteq \X_s$.
        Putting all together yields $\mu(\uviop(\X))_s = \mu(\uviop(\mu(\X)))_s$ for all $s$ as claimed.
    
    \end{claimproof}
    We now complete the proof of Theorem~\ref{thm:alg-correct}.
    Let $\tilde{\uviop} \eqdef \mu \circ \uviop$.
    Technically, we show that
    \[
        \bigcap (\tilde{\uviop}^k(\mathcal{X}^0(\query))_s) = \bigcap (\uviop^k(\mathcal{X}^0(\query))_s)
    \]
    for all $s \in S$ and $k \geq 0$, i.e., iterating $\mu \circ \uviop$ instead of just $\uviop$ as in Lemma~\ref{lem:uvicorrect} is equivalent.
    To this end it suffices to show that
    \begin{equation}
        \label{eq:proofobl}
        \mu(\uviop^k(\mathcal{X}^0)) = \tilde{\uviop}^k(\mathcal{X}^0)
    \end{equation}
    because the intersection in completely determined by the $\subseteq$-minimal sets.
    We show \eqref{eq:proofobl} by induction:
    \begin{itemize}
        \item For $k = 0$, $\mu(\X^0) = \X^0$ clearly holds because $\X^0_s$ is a singleton for each $s \in S$.
        \item Let $k > 1$. Then
        \begin{align*}
            \mu(\uviop^k(\X^0)) &= \mu(\uviop(\uviop^{k-1}(\X^0)) \\
            & = \mu(\uviop(\mu(\uviop^{k-1}(\X^0))) \tag{Auxiliary claim above} \\
            & = \mu(\uviop(\tilde{\uviop}^{k-1}(\X^0)) \tag{induction hypothesis} \\
            & = \tilde{\uviop}^k(\X^0) \tag{definition}
        \end{align*}
        which completes the proof.
    \end{itemize}
\end{proof}

\subsection{Details on the Experimental Setup}
\label{app:exp-details}

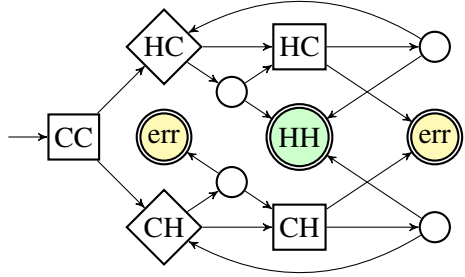
\begin{figure}
    \centering
    \begin{tikzpicture}[on grid, node distance = 12mm and 18mm,initial text=]
    \node[max,initial] (cc0) {CC};
    \node[prob,accepting,col1,right=12mm of cc0] (err0) {err};
    \node[min,above=of err0] (hc0) {HC};
    \node[min,below=of err0] (ch0) {CH};
    \node[max,right=of hc0] (hc1) {HC};
    \node[max,right=of ch0] (ch1) {CH};
    \node[prob,col2,accepting,right=of err0] (hh) {HH};
    \node[prob,right=of hc1,minimum size=4mm] (phc1) {};
    \node[prob,right=of ch1,minimum size=4mm] (pch1) {};
    \node[prob,accepting,col1,right=of hh] (err1) {err};
    \node[prob,right=9mm of hc0,yshift=-6mm,minimum size=4mm] (phc0) {};
    \node[prob,right=9mm of ch0,yshift=6mm,minimum size=4mm] (pch0) {};
    
    \draw[trans] (cc0) -- (hc0);
    \draw[trans] (cc0) -- (ch0);
    \draw[trans] (hc0) -- (phc0);
    \draw[trans] (hc0) -- (hc1);
    \draw[trans] (ch0) -- (pch0);
    \draw[trans] (ch0) -- (ch1);
    \draw[trans] (phc0) -- (hc1);
    \draw[trans] (phc0) -- (hh);
    \draw[trans] (pch0) -- (ch1);
    \draw[trans] (pch0) -- (err0);
    \draw[trans] (hc1) -- (phc1);
    \draw[trans] (hc1) -- (err1);
    \draw[trans] (ch1) -- (err1);
    \draw[trans] (ch1) -- (pch1);
    \draw[trans] (phc1) edge[bend right] (hc0);
    \draw[trans] (phc1) -- (hh);
    \draw[trans] (pch1) -- (hh);
    \draw[trans] (pch1) edge[bend left] (ch0);
    
    \end{tikzpicture}
    \caption{Adapted floor heating model from~\cite{BF16}.}
    \label{fig:floor-heating}
\end{figure}

The floor heating example from Figure~\ref{fig:floor-heating} is as in \cite{BF16} with the following modifications:
\begin{itemize}
    \item The controller can turn on at most one of the heatings, otherwise it will run out of energy and enter state $err$.
    \item The objective is the CQ (template) $\probability(\reach \{HH\}) \geq x \wedge \probability(\safe S \setminus \{err\}) \geq y$.
\end{itemize}
The random games together with a 2-dimensional mixed query each were constructed according to the following scheme:
\begin{itemize}
    \item Generate $m$ states.

    \item Assign each state to $\Smin, \Smax$ or $\Sprob$ with equal probability.

    \item Assign two random successors $s_1, s_2$ to each state $s$ such that $s_1,s_2,s$ are pairwise different.

    \item Select $l$ states at random, turn them into sinks and create a target set with those states.
    This will be the reachability objective.

    \item Similarly, create a safety objective in the same way by selecting $l$ random unsafe states.
\end{itemize}
The numbers in Table~\ref{tab:exp} are for $m = 10$ and $l = 1$.

}{}

\end{document}